\numberwithin{equation}{section}
\newtheorem{theorem}{Theorem}[section]
\newtheorem{lemma}[theorem]{Lemma}
\theoremstyle{definition}
\newtheorem{remark}[theorem]{Remark}
\newcommand{\rset}{\mathbb{R}}      %
\newcommand{\tset}{\mathbb{T}}      %
\newcommand{\tpsi}{\psi} %
\newcommand{\hpsi}{\phi}
\def\REF#1{{Ref.~\refcite{{#1}}}}
\def\REFS#1{{Refs.~\refcite{{#1}}}}
\def\REF#1{{~\cite{{#1}}}}
\def\REFS#1{{~\cite{{#1}}}}
\def\LPROD#1#2{\langle {#1},{#2}\rangle}
\def\EPROD#1#2{{#1}\cdot {#2}}
\def\CHECKP{{\check P}}                       
\def\CHECKX{{\check X}}                       
\def\CHECKQ{{\check Q}}                      
\def\HATP{{\hat P}}                          
\def\HATX{{\hat X}}
\def\MD{{\mathrm{MD}}}
\def\BO{{\mathrm{BO}}}
\def\SCH{{\mathrm{S}}}
\def\HOPER{\mathcal{H}}
\def\VOPER{\mathcal{V}}
\def\SOPER{\mathcal{S}}
\def\ROPER{\mathcal{R}}
\def\BOPER{\mathcal{B}}
\def\AOPER{\mathcal{A}}
\def\QOPER{\mathcal{Q}}
\def\BTOPER{{\tilde{\mathcal{B}}}}
\def\PSHARP#1{\Pi{#1}}
\def\PSHARPT#1#2{\Pi({#1}){#2}}
\def\PHASE{\Theta(\CHECKX,\HATX,\CHECKP)}
\def\PHASED{\Theta'(\CHECKX,\HATX,\CHECKP)}
\def\DTHETA{\delta\LFT{\theta}(\CHECKP)}
\def\CHECKR{\check R}
\def\CHECKS{\check S}
\def\SGNF{\mathrm{Sgn}}
\def\SFUN{F}
\def\WIDEHATPHI{\Phi^{(h)}}
\def\ID{\mathrm{Id}}
\def\PSIA{\phi} 
\def\EXPWKB{\,\EXP{\Iunit M^{1/2}\theta(X)}}
\def\JAC#1{{J({#1})}}
\def\AIRY{\mathrm{A}}          
\def\HATAIRY{\hat{\mathrm{A}}} 
\def\LFT#1{{#1}^*}             
\def\DPLFT#1{{\nabla_{\CHECKP}\LFT{#1}}}
\def\NEIGH{\mathcal{U}}        
\def\TLDG{{\tilde g}}          
\def\Oo{\mathcal{O}}
\def\C{\mathbb{C}}
\def\R{\mathbb{R}}
\def\dd{\,d}
\def\dX{\dd X}
\def\EXP#1{e^{#1}}
\def\DET{\det}
\def\TRACE{\mathrm{Tr}\,}
\def\GRAD{\nabla}
\def\GRADX{\nabla_X}
\def\GRADP{\nabla_P}
\def\LAP{\Delta}
\def\DIV{{\mathrm{div}}}
\def\DT#1{{\dot{#1}}}
\def\DDT#1{{\ddot{#1}}}
\def\SGN{\mathrm{sgn}}
\def\COMMA{\,,}             
\def\PERIOD{\,.}            
\def\SEP{{\,|\,}}           
\def\VIZ#1{(\ref{#1})}      
\def\Iunit{i}
\def\REAL{\mathrm{Re}\,}
\def\IMAG{\mathrm{Im}\,}
\def\BIGO{\Oo}
\def\LTWO{L^2}                   
\begin{document}
i\title[Accuracy of molecular dynamics]{How accurate is molecular dynamics?}

\subjclass[2000]{Primary: 81Q20; Secondary: 82C10}
\keywords{Born-Oppenheimer approximation, WKB expansion,  caustics, Fourier integral operators, 
          Schr\"odinger operators}
\thanks{
The research of P.P.  and A.S. was partially supported by the National Science Foundation under the grant
NSF-DMS-0813893 and  Swedish Research Council grant 621-2010-5647, respectively.
}
\author{Christian Bayer}
\address{
Department of Mathematics
University of Vienna
Nordbergstraße 15
1090 Wien, Austria}
\email{christian.bayer@univie.ac.at}
\author{H{\aa}kon Hoel}
\address{Department of Numerical Analysis,
  Kungl. Tekniska H\"ogskolan,
  100 44 Stockholm,
  Sweden}
\email{hhoel@kth.se}

\author{Petr Plech\'{a}\v{c}}
\address{Department of Mathematical Sciences, 
   University of Delaware, 
   Newark, DE 19716, 
   USA}
\email{plechac@math.udel.edu}
\author{Anders Szepessy}
\address{Department of Mathematics,
  Kungl. Tekniska H\"ogskolan,
  100 44 Stockholm,
  Sweden}
\email{szepessy@kth.se}
\author{Raul Tempone}
\address{Division of Mathematics,
King Abdullah University of Science and Technology, Thuwal 23955-6900, 
Kingdom of Saudi Arabia
}
\email{raul.tempone@kaust.edu.sa}

\maketitle
\tableofcontents

\section{Motivation for error estimates of molecular dynamics}

Molecular dynamics is a computational method
to study molecular systems in materials science, chemistry and molecular biology.
The simulations are used, for example, in designing and understanding
new materials or for
determining biochemical reactions in drug design, \REF{frenkel}.
The wide popularity of molecular dynamics simulations
relies on the fact that in many cases it agrees very well with experiments.
Indeed  when we have experimental data it is easy to verify correctness of the method
by comparing with experiments at certain parameter regimes.
However, if we want the simulation to predict something that has no comparing experiment,
we need a mathematical estimate of the accuracy of the computation.
In the case of molecular systems with few particles such studies are made
by directly solving the Schr\"odinger equation. 
A fundamental and still open question in classical molecular dynamics simulations 
is how to verify the accuracy computationally,
i.e., when the solution of the Schr\"odinger equation
is not a computational alternative.

The aim of this paper is to derive qualitative error estimates for molecular dynamics
and present new mathematical methods
which could be used also for a more demanding quantitive accuracy estimation, 
without solving the Schr\"odinger solution.
Having molecular dynamics error estimates opens, for instance, the possibility 
of systematically evaluating which density functionals or
empirical force fields are good approximations and under what conditions the approximation
properties hold.
Computations with such error estimates could also give improved understanding 
when quantum effects are important and when they are not, in particular 
in cases when the Schr\"odinger equation is too computational complex to solve.

\smallskip
{\it The first step to check the accuracy} of a molecular dynamics simulation is
to know what to compare with.
Here we compare with the value of any {\it observable} $g(X)$, of nuclei positions $X$,
for the {\it time-independent Schr\"odinger} eigenvalue equation
$\HOPER \Phi=E\Phi$, so that the approximation error we study is
\begin{equation}\label{g_def}
  \int_{\rset^{3(N+n)}} g(X)\Phi(x,X)^*\Phi(x,X) \, dx\,dX - 
  \lim_{T\to\infty}\frac{1}{T}\int_0^T g(X_t) dt\COMMA
\end{equation}
for a molecular dynamics path $X_t$, with total energy equal to the Schr\"odinger eigenvalue $E$.
The observable can be, for instance, the local potential energy, used in \REF{erik} to 
determine phase-field partial differential equations from molecular dynamics simulations, 
see Figure~\ref{two_phase_particles}.
The time-independent Schr\"odinger equation has a remarkable property of accurately predicting 
experiments in combination with no unknown data, thereby forming the foundation of computational chemistry.
However, the drawback is the high dimensional solution space for nuclei-electron systems with several particles, 
restricting numerical solution to small molecules.
In this paper we study the {\it time-independent} setting of the Schr\"odinger equation as the reference.
The proposed approach has the advantage of avoiding the  
difficulty of finding the initial data for the time-dependent Schr\"odinger equation.
\begin{figure}[htbp]
  \includegraphics[height=5cm]{./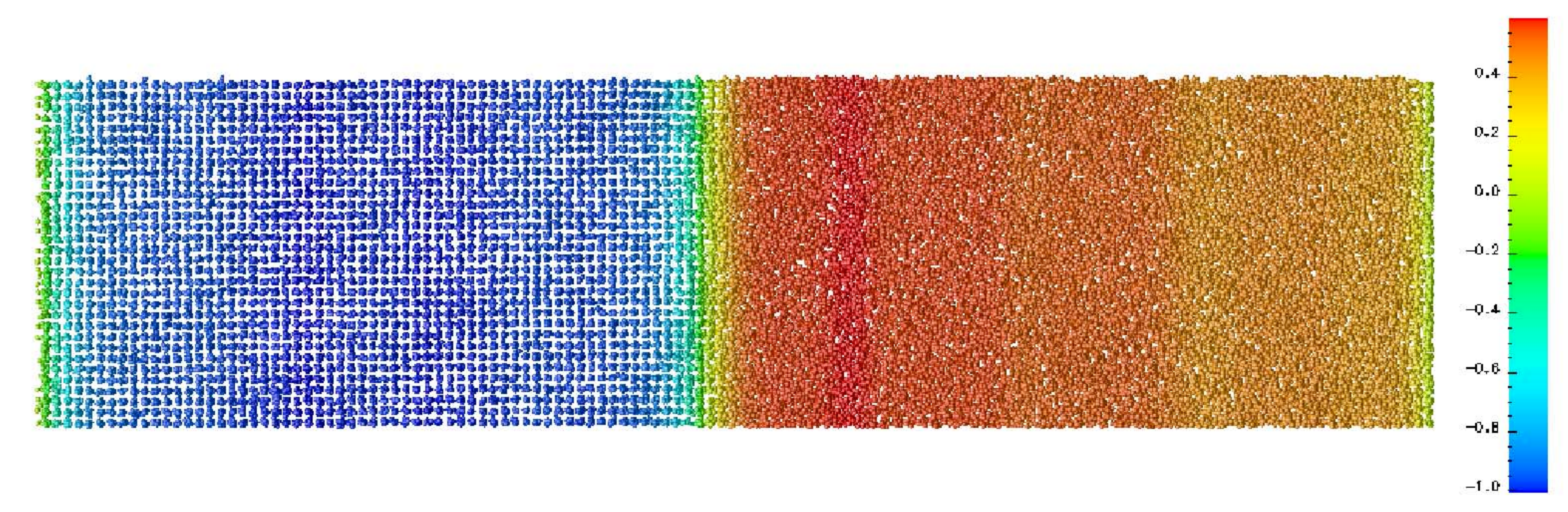}
  \caption{A Lennard-Jones molecular dynamics simulation of a phase transition with periodic boundary conditions, 
  from \REF{erik}.  The left part is solid and the right is liquid. The color measures the local potential energy.}
  \label{two_phase_particles}
\end{figure}

\smallskip
{\it The second step to check the accuracy} is to derive error estimates.  
We have three types of error: time discretization error, sampling error and modeling error.
The time discretization error comes from approximating the differential equation for molecular dynamics with a
numerical method, based on replacing time derivatives with difference quotients for a positive step size $\Delta t$.
The sampling error is due to truncating the infinite $T$ and using a finite value of $T$ in the integral 
in \eqref{g_def}.
The modeling error (also called coarse-graining error) originates from eliminating the electrons in
the Schr\"odinger nuclei-electron system and replacing the nuclei dynamics with their classical paths;
this approximation error was first analyzed by Born and Oppenheimer in their seminal paper \REF{BO}.

The time discretization and truncation error components 
are in some sense simple to handle by comparing simulations
with different choice of $\Delta t$ and $T$, although it can, of course, be difficult to know that 
the behavior does not change with even smaller $\Delta t$ and larger $T$.
The modeling error is more difficult to check since a direct approach would require to solve 
the Schr\"odinger equation.
Currently the Schr\"odinger partial differential equation can only be solved with few particles, therefore
it is not an option to solve the Schr\"odinger equation in general.
The reason to use molecular dynamics is precisely in avoiding solution of the Schr\"odinger equation.
Consequently the modeling error requires mathematical error analysis.   
In the literature there seems to be  no error analysis that
is precise, simple and constructive enough so that a molecular dynamics simulation can use it to
asses the modeling error. 
Our alternative error analysis presented  here is developed with the aim to 
allow the construction of
algorithms that estimate the modeling error in molecular 
dynamics computations.
Our analysis differs from previous ones by using 
\begin{itemize}
\item[-] the time-independent Schr\"odinger equation as the reference model to 
                 compare molecular dynamics with,
\item[-] an amplitude function in a WKB-Ansatz that depends only on position coordinates 
                 $(x,X)$ (and not on momentum coordinates $(p,P)$) for caustic states,
\item[-] actual solutions of the Schr\"odinger equation (and not only asymptotic solutions),
\item[-]  the theory of Hamilton-Jacobi partial differential equations to derive estimates 
                  for the corresponding Hamiltonian systems, i.e., the molecular dynamics systems.
\end{itemize}

Understanding both the exact Schr\"odinger model and the molecular dynamics model
through Hamiltonian systems allows us to
obtain bounds for the difference of the solutions by well-established
comparison results for the solutions of Hamilton-Jacobi equations, by
regarding the Schr\"odinger Hamiltonian and the molecular dynamics Hamiltonians 
as perturbations of each others. 
The Hamilton-Jacobi theory applied to
Hamiltonian systems is inspired by the error analysis of
symplectic methods for optimal control problems for partial differential equations, \REF{ms}.
The result is that the modeling error can be estimated based on the difference of the
Hamiltonians, for the molecular dynamics system and the Schr\"odinger system, 
along the same solution path, see Theorem~\ref{bo_thm} and Section~\ref{pert_ham}.

\section{The Schr\"odinger  and molecular dynamics models}

In deriving the approximation of the solutions to the full Schr\"odinger equation
the heavy particles are often treated within classical mechanics, i.e., by defining
the evolution of their positions and momenta by equations of motions of classical mechanics.
Therefore we denote $X_t:[0,\infty) \to \R^{3N}$ and $P_t:[0,\infty) \to \R^{3N}$ time-dependent
functions of positions and momenta with time derivatives denoted by
$$
  \DT{X_t} = \frac{d X_t}{dt}\COMMA\;\; \DDT{X_t} = \frac{d^2 X_t}{dt^2}\PERIOD
$$
We denote the Euclidean scalar product on $\R^{3N}$ by
$$
  \EPROD{X}{Y} = \sum_{i=1}^{3N} X^i Y^i\PERIOD 
$$
Furthermore, we use the notation $\GRADX \psi(x,X) = (\GRAD_{X^1}\psi(x,X),\dots,\GRAD_{X^N}\psi(x,X))$,
and as customary $\GRAD_{X^i}\psi = (\partial_{X_1^i}\psi, \partial_{X_2^i}\psi, \partial_{X_3^i}\psi)$.

On the other hand,  the light particles are treated within the quantum mechanical description
and the following complex valued bilinear map $\LPROD{\cdot\,}{\cdot}:\LTWO(\R^{3n}\times\R^{3N})\times\LTWO(\R^{3n}\times\R^{3N}) \to\LTWO(\R^{3N})$
will be used in the subsequent calculations
\begin{equation}\label{electron_scalar}
  \LPROD{\phi}{\psi} = \int_{\R^{3n}} \phi(x,X)^* \psi(x,X) \, dx\PERIOD
\end{equation}
The notation $\psi(x,X)=\BIGO(M^{-\alpha})$ is also used for complex valued functions, meaning that
$|\psi(x,X)|=\BIGO(M^{-\alpha})$ holds uniformly in $x$ and $X$.

The {\it time-independent  Schr\"odinger} equation 
\begin{equation}\label{schrodinger_stat}
  \HOPER(x,X) \Phi(x,X) = E\Phi(x,X)
\end{equation}
models many-body (nuclei-electron) quantum systems and  is
obtained from minimization of the energy in the solution space of wave functions, see 
\REFS{schrodinger,schiff,berezin,tanner,lebris_hand}.
It is an eigenvalue problem for the energy  $E\in\rset$ of the system in the solution space, 
described by wave functions, $\Phi:\rset^{3n}\times\rset^{3N}\to \C$, depending on electron coordinates 
$x=(x^1,\ldots, x^n)\in\rset^{3n}$,
nuclei coordinates $X=(X^1,\ldots, X^N)\in\rset^{3N}$,  and the Hamiltonian operator $\HOPER(x,X)$ 
\begin{equation}\label{V-definition}
   \HOPER(x,X)= \VOPER(x,X) - \frac{1}{2} M^{-1}\sum_{n=1}^N\Delta_{X^n}\PERIOD
\end{equation}
We assume that a quantum state of the system
is fully described by the wave function $\Phi:\R^{3n}\times\R^{3N}\to\C$ which
is an element of the Hilbert space %
of wave functions
with the standard complex valued scalar product
$$
  \langle\!\langle\Phi,\Psi\rangle\!\rangle = 
   \int_{\R^{3n}\times\R^{3N}} \Phi(x,X)^* \Psi(x,X)\, dx\,dX\COMMA
$$
and the operator $\HOPER$ is 
self-adjoint in this Hilbert space. The Hilbert space
is then a subset of $\LTWO(\rset^{3n}\times\rset^{3N})$ with symmetry conditions based on 
the Pauli exclusion principle for electrons, see \REFS{lebris_hand,lieb_matter}.

In computational chemistry the operator
$\VOPER$, the electron Hamiltonian, is independent of $M$ and it is
precisely determined by the sum of the
kinetic energy of electrons and the Coulomb interaction between nuclei and electrons.
We assume that the electron operator $\VOPER(\cdot, X)$ is self-adjoint
in the subspace with the inner product $\LPROD{\cdot}{\cdot}$ of functions in \eqref{electron_scalar} with fixed $X$ coordinate
and acts as a multiplication on functions that depend only on $X$. %
An essential feature of  the partial differential equation \VIZ{schrodinger_stat}
is the high computational complexity of finding  the solution in an antisymmetric/symmetric
subset of the Sobolev space $H^1(\R^{3n}\times\R^{3N})$.
The mass of the nuclei, which are much greater than one (electron mass),  
are the diagonal elements in the diagonal matrix $M$.

In contrast to the Schr\"odinger equation, 
a {\it molecular dynamics} model of $N$ nuclei $X:[0,T]\to\rset^{3N}$, 
with a given potential $V_p:\rset^{3N}\to \rset$,
can be computationally studied for large $N$  by solving the ordinary differential equations 
\begin{equation}\label{md_eq}
  \ddot X_t=- \GRADX V_p(X_t)\COMMA
\end{equation}
in the slow time scale, where the nuclei move $\mathcal O(1)$ in unit time.
This computational and conceptual simplification motivates the study to determine 
the potential and its implied accuracy compared with the
the Schr\"odinger equation,  as started  already
in the 1920's with the Born-Oppenheimer approximation \cite{BO}. 
The purpose of our work is to contribute to the current understanding
of such derivations by showing convergence rates under new assumptions.
The precise aim in this paper is to estimate the error 
\begin{equation}\label{approximation}
\frac{\int_{\rset^{3N+3n}} g(X) \Phi(x,X)^*\Phi(x,X) \,dx\,dX}{\int_{\rset^{3N+3n}} \Phi(x,X)^*\Phi(x,X) \,dx\,dX}
 - \lim_{T\to\infty}\frac{1}{T}\int_0^T g(X_t)\, dt
\end{equation}
for a position dependent observable $g(X)$ of the time-indepedent 
Schr\"odinger equation \eqref{schrodinger_stat} approximated 
by the corresponding molecular dynamics observable 
$\lim_{T\to\infty}T^{-1}\int_0^T g(X_t) \, dt$,
which is computationally cheaper to evaluate with several nuclei.
The Schr\"odinger eigenvalue problem may typically have multiple eigenvalues and the aim is to
find an eigenfunction $\Phi$ and a molecular dynamics system that can be compared.
There may be eigenfunctions that we cannot approximate, but with some assumptions on
the spectrum of $\VOPER(\cdot,X)$ the molecular dynamics  in fact approximates the observable corresponding to
one eigenfunction.

The main step to relate the Schr\"odinger wave function 
and the molecular dynamics solution is the so-called zero-order Born-Oppenheimer approximation, 
where $X_t$ solves  the classical
{\it ab initio} molecular dynamics (\ref{md_eq}) with
the potential $V_p:\rset^{3N}\to \rset$ determined as an eigenvalue of the electron Hamiltonian
$\VOPER(\cdot,X)$ for a given nuclei position $X$. That is  $V_p(X)=\lambda_{0}(X)$
and 
$$
  \VOPER(\cdot,X)\Psi_{\BO}(\cdot, X)=\lambda_{0}(X)\Psi_{\BO}(\cdot, X)\COMMA
$$
for an  electron eigenfunction $\Psi_{\BO}(\cdot, X)\in \LTWO(\rset^{3n})$, for instance,  the ground state.
The Born-Oppenheimer expansion \cite{BO} is an approximation of the solution to the time-independent
Schr\"odinger equation
which is shown in \REFS{hagedorn_egen,martinez} 
to solve the time-independent Schr\"odinger equation approximately. 
This expansion, analyzed by the methods of multiple scales, pseudo-differential operators
and spectral analysis in \REFS{hagedorn_egen,martinez, fefferman},
can be used to study the approximation error
\eqref{approximation}. %
However, in the literature, e.g.,\REF{martinez2}, it is easier to find precise statements      
on the error for the setting of the time-dependent Schr\"odinger equation,
since the stability issue is more subtle in the eigenvalue setting.  

Instead of an asymptotic expansion  we use a different method based on 
a Hamiltonian dynamics formulation of the {\it time-independent}
Schr\"odinger eigenfunction  and  the stability of the corresponding perturbed
Hamilton-Jacobi equations viewed as a hitting problem.
This approach makes it possible to reduce the error propagation on the infinite time interval 
to finite time excursions from a certain co-dimension one hitting set.
A motivation for our method  is 
that it forms a sub-step in trying to  estimate the approximation error
using only information available in molecular dynamics simulations.

The related problem of approximating 
observables to the time-dependent Schr\"odinger equation by the Born-Oppenheimer expansions
is well studied, theoretically in \REFS{robert, spohn_egorov} and computationally in 
\REF{lasser}
using the Egorov theorem.
The Egorov theorem shows that finite time
observables of the time-dependent Schr\"odinger equation are approximated with $\BIGO(M^{-1})$
accuracy by the zero-order Born-Oppenheimer dynamics with an electron eigenvalue gap.
In the special case of a position observable and no electrons (i.e., $\VOPER=V(X)$ in \eqref{V-definition}), 
the Egorov theorem states that
\begin{equation}\label{egorov}
  \left|\int_{\rset^{3N}} g(X) \Phi(X,t)^*\Phi(X,t) \, dX -
  \int_{\rset^{3N}} g(X_t) \Phi(X_0,0)^* \Phi(X_0,0) \, dX_0\right| \le C_t M^{-1}\COMMA
\end{equation}
where $\Phi(X,t)$  is a solution to the time-dependent
Schr\"odinger equation 
$$
  \Iunit \partial_t \Phi(\cdot,t)=\HOPER\Phi(\cdot,t)
$$ 
with the Hamiltonian \eqref{V-definition}
and the path $X_t$ is the nuclei coordinates for the dynamics with the Hamiltonian 
$\tfrac{1}{2}|\dot X|^2 + V(X)$.
If the initial wave function $\Phi(X,0)$ is the eigenfunction in \eqref{schrodinger_stat}
the first term in \eqref{egorov} reduces to the first term in \eqref{approximation}
and the second term can also become the same in an ergodic limit. However,
since we do not know that the parameter $C_t$ (bounding an integral over $(0,t)$) is bounded for all time
we cannot directly conclude an estimate for \eqref{approximation} from \eqref{egorov}.

In our perspective studying the time-independent instead of the time-dependent Schr\"odinger equation has
the important differences that
\begin{itemize}
\item[-] the infinite time study of  the Born-Oppenheimer dynamics can be reduced
      to a finite time hitting problem, 
\item[-] the computational and theoretical problem of specifying initial data for the Schr\"odinger equation 
      is avoided, and
\item[-] computationally cheap evaluation of the position observable $g(X)$ 
      is possible using the time average $\lim_{T\to\infty}\tfrac{1}{T}\int_0^T g(X_t)\,dt$ 
      along the solution path $X_t$.
\end{itemize}

In this paper we derive
the Born-Oppenheimer approximation
from the time-independent Schr\"odinger equation (\ref{schrodinger_stat})
and we establish
convergence rates for molecular dynamics approximations
to  time-independent Schr\"odinger observables under simple assumptions including 
the so-called {\it caustic} points, where
the Jacobian determinant $\DET J(X_t) \equiv \DET(\partial X_t/\partial X_0)$ of 
the Eulerian-Lagrangian transformation of $X$-paths vanish.
As mentioned previously, the main new analytical idea
is  an interpretation of the time-independent Schr\"odinger equation \eqref{schrodinger_stat}
as a Hamiltonian system  
and the subsequent analysis of the approximations  by comparing Hamiltonians. This analysis
employs the theory of Hamilton-Jacobi partial differential equations.
The problematic infinite-time evolution of perturbations in  the dynamics
is solved by viewing it as a finite-time hitting  problem for the Hamilton-Jacobi equation, 
with a particular hitting set.
In contrast to the traditional rigorous and formal asymptotic expansions  
we analyze the transport equation as a time-dependent Schr\"odinger equation.

The main inspiration for this paper
are works \REFS{Mott,briggs,briggs2} and the semi-classical WKB analysis in 
\REF{maslov}:
the works \REFS{Mott, briggs,briggs2} 
derive the time-dependent Schr\"odinger dynamics of an $x$-system, 
$
  \Iunit \DT{\Psi}=\HOPER_1 \Psi,
$
from the time-independent Schr\"odinger equation (with the Hamiltonian $\HOPER_1(x)+ \epsilon\, \HOPER(x,X)$)
by a classical limit for the environment variable $X$, as the coupling parameter $\epsilon$ vanishes
and the mass $M$ tends to infinity;  in particular \REFS{Mott, briggs,briggs2} show that the time 
derivative enters through the coupling of $\Psi$ with the classical velocity. 
Here we refine
the use of characteristics to study  classical  {\it ab initio} molecular dynamics where the coupling
does not vanish, and we establish error estimates for Born-Oppenheimer 
approximations of  Schr\"odinger observables. The small scale, introduced by the perturbation  
\[
  -(2M)^{-1}\sum_k\Delta_{X^k}
\] 
of the potential $\VOPER$,
is identified in a modified WKB eikonal equation
and analyzed through the corresponding transport equation as a time-dependent Schr\"odinger
equation along the eikonal characteristics.
This modified WKB formulation reduces to the standard semi-classical approximation, see \REF{maslov},
in the case of the potential function $\VOPER=V(X)\in\rset$, depending
only on nuclei coordinates, but becomes different in the case of operator-valued potentials studied here. 
The global analysis of WKB functions was initiated by Maslov in the 1960',  \REF{maslov}, 
and lead to the subject Geometry of Quantization, relating global classical paths to eigenfunctions of
the Schr\"odinger equation, see \REF{duistermaat}. The analysis presented in this paper is based
on a Hamiltonian system interpretation of the time-independent Schr\"odinger equation.
Stability of the corresponding Hamilton-Jacobi equation,
bypasses the usual separation 
of nuclei and electron wave functions in the time-dependent self-consistent field  equations,
\cite{schutte,marx,tully}.

Theorem~\ref{bo_thm}  demonstrates that observables from the zero-order Born-Oppenheimer dynamics
approximate observables for the Schr\"odinger eigenvalue problem
with the error of order $\BIGO(M^{-1+\delta})$, for any $\delta>0$,
assuming that the electron eigenvalues satisfy a spectral gap condition.
The result is based on the Hamiltonian \VIZ{V-definition} with any potential $\VOPER$ that is smooth in $X$,
e.g., a regularized version of the Coulomb potential. %
The derivation does not  assume that the nuclei are supported on small domains;
in contrast derivations based on the time-dependent self-consistent field equations 
require nuclei to be supported on small domains.
The reason that the small support is not needed here comes from the combination
of the characteristics and sampling from an equilibrium density. 
In other words,  the nuclei paths behave classically although they may not be supported 
on small domains.
Section~\ref{sec:observables} shows that caustics couple the WKB modes,
as is well-known from geometric optics, see \REFS{keller,maslov}, and generate non-orthogonal
WKB modes that are coupled in the Schr\"odinger density.
On the other hand, with a spectral gap and without caustics the Schr\"odinger
density is asymptotically decoupled into a simple sum of individual WKB densities.
Section~\ref{sec:caustics} constructs a WKB-Fourier integral Schr\"odinger solution for caustic states.
Section~\ref{md_sim} relates the approximation results to 
the accuracy of symplectic numerical methods for molecular dynamics.

A unique property of the time-independent Schr\"odinger equation we use
is the interpretation that the dynamics $X_t\in \rset^{3N}$ can return to a co-dimension one 
surface $I$ which then can  reduce %
the dynamics to a hitting time problem
with finite-time excursions from  $I$.
Another advantage of viewing the molecular dynamics as an approximation of the
eigenvalue problem is that stochastic perturbations of the electron ground state
can be interpreted as a Gibbs distribution of degenerate nuclei-electron eigenstates of the
Schr\"odinger eigenvalue problem \VIZ{schrodinger_stat}, see \REF{ASz2}.
The time-independent eigenvalue setting 
also avoids  the issue on ``wave function collapse'' to an eigenstate, 
present in the time-dependent Schr\"odinger equation.

We believe that these ideas can be further developed to better understanding of
molecular dynamics simulations. 
For example, it would be desirable to have more precise
conditions on the data (i.e. molecular dynamics initial data and potential $\VOPER$)
 instead of our implicit assumption on finite hitting time and
convergence of the Born-Oppenheimer power series approximation in Lemma \ref{born_oppen_lemma}.
\section{A time-independent Schr\"odinger WKB-solution}

\subsection{Exact Schr\"odinger dynamics}\label{sec:exact_schrod}
For the sake of simplicity we assume that all nuclei have the same mass.
If this is not the case, we can introduce new coordinates
$M_1^{1/2}\tilde X^k=M^{1/2}_kX^k$, which transform the Hamiltonian to the form we want
${\VOPER(x,M_1^{1/2}M^{-1/2}\tilde X)}- (2M_1)^{-1}\sum_{k=1}^N\Delta_{\tilde X^k}$.
The singular perturbation $-(2M)^{-1}\sum_k\Delta_{X^k}$ of the potential $\VOPER$ introduces
an additional  small scale $M^{-1/2}$ of high frequency oscillations,
as shown by a WKB-expansion, see   \REFS{Rayleigh,jeffreys,helfer,sjostrand}.
We shall construct  solutions to (\ref{schrodinger_stat}) in  such a WKB-form 
\begin{equation}\label{wkb_form}
  \Phi(x,X)=\PSIA(x,X)\EXP{\Iunit M^{1/2}\theta(X)}\COMMA
\end{equation}
where the amplitude function $\PSIA: \rset^{3n}\times \rset^{3N} \to \C$ 
is complex valued, the phase $\theta: \rset^{3N}\to \rset$ 
is real valued, and the factor $M^{1/2}$
is introduced in order to have  well-defined limits of $\PSIA$ and $\theta$  
as  $M\to\infty$.
Note that it is trivially always possible to find funtions $\PSIA$ and
$\theta$ satisfying  \VIZ{wkb_form}, even in the sense of a true equality.  Of
course, the ansatz only makes sense if $\PSIA$ and $\theta$ do not have strong
oscillations for large $M$.
The standard WKB-construction,  \cite{maslov,helfer}, is
based on a series expansion in powers of $M^{1/2}$ which solves
the Schr\"odinger equation with arbitrary high accuracy. 
Instead of an asymptotic solution,
we introduce  an actual solution based on
a time-dependent Schr\"odinger transport equation. This transport equation
reduces to the formulation
in \REF{maslov} for the case of a potential function
$\VOPER=V(X)\in \rset$, depending only on nuclei coordinates $X\in\rset^{3N}$,
and modifies it for the case of a self-adjoint potential operator $\VOPER(\cdot,X)$ 
on the electron space $\LTWO(\rset^{3n})$ which is the primary focus of our work here.
In Sections \ref{sec:observables} and \ref{sec:caustics} we use 
a linear combination of WKB-eigensolutions, but first we study the simplest case of a single WKB-eigensolution
as motivated by the following subsection.

\subsubsection{Molecular dynamics from a piecewise constant electron operator on a simplex mesh} 
The purpose of this section is to convey a first formal understanding of the relation between ab initio molecular dynamics
$\ddot X_t= -\GRADX\lambda_0(X_t)$ and the Schr\"odinger eigenvalue problem \eqref{schrodinger_stat} and motivate the WKB ansatz \eqref{wkb_form}.
In subsequent sections we will describe precise analysis of error estimates for the WKB-method. 
The idea behind this first study
is to approximate the electron operator $\VOPER$ by a finite dimensional matrix $\VOPER^h$, which is piecewise constant on
a  simplex mesh in the variable $X$, with the mesh size $h$. Furthermore, we introduce the change of variables
\[
\Phi = \sum_{j=0}^J \varphi_j \Psi_j =: \Psi\varphi
\]
based on the piecewise constant electron eigenvalues and  eigenvectors 
$\VOPER^h\Psi_j=\lambda_j^h\Psi_j, \ \langle\Psi_j,\Psi_j\rangle=1, \ j=0,\ldots J$, normalized and ordered with respect to increasing eigenvalues. Then the Schr\"odinger  equation \eqref{schrodinger_stat} becomes
\[
-\frac{1}{2M} \Delta_X (\Psi\varphi) + \VOPER^h \Psi\varphi = E\Psi\varphi\COMMA
\]
with the notation $\Delta_X=\sum_j\Delta_{X_j}$,
so that on each simplex
\[
-\frac{1}{2M} \Delta_X \varphi_j + \lambda_j^h\varphi_j = E\varphi_j\COMMA
\]
which by separation of variables, for each $j=0,1,2,\ldots ,J$, implies 
\begin{equation}\label{phi_sum}
\varphi_j = \sum_{P^j} a({P^j})e^{iM^{1/2} P^j\cdot X}
\end{equation}
for any $P^j\in \C^{3N}$ that satisfies the eikonal equation
\[ 
\frac{1}{2}\EPROD{P^j}{P^j} + \lambda_j^h =E\COMMA
\] 
for any $a(P^j)\in \C$, if all components of $P^j$ are non zero. If $P^j_k=0$ we have
$a(P^j)=\prod_{\{k\,:\, P^j_k=0\}} (A_{k}X_k+B_k)$ for any 
$A_k\in\C, B_k\in\C$, since $e^{\pm i M^{1/2}P^j_k X_k}=1$  in this case. 
 The solution $\Phi$, to \eqref{schrodinger_stat}, and its normal derivative are continuous at the interfaces of the simplices.
On the intersection of the faces the normal derivative is not defined but this set is of measure zero and thus
negligible as seen from the $H^1(\rset^{3N})$ solution concept of \eqref{schrodinger_stat}. 

We investigate a simpler, one-dimensional case,
$X\in \rset$, first. 
Then the solution $\varphi$ simplifies to
\[
\varphi_j= a_j e^{\Iunit M^{1/2} P^j\cdot X} + b_j e^{-\Iunit M^{1/2} P^j\cdot X}
\]
for $a_j, b_j, P^j \in \mathbb C$ and $(P^j)^2/2 +\lambda_j =E$\PERIOD
\ The continuity conditions
\begin{equation}\label{contin}
\begin{split}
\lim_{X\rightarrow X_0+} \Phi(X) &= \lim_{X\rightarrow X_0-} \Phi(X)\\
\lim_{X\rightarrow X_0+} \partial_X\Phi(X) &= \lim_{X\rightarrow X_0-} \partial_X\Phi(X)\\
\end{split}
\end{equation}
hold for any $X_0\in \rset$, in particular, at the interval boundary where for $X_0=0$
\begin{equation}\label{reflect-transmitt}
\begin{split}
\lim_{X\rightarrow X_0\pm} \Phi(X) &= \sum_{j} (a_{j\pm}\Psi_{j\pm} + b_{j\pm}\Psi_{j\pm})\\
\lim_{X\rightarrow X_0\pm} \partial_X\Phi(X) &
= iM^{1/2}\sum_{j} (a_{j\pm}P^j_\pm\Psi_{j\pm} - b_{j\pm}P^j_\pm\Psi_{j\pm}) \PERIOD\\
\end{split}
\end{equation}
It is clear that given  $a_-$ and $b_-$ we can determine $a_+$ and $b_+$ so that
\eqref{contin} holds. In order to prepare for the multi-dimensional case it is convenient to consider each incoming wave $a_-$ and $b_+$ separately: 
the incoming $a_-$ wave is split into  a refracted $a_+$ and reflected $b_-$ wave
\begin{equation}\label{a-b}
 \sum_{j} a_{j-}\Psi_{j-}P^j_- =
 \sum_{j} (a_{j+}\Psi_{j+}P^j_+ + b_{j-}\Psi_{j-}P^j_-)
 \end{equation}
 and similarly the incoming $b_+$ wave is split into a refracted $b_-$ wave  and a  reflected $a_+$ wave, see Figure \ref{reflect}.
 The jump conditions at the different interfaces are coupled by the oscillatory functions $e^{\pm iM^{1/2}P^j\cdot X}$.
 The global construction of $\varphi$ and $\Psi$ in one dimension
 follows by marching in the positive $X$-direction to successive intervals, creating in each interval  both
 a $e^{iM^{1/2}P^j\cdot X}\Psi_j$ and a $e^{-iM^{1/2}P^j\cdot X}\Psi_j$ wave.

%
In general each interface condition \eqref{reflect-transmitt} also couples all eigenvectors $\Psi_j$.
However, we shall see that if $M$ is large, $\VOPER$ smooth and there is a spectral gap $\lambda_1-\lambda_0>c>0$
 then, in the limit of the simplex size $h$ tending to zero, there is  an asymptotically uncoupled WKB-solution
 $\Phi(x,X)= \phi(x,X)e^{iM^{1/2}\theta(X)}$, where
 $\theta:\rset^{3N}\rightarrow \rset, \ \phi:\rset^{3n}\times \rset^{3N}\rightarrow \mathbb C$.
Under these assumptions the Born-Oppenheimer approximation in Lemma~\ref{born_oppen_lemma} 
 shows that $\phi$ is asymptotically parallel, in $L^2(dx)$, to the electron eigenfunction $\Psi_0$ as $M\rightarrow\infty$.
The gradient $\GRADX\theta(X)=P^0$ 
is obtained from the differential $\theta(X)=\theta(X_0) + \GRADX\theta(X_0)\cdot (X-X_0) + o(|X-X_0|)$.

  In the case of electron eigenvalue crossing, i.e., $\lambda_1(X)=\lambda_0(X)$ for some $X$, or so called
  avoided crossings (meaning that the eigenvalue gap $c\ll 1$ is small and dependent on $M$),
  a refraction will, in general, include all components
  $a_je^{iM^{1/2} \EPROD{P^j}{X}} \Psi_j, \ j=1,\ldots, J$ 
  and consequently the Born-Oppenheimer approximation fails.

 The construction of a solution to the Schr\"odinger equation with a piecewise constant
 potential is more involved in the multi-dimensional case for two reasons: each reflection at an interface generates, in general,
 an additional path in a new direction, so that many paths are needed.
 Furthermore, the construction of a solution to the eikonal equation is more complicated 
 since the jump condition \eqref{reflect-transmitt}
 implies that the tangential component $P^j_t$ of $P^j$ must be continuous across a simplex face
 and only the normal component $P^j_n=P^j-P^j_t$ may have a jump.
%
%
 In multi-dimensional cases it is still possible to construct
 a solution of the form \eqref{phi_sum} by following 
 the characteristic paths $\dot X_t=P^j(X_t)$ and using the jump conditions
 \eqref{reflect-transmitt}: 
 when the path $X_t$ hits a simplex face, 
 the tangential part $P^j_t$ of $P^j$ is continuous and the normal component $P^j_n$ of $P^j$ may jump.
 At a simplex face the new value of the $P^j_n$ is determined by $(\EPROD{P^j_n}{P^j_n} +\EPROD{P^j_t}{P^j_t})/2 + \lambda_j^h=E$.
 Analogously 
 to the one dimensional case we treat the pair $e^{iM^{1/2}\EPROD{(P^j_t+P^j_n)}{X}}$ and $e^{iM^{1/2}\EPROD{(P^j_t-P^j_n)}{X}}$
 together. However, each collision with $e^{iM^{1/2}\EPROD{(P^j_t+P^j_n)}{X}}$ on an interface now creates a reflected wave in another direction, 
 in particular, $e^{iM^{1/2}\EPROD{(P^j_t-P^j_n)}{X}}\Psi_j$,
 and we  get many paths to follow.  
 Therefore each mode $e^{iM^{1/2}\EPROD{P^j}{X}}$ follows its characteristic
 $X_t$, where $\dot X_t=P^j$, through the simplex to the adjacent simplicial faces, which the characteristic pass through
 when they leave the simplex, and at these outflow faces a reflected mode is created and a refracted mode
 continues into the adjacent simplices, see Figure \ref{reflect}. In this way we can formally construct a solution of the form $ \sum_{P^j} a({P^j})e^{iM^{1/2} P^j\cdot X}\Psi_j$
 to the Schr\"odinger equation \eqref{schrodinger_stat}, with possibly several different characteristic paths in each simplex.
%
%
 \begin{figure}[htbp]
\includegraphics[height=3cm]{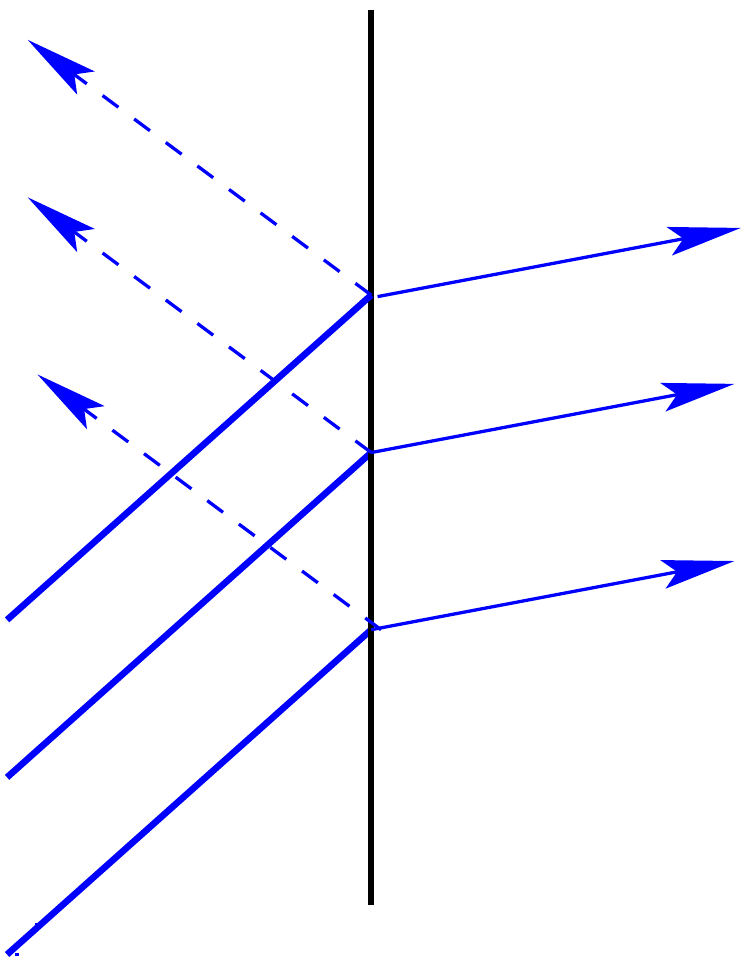}
\includegraphics[height=3cm]{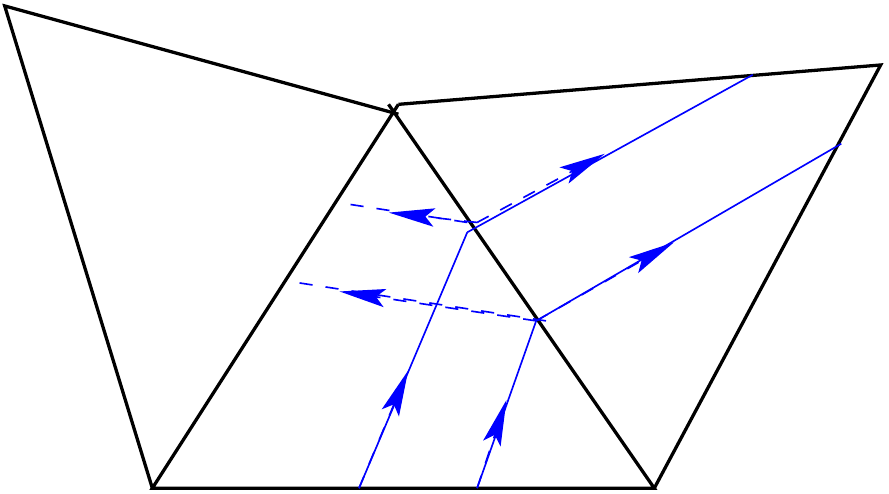}
\includegraphics[height=3cm]{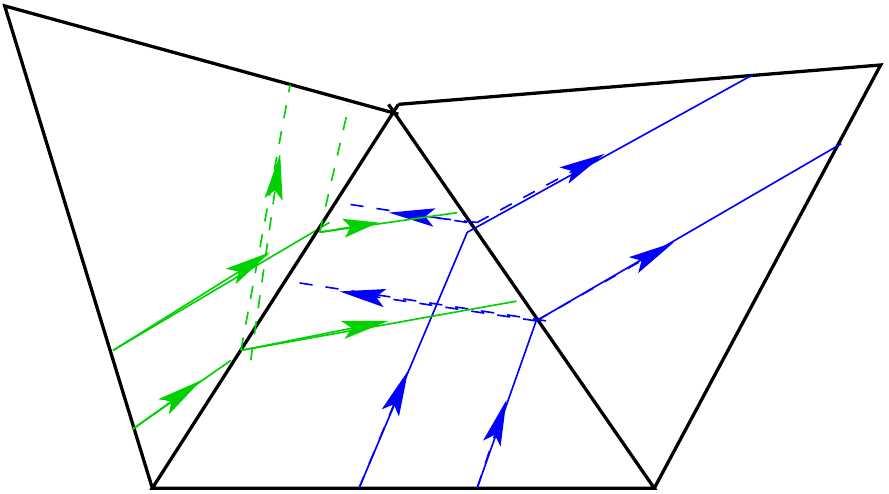}
\caption{The value of $P^j$ is constructed by following the characteristic paths $X_t$ (the blue and green curves), based on $\dot X_t=P^j$, with a reflection-refraction at each simplex face (left) following the path through simplices (middle) and each simplex may have several $P^j$ (right).}
\label{reflect}
\end{figure}

In conclusion, the piecewise constant electron operator shows that the solution to the Schr\"odinger equation
\eqref{schrodinger_stat} is composed of a linear combination of highly oscillatory function modes  $a_je^{iM^{1/2} P^j\cdot X} \Psi_j$ based on the electron eigenvectors $\Psi_j$ and eigenvalues $\lambda_j$,  
where $P^j$ satisfies the eikonal equation $P^j\cdot P^j/2 + \lambda_j(X)=E$. 
These modes can be followed be characteristics $\dot X=P^j$ from simplex to simplex. 
In this paper we show that observables based on the related WKB Schr\"odinger solutions
can be approximated by molecular dynamics time averages,
when there is a spectral gap
around $\lambda_0$. 

%
%
%
\subsubsection{A first WKB-solution}\label{first_WKB}
The WKB-solution satisfies the Schr\"odinger equation \VIZ{schrodinger_stat} provided that
\begin{equation}\label{wkb_eq}
  \begin{split}
     0 &=(\HOPER-E)\PSIA \, \EXPWKB\\
       &= \left( ( \frac{1}{2}|\GRADX\theta|^2 + \VOPER - E) \PSIA \right.
            - \frac{1}{2M} \Delta_{X}\PSIA
              - \frac{\Iunit}{M^{1/2}}   (\EPROD{\GRADX\PSIA}{\GRADX\theta}
       \left. + \frac{1}{2}\PSIA\, \Delta_{X}\theta)\right) \EXPWKB \PERIOD 
  \end{split}
\end{equation}
We shall see that only eigensolutions $\Phi$ that correspond to
dynamics without caustics correspond to such a single WKB-mode,
as for instance when the eigenvalue $E$  is inside an electron eigenvalue gap.
Solutions in the presence of caustics use a Fourier integral of such WKB-modes,
and we treat this case in detail in Section~\ref{sec:caustics}.
To understand the behavior of $\theta$, we multiply \VIZ{wkb_eq} by 
$\PSIA^* \EXP{-\Iunit M^{1/2}\theta(X)}$ and integrate over
$\rset^{3n}$. Similarly we take the complex conjugate of \VIZ{wkb_eq}, and
multiply by $\PSIA\EXPWKB $
and integrate over $\rset^{3n}$. By adding these two expressions we obtain
\begin{equation}\label{theta_ekvation}
 \begin{split}
      0 & = 2\big( \frac{1}{2}|\GRADX\theta|^2  - E \big) \, \LPROD{\PSIA}{\PSIA}
            +\underbrace{\LPROD{\PSIA}{\VOPER\PSIA} + \LPROD{\VOPER\PSIA}{\PSIA}}_{=2\LPROD{\PSIA}{\VOPER\PSIA}} 
            -\frac{1}{2M}\left( \LPROD{\PSIA}{ \LAP_{X}\PSIA}
                + \LPROD{ \LAP_{X}\PSIA}{\PSIA}\right) \\
        &{}-\frac{\Iunit}{M^{1/2}}
            \underbrace{\big(\LPROD{\PSIA}{\EPROD{\GRADX\PSIA}{\GRADX\theta}}
           -\LPROD{\EPROD{\GRADX\PSIA}{\GRADX\theta}}{\PSIA}\big)}_{%
            =2\Iunit\IMAG\LPROD{\PSIA}{\EPROD{\GRADX\PSIA}{\GRADX\theta}}}
           +\frac{\Iunit}{2M^{1/2}}
            \underbrace{\big(\LPROD{\PSIA}{\PSIA} - \LPROD{\PSIA}{\PSIA}\big)}_{=0}\LAP_{X}\theta\PERIOD
  \end{split}
\end{equation}
The purpose of the phase function $\theta$ is to generate an accurate
approximation in the limit as $M\to\infty$.
A possible and natural definition of $\theta$ would be the formal limit of 
\VIZ{theta_ekvation} as $M\to\infty$, which is
the {\it Hamilton-Jacobi equation}, also called the {\it eikonal equation}
\begin{equation}\label{theta_eq}
    \frac{1}{2}|\GRADX\theta|^2 =E - V_0 \COMMA
\end{equation}
where the function $V_0:\rset^{3N}\to \rset$ is  
\begin{equation}\label{V_0_definition}
   V_0:=\frac{\LPROD{\PSIA}{\VOPER\PSIA}}{\LPROD{\PSIA}{\PSIA}}\PERIOD
\end{equation}
The solution to the Hamilton-Jacobi eikonal equation can be constructed from the
associated Hamiltonian system
\begin{equation}\label{hj_first}
   \begin{split}
     \DT{X}_t &= P_t\\
     \DT{P}_t &= -\GRADX V_0(X_t)
     \end{split}
\end{equation}
through  the characteristics path $(X_t,P_t)$ satisfying $\GRADX\theta(X_t)=:P_t$.
The amplitude function $\phi$ can be determined by requiring the ansatz \eqref{wkb_eq} to be a solution, which gives
\begin{eqnarray*}
     0 &=    &  (\HOPER - E)\PSIA \EXPWKB\\
       &=    & \Big( \underbrace{( \frac{1}{2} |\GRADX\theta|^2 + V_0 - E) }_{=0} \PSIA \\
       &\quad&  %
       { - \frac{1}{2M} \LAP_{X}\PSIA +(\VOPER - V_0)\PSIA
                - \frac{\Iunit}{M^{1/2}} ( \EPROD{\GRADX\PSIA}{\GRADX\theta}
                + \frac{1}{2}\PSIA\LAP_X\theta)\Big)} %
                \EXPWKB\COMMA
\end{eqnarray*}
so that by using  \eqref{theta_eq} we have
\[
   - \frac{1}{2M} \LAP_{X}\PSIA +(\VOPER - V_0)\PSIA
   - \frac{\Iunit}{M^{1/2}} ( \EPROD{\GRADX\PSIA}{\GRADX\theta}
   + \frac{1}{2}\PSIA\LAP_X\theta) =0 \PERIOD
\]
The usual method for determining $\PSIA$ from this so-called {\it transport equation}
uses an asymptotic expansion $\PSIA\simeq \sum_{k=0}^K M^{-k/2} \PSIA_k$, 
see \REFS{hagedorn_egen,martinez} and the beginning of Section~\ref{sec:wkb_analysis}.
An alternative is to write it as a Schr\"odinger equation, similar to work in \REF{maslov}:
we apply the characteristics in \eqref{hj_first} to write
\[
  \frac{d}{dt} \PSIA(X_t)= \EPROD{\GRADX\PSIA}{\DT{X}_t}=\EPROD{\GRADX\PSIA}{\GRADX\theta}\COMMA
\]
and define the weight function $G$ by
\begin{equation}\label{G_first}
  \frac{d}{dt} \log G_t= \frac{1}{2} \LAP_X\theta(X_t)\COMMA
\end{equation}
and the  variable $\tpsi_t:=\PSIA(X_t) G_t$.
We use the notation $\phi(X)$ instead of the more precise $\phi(\cdot, X)$, so that e.g. 
$\psi_t=\psi_t(x)=\phi(x,X_t)G_t$.
Then the transport equation becomes a Schr\"odinger equation
\begin{equation}\label{schrod_first}
  \Iunit M^{-1/2} \DT{\tpsi}_t =(\VOPER - V_0)\tpsi_t 
                               - \frac{G_t}{2M} \LAP_X\left(\frac{\tpsi_t}{G_t}\right)\PERIOD
\end{equation}
In conclusion, equations \eqref{theta_eq}-\eqref{schrod_first} determine the WKB-ansatz
\eqref{wkb_form} to be a solution to the Schr\"odinger equation \eqref{schrodinger_stat}.
\begin{theorem}\label{thr:schrodinger-hamiltonian_first}
Assume the Hamilton-Jacobi equation, with the  corresponding Hamiltonian,
   \[
     H_{\SCH}(X,P):=
     \frac{1}{2}|P|^2 
     + \underbrace{ \frac{\LPROD{\tpsi(X)}{\VOPER(X)\tpsi(X)}}{\LPROD{\tpsi(X)}{\tpsi(X)}} }_{=:V_0(X)} - E=0 \COMMA 
   \]
   based on  the primal variable $X$ and the dual variable
   $P = P(X)=\GRADX\theta(X)$, has a smooth solution $\theta(X)$, then $\theta$
   generates a solution to the time-independent Schr\"odinger equation $(\HOPER - E)\Phi=0$,
   in the sense that 
   \[
      \Phi(X_t,x)
        = \hat G^{-1}(X_t) \hat\psi(x,X_t) e^{\Iunit M^{1/2}\theta(X_t)}\COMMA
   \]
    solves the equation \eqref{schrodinger_stat},
    where $ \hat \psi(X_t) :=\tpsi_t$ satisfies the transport equation \eqref{schrod_first} and
   \[
   \begin{split}
      & \hat G(X_t) =G_t\COMMA\\
      & \frac{d}{dt}\log G_t =\frac{1}{2}\Delta_{X}\theta(X_t)\COMMA\\
      & \mbox{$(X_t,P_t)$  solves the Hamiltonian system \eqref{hj_first} corresponding to $H_{\SCH}$.}
   \end{split}
   \]
\end{theorem}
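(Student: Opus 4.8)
The plan is to verify directly that the proposed $\Phi$ solves $(\HOPER-E)\Phi=0$ by substituting the WKB ansatz and collecting terms, essentially running the computation in Section~\ref{first_WKB} in reverse. Concretely, writing $\Phi(x,X)=\hat G^{-1}(X)\hat\psi(x,X)e^{\Iunit M^{1/2}\theta(X)}$ and setting $\PSIA:=\hat G^{-1}\hat\psi$, one applies the operator $\HOPER-E = \VOPER - E - \tfrac{1}{2}M^{-1}\Delta_X$ to $\PSIA\,e^{\Iunit M^{1/2}\theta}$. Using the product rule for $\Delta_X$ acting on the product $\PSIA\,e^{\Iunit M^{1/2}\theta}$, one gets exactly the expansion \eqref{wkb_eq}: a term $(\tfrac12|\GRADX\theta|^2+\VOPER-E)\PSIA$, a term $-\tfrac{1}{2M}\Delta_X\PSIA$, and a term $-\Iunit M^{-1/2}(\EPROD{\GRADX\PSIA}{\GRADX\theta}+\tfrac12\PSIA\Delta_X\theta)$, all multiplied by $e^{\Iunit M^{1/2}\theta}$. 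So it suffices to show this bracket vanishes.

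First I would dispose of the leading term: since $\theta$ solves the Hamilton-Jacobi equation $H_{\SCH}(X,\GRADX\theta)=0$, i.e. $\tfrac12|\GRADX\theta|^2 = E - V_0$ with $V_0=\LPROD{\PSIA}{\VOPER\PSIA}/\LPROD{\PSIA}{\PSIA}$, the coefficient $\tfrac12|\GRADX\theta|^2+\VOPER-E$ equals $\VOPER - V_0$. Hence the bracket in \eqref{wkb_eq} reduces to the transport operator
\[
  (\VOPER - V_0)\PSIA - \frac{1}{2M}\Delta_X\PSIA - \frac{\Iunit}{M^{1/2}}\Bigl(\EPROD{\GRADX\PSIA}{\GRADX\theta}+\tfrac12\PSIA\,\Delta_X\theta\Bigr),
\]
and the claim is that this is zero when $\PSIA = \hat G^{-1}\hat\psi$ with $\hat\psi$ satisfying \eqref{schrod_first} and $\hat G$ defined by $\tfrac{d}{dt}\log G_t = \tfrac12\Delta_X\theta(X_t)$ along the Hamiltonian flow $\dot X_t = P_t = \GRADX\theta(X_t)$.

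Next I would reparametrize along characteristics. Because the Hamiltonian flow $\dot X_t=\GRADX\theta(X_t)$ foliates (a neighbourhood of) $\rset^{3N}$ — this uses smoothness of $\theta$, which is assumed — it is enough to check the vanishing along each characteristic. Along such a path, $\tfrac{d}{dt}\PSIA(X_t) = \EPROD{\GRADX\PSIA}{\GRADX\theta}$, so the first-order term becomes $\tfrac{d}{dt}\PSIA(X_t) + \tfrac12\PSIA(X_t)\Delta_X\theta(X_t)$. Substituting $\PSIA(X_t) = G_t^{-1}\tpsi_t$ and using $\tfrac{d}{dt}\log G_t = \tfrac12\Delta_X\theta(X_t)$, a short computation gives $\tfrac{d}{dt}\PSIA(X_t) + \tfrac12\PSIA\Delta_X\theta = G_t^{-1}\dot{\tpsi}_t$. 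Multiplying the whole transport bracket through by $G_t$, the identity to verify becomes precisely
\[
  \Iunit M^{-1/2}\dot{\tpsi}_t = (\VOPER - V_0)\tpsi_t - \frac{G_t}{2M}\Delta_X\!\Bigl(\frac{\tpsi_t}{G_t}\Bigr),
\]
which is exactly the Schr\"odinger transport equation \eqref{schrod_first} that $\tpsi_t=\hat\psi(X_t)$ is assumed to satisfy. This closes the argument.

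The only genuinely delicate point — and the one I would flag as the main obstacle — is bookkeeping the identification of the two notions of ``derivative'': the operator $\Delta_X$ in \eqref{schrod_first} acts on $\tpsi_t/G_t = \PSIA$ as a function of the spatial variable $X\in\rset^{3N}$, whereas $\dot{\tpsi}_t$ is the derivative along the characteristic. One must be careful that $G_t$, a priori defined only along the flow, extends to a genuine function $\hat G(X)$ on $X$-space via the foliation (again using that $\theta$ is smooth so the flow is a diffeomorphism onto its image), and that $\LAP_X$ in \eqref{schrod_first} really is the full Laplacian in $X$, not a transport derivative; the term $-\tfrac{1}{2M}\Delta_X\PSIA$ in the transport bracket and the term $-\tfrac{G_t}{2M}\Delta_X(\tpsi_t/G_t)$ in \eqref{schrod_first} then match term by term after multiplying by $G_t$. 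Once this identification is made precise, everything else is the routine product-rule calculation already displayed in the paper, and the theorem follows.
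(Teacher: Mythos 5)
Your proof is correct and follows essentially the same route as the paper: the theorem is stated immediately after the derivation in Section~\ref{first_WKB} and is established precisely by the computation you run — expand $(\HOPER-E)\Phi$ via the product rule to obtain \eqref{wkb_eq}, annihilate the leading term with the eikonal equation, then identify the remaining transport bracket with \eqref{schrod_first} by passing to characteristics and substituting $\PSIA=G_t^{-1}\tpsi_t$ with $\tfrac{d}{dt}\log G_t=\tfrac12\Delta_X\theta$. The subtlety you flag about $\hat G$ and $\hat\psi$ being well-defined functions of $X$ (so that $\Delta_X$ is the genuine spatial Laplacian) is exactly the point the paper addresses in the sentence following the theorem, via the uniqueness of the point $(X_t,\GRADX\theta(X_t))$ on the Lagrangian manifold.
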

It is well know that Hamilton-Jacobi equations in general do not have smooth solutions,
due to $X$-paths that collide, as seen by \eqref{q_sym} generating blow up in $\partial_{XX}\theta(X)$.
However if the domain is small enough, the data on the boundary is smooth  and $V_0$ is smooth, then the characteristics generate a smooth solution, see Ref.~\cite{evans}. In Section~\ref{global_sec} we describe Maslov's method
to find a global solution by patching together local solutions.

Note that the nuclei density, using $\hat G$,  can be written
\begin{equation}\label{rho_definition}
  \rho := \frac{\LPROD{\PSIA}{\PSIA}}{\int_{\rset^{3N}} \LPROD{\PSIA}{\PSIA} \dX}=
  \frac{\LPROD{\hat\psi}{\hat\psi}\, \hat G^{-2}}{\int_{\rset^{3N}} \LPROD{\hat\psi}{\hat\psi} \,\hat G^{-2}\dX}\COMMA
\end{equation}
and since each time $t$ determines a unique point $(X_t,P_t)=(X_t,\GRADX\theta(X_t))$ in the phase space  
the functions $\hat G$ and $\hat\psi$ are well defined. 

\subsubsection{Liouville's Formula}\label{liouville}
In this section we verify Liouville's formula
\begin{equation}\label{liouville_form}
   \frac{G^2_0}{G_t^2}=\EXP{-\int_0^t \TRACE\left(\GRADX P(X_t)\right) \,dt}= 
   \left|\DET \frac{\partial (X_0)}{\partial (X_t)}\right| \COMMA
\end{equation}
given in \REF{maslov}.
The characteristic $\DT{X}_t= P(X_t)$ implies $\tfrac{d}{dt}\JAC{X_t} = \GRADX P\, \JAC{X}$, 
where $\JAC{X}_{ij}=\partial X^i_t/\partial X^j_0$ denotes the first variation with 
respect to perturbations of the initial data. The logarithmic derivative then satisfies
$d/dt\big(\log \JAC{X}\big)_{ij} = \partial_{X^j} P^i(X_t)=\partial_{X^iX^j}\theta(X)$ which implies that 
$\log \JAC{X_t}$ is symmetric and shows that  \eqref{liouville_form} holds
\[
  \DIV P=\TRACE \GRADX P = \frac{d}{dt} \TRACE \log\JAC{X} = \frac{d}{dt}  \log\DET\JAC{X}\PERIOD
\]
The last step uses that $\JAC{X}$ can be diagonalized by an orthogonal transformation
and that the trace is invariant under orthogonal transformations.

\subsubsection{Data for the Hamiltonian system}\label{data_hj}
For the energy $E$ chosen larger than the potential energy, that is such that $E\ge V_0$,  
the Hamiltonian system \eqref{hj_first}
yields a solution $(X,P): [0,T]\rightarrow U\times\rset^{3N}$ to the eikonal equation 
(\ref{theta_eq}) locally  in a neighborhood  $U\subseteq \rset^{3N}$,  for 
regular compatible data $(X_0,P_0)$ given on a $3N-1$ dimensional "inflow"-domain 
$ I\subset \overline{U}$.
Typically, the domain $I$ and the data $(X_0,P_0)|_I$ 
are not given (except that its total energy is $E$), unless it is really an
inflow domain and characteristic paths do not return to $I$ as in a scattering problem.
If paths leaving from $I$ return to $I$, there is an additional compatibility
of data on $I$: 
assume $X_0\in I$ and $X_t\in I$,
then the values $P_t$ are determined from $P_0$;
continuing the path to subsequent hitting points $X_{t_j}\in I$, $j=1,2,\ldots$ determines 
$P_{t_j}$ from $P_0$. 
The characteristic path $(X_t,P_t)$, $t>0$, generates a manifold in the phase space $(X,P)$, 
which is smooth under our assumptions. This manifold is in general only locally of the form $(X,P(X))$, 
but in the case of no caustics it is globally of this form
and then there is a phase function $X\mapsto \theta(X)$ such that $P(X)=\GRADX\theta(X)$ globally. 
In Section \ref{sec:caustics} we study phase space manifolds with caustics.

\begin{remark}\label{G_remark}
The integrating factor $G$ and its derivative $\partial_{X^i} G$ can be determined from 
$(P,\partial_{X^i} P,\partial_{X^iX^j}P)$
along the characteristics by
the following characteristic equations obtained from \eqref{theta_eq} by differentiation with respect to $X$ 
\begin{equation}\label{p_xx_eq}
\begin{split}
   \frac{d}{dt}{\partial_{X^r} P^k}   &=\left[ \sum_j P^j\partial_{X^jX^r}P^k=\sum_jP^j\partial_{X^rX^k}P^j\right]\\
                                      &= -\sum_j\partial_{X^r} P^j\partial_{X^k} P^j -\partial_{X^rX^k}V_0 \COMMA \\
   \frac{d}{dt}{\partial_{X^rX^q} P^k}&=\left[\sum_j P^j\partial_{X^jX^rX^q}P^k+\sum_jP^j\partial_{X^rX^kX^q}P^j\right]\\
                                      &= -\sum_j\partial_{X^r} P^j\partial_{X^kX^q} P^j 
                                         -\sum_j\partial_{X^rX^q} P^j\partial_{X^k} P^j -\partial_{X^rX^kX^q}V_0\COMMA\\
\end{split}
\end{equation}
and similarly $\partial_{X^iX^j}G$ can be determined from  
$(P,\partial_{X^i} P,\partial_{X^iX^j}P,\partial_{X^iX^jX^k}P)$.
\end{remark}

\subsection{Born-Oppenheimer dynamics}
The Born-Oppenheimer approximation leads to the 
standard formulation of {\it ab initio} molecular dynamics, in the micro-canonical ensemble
with the constant number of particles, volume and energy, for the nuclei positions  $X=X_{\BO}$,
\begin{equation}\label{bo_stated}
\begin{split}
  \dot{X}_t &=  P_t \COMMA \\
  \dot{P}_t &= -\GRADX \lambda_0(X_t) \COMMA
\end{split}
\end{equation}
by using that the electrons are in the eigenstate  $\tpsi=\Psi_{\BO}$ with eigenvalue $\lambda_0$ to $\VOPER$, in $L^2(dx)$
for fixed $X$, i.e., $\VOPER(X)\Psi_{\BO}=\lambda_0(X)\Psi_{\BO}$. 
The corresponding Hamiltonian is $H_{\BO}( X, P):= |P|^2/2 + \lambda_0(X)$ with
the eikonal equation
\begin{equation}\label{bo_ham_first}
\frac{1}{2}|\GRADX\theta_{\BO}(X)|^2 + \lambda_0(X) = E\PERIOD
\end{equation}

\subsection{Equations for the density}
We note that
\[
  \PSIA = \hat G^{-1} \hat\tpsi=\left(
         \frac{\rho}{\LPROD{\hat\tpsi}{\hat\tpsi}/
                     \int\LPROD{\hat\tpsi}{\hat\tpsi} \hat G^{-2}\, dX}
                       \right)^{1/2}\, \hat \tpsi\COMMA
\]
shows that $G$ and $\tpsi$ determine the density 
\begin{equation}\label{r_s_dens}
   \rho_{\SCH}=\rho=\frac{\LPROD{\hat\tpsi}{\hat\tpsi} |\hat G|^{-2}}{\int\LPROD{\hat\tpsi}{\hat\tpsi}\, |\hat G|^{-2} dX} \COMMA %
\end{equation} 
defined in \eqref{rho_definition}.
Using the Born-Oppenheimer approximation in Lemma~\ref{born_oppen_lemma} we have
$\LPROD{\hat\tpsi}{\hat\tpsi}= 1+ \BIGO(M^{-1})$ in the case of a spectral gap.
Therefore the weight function $|\hat G|^{-2}$ approximates the density and
we know from Theorem~\ref{thr:schrodinger-hamiltonian_first} 
that $|\hat G|^{-2}$ is determined by the phase function $\theta$.

The Born-Oppenheimer dynamics generates an approximate solution
$\Psi_{\BO}\hat G_{\BO}^{-1}\EXP{\Iunit M^{1/2}\theta_{\BO}}$ which yields the density
\begin{equation}\label{r_bo_dens}
  \rho_{\BO}=|\hat G_{\BO}|^{-2},
\end{equation} where
\[
  \frac{d}{dt} \log |\hat G_{\BO}|^{-2} = -\Delta_{X} \theta_{\BO}(X)\PERIOD
\]
This representation can also be obtained from the conservation of mass
\begin{equation}\label{mass_cons}
  0 = \DIV(\rho_{\BO}\GRADX\theta_{\BO})
\end{equation}
implying
\begin{equation}\label{approx_dens}
  \frac{d}{dt}{\rho}_{\BO}(X_t) =\EPROD{\GRADX \rho_{\BO}(X_t)}{\dot{X}_t}
                          = -\rho_{\BO}(X_t)\ \DIV \GRADX\theta_{\BO}\COMMA\\
\end{equation}
with the solution
\begin{equation}\label{density_fact}
  \rho_{\BO}( X_t)=\frac{C}{|\hat G_{\BO}(X_t)|^2}\COMMA
\end{equation}
where $C$ is a positive  constant for each characteristic. Note that the derivation of this classical density 
does not need a corresponding WKB equation but
uses only the conservation of mass 
that holds for classical paths
satisfying a Hamiltonian system. 
The classical density corresponds precisely to
the Eulerian-Lagrangian change of coordinates $ |G_t|^2/ |G_0|^2=\DET (\partial  X_t/\partial X_0)$
in \eqref{liouville_form}.

\subsection{Construction of the solution operator}\label{start_sec}
The WKB Ansatz  \eqref{wkb_form}  is meaningful when $\tpsi$ does not include the full small scale.
In Lemma~\ref{born_oppen_lemma} we present conditions for $\tpsi$ to be smooth.

To construct the solution operator it is convenient  to include
a non interacting particle in the system, i.e., a particle without charge,
and assume that this particle moves with a constant,  high speed $dX_1^1/dt=P_1^1\gg 1$ (or equivalently
with the unit speed and a large mass).
Such a non interacting particle does not affect the other particles. The additional new coordinate $X^1_1$
is helpful in order to simply relate the  time-coordinate $t$ and $X^1_1$. 
We add the corresponding kinetic energy $(P^1_1)^2/2$ to $E$ in order not to change the original problem
\eqref{schrodinger_stat} and
write the equation \eqref{schrod_first} in the fast time scale $\tau=M^{1/2}t$
\[
   \Iunit \frac{d}{d\tau}\tpsi=(\VOPER-V_0)\tpsi - \frac{1}{2M} G\sum_j\Delta_{X^j}(G^{-1}\tpsi)\PERIOD
\]
Furthermore, we change to the coordinates  
\[
   (\tau,X_*):=(\tau,X^1_2,X^1_3,X^2,\ldots,X^N)\in [0,\infty)\times I\COMMA\;\; 
    \mbox{instead of $(X^1,X^2,\ldots, X^N)\in\rset^{3N}$,}
\]
where $X^j=(X^j_1,X^j_2,X^j_3)\in \rset^3$. 
Hence we obtain
\begin{equation}\label{new_psi_eq}
 \Iunit \dot{\tpsi} +\frac{1}{2(P_1^1)^2}\ddot{\tpsi}
     =(\VOPER-V_0)\tpsi - \frac{1}{2M} G\sum_j\Delta_{X_*^j}(G^{-1}\tpsi) =: \tilde\VOPER\tpsi\COMMA
\end{equation}
using the notation $\dot w= dw/d\tau$ in this section. 
In Section \ref{start_sec2} we show that the left hand side can be reduced to $i\dot\tpsi$ as $P_1^1\rightarrow \infty$, by choosing special initial data.
Note also that $G$ is independent of the first component in $X^1$.
We see that the operator
\[
  \bar \VOPER:=G^{-1} \tilde \VOPER G=\underbrace{G^{-1} (\VOPER-V_0)G}_{=\VOPER-V_0} -\frac{1}{2M}\sum_{j}\Delta_{X^j_*}
\]
is symmetric on $L^2(\rset^{3n+3N-1})$. 
Assume now the data $(X_0,P_0,Z_0)$
for $X_0\in \rset^{3N-1}$ is $(L\mathbb Z)^{3N-1}$-periodic, then also $(X_\tau,P_\tau,Z_\tau)$
is $(L\mathbb Z)^{3N-1}$-periodic, for $Z_t=\theta(X_t)$ and $P_t=\GRADX\theta(X_t)$.
To simplify the notation for such periodic functions, define the periodic circle 
\[
   \tset:=\rset/(L\mathbb Z)\PERIOD
\]

We seek a solution $\Phi$ of \eqref{schrodinger_stat}
which is $(L\mathbb Z)^{3(n+N)-1}$-periodic in the $(x,X_*)$-variable.
The  Schr\"odinger operator $\bar \VOPER(\cdot, X_\tau)$ 
has, for each $\tau$, real eigenvalues $\{\lambda_m(\tau)\}$
with a complete set of eigenvectors $\{\zeta^m(x,X_*,\tau)\}$ orthogonal in the space
of $x$-anti-symmetric functions in $L^2(\mathbb T^{3n+3N-1})$, see \REF{berezin}.
The proof uses
that the operator $\bar \VOPER_\tau+\gamma I$ generates a compact solution operator
in the Hilbert space
of $x$-anti-symmetric functions in $L^2(\mathbb T^{3n+3N-1})$, for the constant $\gamma\in (0,\infty)$
chosen sufficiently large. The discrete spectrum and the compactness comes from 
Fredholm theory for compact operators and the fact
that the bilinear form $\int_{\tset^{3(n+N)-1}} v \bar \VOPER_\tau w +\gamma v w \, dx\,dX_*$
is continuous and coercive on $H^1(\tset^{3(n+N)-1})$, see \REF{evans}. %
We see that  $\tilde \VOPER$ has the same eigenvalues $\{\lambda_m(\tau)\}$ and the eigenvectors 
$\{G_\tau \zeta^m(\tau)\}$, 
orthogonal in the weighted $L^2$-scalar product 
\[
  \int_{\mathbb T^{3N-1}}\LPROD{v}{  w} \ \hat G^{-2}\, dX_*\PERIOD
\]
The construction and analysis of the solution operator continues in Section ~\ref{start_sec2}
based on the spectrum.

\begin{remark}[{Boundary conditions}]
The eigenvalue problem \eqref{schrodinger_stat} makes sense not only
in the periodic setting but also with alternative {boundary conditions}
from interaction with an external environment, e.g., for scattering problems.
\end{remark}

\section{Computation of observables}\label{sec:observables}
Suppose the goal is to compute a real-valued {\it observable} 
\[\int_{  \tset^{3N}}\LPROD{\Phi}{ A\Phi}\, dX\] for a given bounded linear multiplication operator $A=A(X)$
on $L^{2}(  \tset^{3N})$ and a solution $\Phi=\sum_k \phi_k e^{iM^{1/2}\theta_k}$ of \eqref{schrodinger_stat}.
We have
\begin{equation}\label{observ_expand}
  \begin{array}{ll}
    \int_{ \tset^{3N}}\LPROD{\Phi}{ A\Phi} dX
    & = \sum_{k,l} \int_{  \tset^{3N}}\LPROD{A\phi_k \EXP{\Iunit M^{1/2}\theta_k(X)}}{ \phi_l \EXP{\Iunit M^{1/2}\theta_l(X)}}\, dX\\
    & = \sum_{k,l}\int_{  \tset^{3N}}A \EXP{\Iunit M^{1/2}\left(\theta_l(X)-\theta_k(X)\right)} 
              \LPROD{\phi_k}{\phi_l} \, dX\PERIOD
  \end{array}
\end{equation}
The integrand is oscillatory for $k\ne l$, hence critical points (or near critical points) of the phase difference give the main contribution.
The stationary phase method, see \REFS{duistermaat,maslov} and 
Section~\ref{stat_phase_sec}, 
shows that these integrals  are small, 
bounded by $\BIGO(M^{-3N/4})$,
in the case when the phase  difference has non degenerate critical points, or no critical point, 
and the functions $A \LPROD{\phi_k}{ \phi_l}$ and $\theta_l$  are sufficiently smooth.
A critical point $X_c\in\R^{3N}$  satisfies $\GRADX \theta_l(X_c)-\GRAD_X\theta_k(X_c)=0$, 
which means that the two different paths, generated by $\theta_l$ and $\theta_k$,
passing through $X=X_c$ also have the same momentum $P$ at this point. 
That the critical point is degenerate means that
the Hessian matrix $\partial_{X^iX^j}(\theta_k-\theta_l)(X_c)$ is singular  (or asymptotically singular for $M\rightarrow\infty$ as for avoided crossings when the electron eigenvalues have a vanishing spectral gap depending on $M$).
Therefore caustics, crossing or avoided crossing electron eigenvalues may generate coupling between the
WKB terms. On the other hand, without such coupling
%
the density of a linear combination of WKB terms
separates asymptotically to a sum of densities of the individual WKB terms
\begin{equation}\label{obser_decay}
  \int_{  \tset^{3N}}\LPROD{\Phi}{ A\Phi} dX
  =\sum_{k=1}^{\bar k}\int_{  \tset^{3N}}A \underbrace{\LPROD{\phi_k}{\phi_k}}_{=\rho_k}\, dX +\BIGO(M^{-1})\COMMA
\end{equation}
in the case of multiple eigenstates, $\bar k>1$, and 
\[
  \int_{  \tset^{3N}}\LPROD{\Phi}{ A\Phi}\, dX
  =\int_{  \tset^{3N}}A\LPROD{\phi_1}{\phi_1}\, dX
\]
for a single eigenstate.
In the next section we will study molecular dynamics approximations of a single state
\begin{equation}\label{observable_sum}
  \int_{  \tset^{3N} }A\LPROD{\phi_k}{\phi_k}\, dX = \int_{  \tset^{3N}} A(X) \rho_k(X)\, dX\PERIOD 
\end{equation}

In the presence of a
caustic, the WKB terms can be asymptotically non orthogonal,  since their coefficients and phases typically are
not smooth enough to allow the integration by parts  to gain powers of $M^{-1/2}$.  
Non-orthogonal WKB functions tell how the caustic couples the WKB modes.

Regarding the inflow density  $\rho_k\big|_I$ there are two situations: either the characteristics
return often to the inflow domain or not. If they do not return we have a scattering problem and
it is reasonable to define the inflow-density  
$ \rho_k\big|_I$ as an initial condition. If characteristics return,
the dynamics can be used to estimate the return-density  $\rho_k\big|_I$ as follows:
Assume that the following limits exist
\begin{equation}\label{ergod_limit}
  \lim_{T\rightarrow\infty}
      \frac{1}{T}\int_0^{T} A(X_t) \,dt  = \int_{ \tset^{3N}} A(X) \rho_k(X) \, dX
\end{equation}
which bypasses   the need to find $\rho_k\big|_I$ and the quadrature in the number of characteristics.
A way to think about this limit  is to sample the return points $ X_t\in I$ and from these samples 
construct an {\it empirical} return-density, 
converging to $\rho_k\big|_I$ as the number of return iterations tends to infinity. 
We shall use this perspective to view the eikonal equation \eqref{theta_eq} as a hitting problem
on $I$, with hitting times $\tau$ (i.e., return times).
The property having $\rho|_I$ constant as a function
of $X_0$ is called {\it ergodicity}, which we will use. 
We could allow the density $\rho|_I$ to depend on the initial
position $X_0$ and momentum $P_0$, but then our observables need to conditional expected values.
 An example of a hitting surface is the co-dimension one surface
where the first component $X_{11}$ in $X_1=(X_{11},X_{12},X_{13})$  is equal to its initial value 
$X_{11}(0)$. The dynamics does not always have such a hitting surface: for instance if  all
particles are close initially and then are scattered away from each other, as in an explosion, no co-dimension
one hitting surface exists.

\section{Molecular dynamics approximation of Schr\"odinger observables}\label{sec_hj}
A numerical computation of an approximation to 
$\sum_{k}\int_{  \tset^{3N} }\LPROD{\phi_k}{A\phi_k}\, dX$ has the main ingredients:
\begin{itemize}
  \item[(1)] to approximate the exact characteristics by molecular dynamics characteristics \eqref{hj_first},
  \item[(2)] to discretize the  molecular dynamics equations, and 
  \item[(3a)] if $\rho\big|_I$ is  an inflow-density, to introduce quadrature in the number of characteristics, or
  \item[(3b)] if $\rho\big|_I$ is  a return-density, to replace the ensemble average by a time average 
              using the property \eqref{ergod_limit}. 
\end{itemize}
This section presents a derivation of the approximation error in the step (1) in the case of
a return density and comments on the time-discretization of step (2) treated in Section~\ref{md_sim}.
The third and fourth
discretization steps, which are not described here, are studied,
for instance, in \REFS{cances,lebris_hand,lebris}.

\subsection{The Born-Oppenheimer approximation error}\label{BO_stycke}
This section states our main result of molecular dynamics approximating Schr\"odinger
observables. We formulate it using the assumption of the Born-Oppenheimer property
\begin{equation}\label{bo_assump}
   \|\tpsi_t-\Psi_{\BO}(X_t)\|_{L^2(dx)}=\BIGO(M^{-1/2})\COMMA \quad \mbox{uniformly in $t$.}
\end{equation}
This assumption is then proved in Lemma~\ref{born_oppen_lemma}
based on a setting with a spectral gap.

\noindent
{\it The spectral gap condition.}
The electron eigenvalues $\{\lambda_k\}$
satisfy,  for some positive $c$, the spectral gap condition
\begin{equation}\label{gap_c1}
   \inf_{k\ne 0,\ Y\in D} |\lambda_k(Y)-\lambda_0(Y)|>c\COMMA
\end{equation}
where $D:=\{X_{\SCH}(t) \SEP t\ge 0\}\cup\{X_{\BO}(t) \SEP t\ge 0\}$ is the set of all nuclei positions
obtained from the Schr\"odinger characteristics $X=X_{\SCH}$ in Theorem~\ref{thr:schrodinger-hamiltonian_first} and from
the Born-Oppenheimer dynamics $X=X_{\BO}$ in \eqref{bo_stated}, for all considered initial data.

\begin{theorem}\label{bo_thm}
  Assume that the phase functions $\theta_{\SCH}$ and $\theta_{\BO}$
  are smooth solutions to the eikonal equations \eqref{theta_eq} and \eqref{bo_ham_first}
  and that the Born-Oppenheimer property \eqref{bo_assump} holds,
  then the zero-order Born-Oppenheimer dynamics \eqref{bo_stated}, 
  assumed to have the ergodic limit \eqref{ergod_limit} and
   bounded hitting times $\tau$ in \eqref{hitting}, \eqref{hitting_remark} and
  \eqref{hitting_remarken}, approximates time-independent
  Schr\"odinger observables, generated by Theorem~\ref{thr:schrodinger-hamiltonian_first} or the caustic case  
  in Section~\ref{sec:caustic_general},
  with error bounded by $\BIGO(M^{-1+\delta})$
  \begin{equation}\label{bo_estimate}
          \int_{\tset^{3N}} g(X)\rho_{\BO}(X)\, dX
               =\int_{\tset^{3N}} g(X)\rho_{\SCH}(X)\, dX + \BIGO(M^{-1+\delta})\COMMA\; \mbox{ for any $\delta>0$.}
\end{equation}
\end{theorem}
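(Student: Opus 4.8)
The plan is to compare the two densities $\rho_{\SCH}$ and $\rho_{\BO}$ by regarding the Schr\"odinger Hamiltonian $H_{\SCH}$ as a perturbation of the Born-Oppenheimer Hamiltonian $H_{\BO}$, and then to control how this perturbation propagates. First I would note that by Theorem~\ref{thr:schrodinger-hamiltonian_first} the Schr\"odinger density can be written as $\rho_{\SCH} = \LPROD{\hat\psi}{\hat\psi}\,|\hat G|^{-2}/\int \LPROD{\hat\psi}{\hat\psi}\,|\hat G|^{-2}\,dX$, while $\rho_{\BO} = |\hat G_{\BO}|^{-2}/\int |\hat G_{\BO}|^{-2}\,dX$. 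Using the Born-Oppenheimer property \eqref{bo_assump}, which gives $\LPROD{\hat\psi}{\hat\psi} = 1 + \BIGO(M^{-1})$ uniformly (this will actually require bootstrapping to $\BIGO(M^{-1})$ from the $L^2$-estimate, presumably done in Lemma~\ref{born_oppen_lemma}), the problem reduces to estimating the difference between the geometric weights $|\hat G|^{-2}$ and $|\hat G_{\BO}|^{-2}$, equivalently between the two potentials: by \eqref{V_0_definition} and the eigenvalue relation, $V_0 - \lambda_0 = \LPROD{\hat\psi}{(\VOPER - \lambda_0)\hat\psi}/\LPROD{\hat\psi}{\hat\psi}$, and since $\hat\psi$ is $\BIGO(M^{-1/2})$-close to $\Psi_{\BO}$ in $L^2$ and $(\VOPER - \lambda_0)\Psi_{\BO} = 0$, this is $\BIGO(M^{-1})$ pointwise. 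So the two eikonal equations differ by $\BIGO(M^{-1})$ in their potential terms.

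\textbf{From potential perturbation to density perturbation.} The key difficulty is that a pointwise $\BIGO(M^{-1})$ difference in the potentials does not immediately give an $\BIGO(M^{-1})$ difference in $\theta$, $\GRADX\theta$, or $\Delta_X\theta$, because these quantities are obtained by integrating the Hamiltonian system \eqref{hj_first} and the Liouville equation \eqref{liouville_form} along characteristics, and the integration time is \emph{infinite} — this is exactly the $C_t$-unboundedness obstruction flagged in the introduction for the Egorov-based approach. The resolution, as emphasized repeatedly in the excerpt, is to use the time-independent structure: the characteristics return to the co-dimension-one hitting set $I$, so the infinite-time evolution decomposes into a sequence of finite-time excursions from $I$, each of bounded length $\tau$ (by the bounded-hitting-time hypothesis in \eqref{hitting}). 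On each such excursion, Gr\"onwall applied to the difference of the Hamiltonian flows and to the variational equations \eqref{p_xx_eq} for $\partial_X P$ and $\partial_{XX}P$ gives that the discrepancy accumulated over one excursion is $\BIGO(M^{-1})$ times a constant depending only on $\tau$ and on $C^2$-bounds of $V_0$ and $\lambda_0$ over $D$ (finite by the spectral gap condition \eqref{gap_c1} and smoothness). Then I would invoke the ergodic/equilibrium-sampling structure \eqref{ergod_limit}: because both observables are expressed as time averages against the respective return-densities on $I$, and the return map on $I$ induced by $H_{\SCH}$ is an $\BIGO(M^{-1})$-perturbation of the one induced by $H_{\BO}$, a stability estimate for the invariant density under perturbation of a (ergodic) map — this is the Hamilton-Jacobi hitting-problem stability referred to in Section~\ref{pert_ham} — yields $\|\rho_{\SCH}|_I - \rho_{\BO}|_I\| = \BIGO(M^{-1+\delta})$, the loss of $M^\delta$ coming from the fact that the perturbation-to-invariant-density stability constant may grow (e.g. like $\log M$ or a small power) with the mixing time, hence the $\delta$.

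\textbf{Assembling the estimate.} With $\rho_{\SCH}$ and $\rho_{\BO}$ both $\BIGO(M^{-1+\delta})$-close in the relevant norm — controlled on $I$ by the return-map stability and propagated to all of $\tset^{3N}$ by the finite-excursion Gr\"onwall bounds together with the conservation-of-mass representations \eqref{density_fact} and \eqref{r_s_dens} — I would finish by writing
\begin{equation*}
  \int_{\tset^{3N}} g(X)\rho_{\BO}(X)\,dX - \int_{\tset^{3N}} g(X)\rho_{\SCH}(X)\,dX
  = \int_{\tset^{3N}} g(X)\bigl(\rho_{\BO}(X) - \rho_{\SCH}(X)\bigr)\,dX,
\end{equation*}
and bounding the right-hand side by $\|g\|_\infty \,\|\rho_{\BO} - \rho_{\SCH}\|_{L^1} = \BIGO(M^{-1+\delta})$. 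One must also absorb here the contribution of the $\LPROD{\hat\psi}{\hat\psi} = 1 + \BIGO(M^{-1})$ normalization discrepancy and, in the caustic case, replace the single-phase representation by the WKB-Fourier-integral solution of Section~\ref{sec:caustics}, using that away from the (measure-zero, suitably non-degenerate) caustic set the stationary-phase estimate \eqref{obser_decay} decouples the modes up to $\BIGO(M^{-1})$ so the same argument applies modewise. \emph{The main obstacle} I anticipate is precisely the middle step: turning the pointwise $\BIGO(M^{-1})$ Hamiltonian discrepancy into an $\BIGO(M^{-1+\delta})$ bound on the invariant return-density, since this requires a quantitative stability statement for the ergodic average under perturbation of the return map — one cannot merely use Gr\"onwall over infinite time, and the whole point of the time-independent hitting-problem reformulation is to make this step go through with only the stated bounded-hitting-time and ergodicity assumptions.
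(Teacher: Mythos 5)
Your high-level outline is aimed in the right direction: you compare $H_{\SCH}$ and $H_{\BO}$ as $\BIGO(M^{-1})$-perturbations (via $V_0-\lambda_0=\BIGO(M^{-1})$ from Lemma~\ref{born_oppen_lemma}), you reduce the infinite-time problem to finite excursions from a hitting set, and you finish with an $L^\infty\to L^1$ bound on $\rho_{\BO}-\rho_{\SCH}$. However, the mechanism you propose in the middle step is genuinely different from the paper's and contains the gap you yourself flag as ``the main obstacle.'' The paper does \emph{not} estimate the perturbation of a return map and then invoke a stability-of-invariant-densities result for ergodic maps --- that would require hyperbolicity or structural-stability assumptions nowhere present. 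Instead it works directly at the level of the Hamilton--Jacobi value function: the comparison estimate \eqref{theta_alfa} (proved in Remark~\ref{hj_stab} via the Pontryagin inequality and the HJ equation, not via Gr\"onwall on the characteristic flow) sandwiches $\theta-\tilde\theta$ between boundary terms on $I$ plus a time-integral of $H-\tilde H$ along one excursion. The crucial trick is then the choice of hitting set $I:=\{X\SEP\theta(X)=\tilde\theta(X)\}$ (equation \eqref{hitting_remark}), which makes the boundary term vanish identically, so the bounded-hitting-time hypothesis gives $\|\theta-\tilde\theta\|_{L^\infty}=\BIGO(M^{-1})$ directly with no accumulation over successive excursions. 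This is precisely the step your Gr\"onwall-plus-ergodic-stability plan leaves open.

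Two further discrepancies worth noting. First, the $M^\delta$ loss: you attribute it to growth of a mixing-time constant, but in the paper it arises much more concretely from interpolation, passing from $\|\theta-\tilde\theta\|_{L^\infty}=\BIGO(M^{-1})$ together with $M$-uniform bounds on higher derivatives of $\theta$ and $\tilde\theta$ to $\|\LAP_X\theta-\LAP_X\tilde\theta\|_{L^\infty}=\BIGO(M^{-1+\delta})$ in \eqref{theta_der_stab}; no mixing-time argument appears. Second, the paper controls the geometric weight $|G|^{-2}$ not through the variational equations \eqref{p_xx_eq} for $\partial_XP,\partial_{XX}P$, but by writing $\log|G|^{-2}$ itself as the solution of a second Hamilton--Jacobi equation \eqref{G_HJ} and rerunning the same value-function stability argument with the second hitting set \eqref{hitting_remarken}, yielding \eqref{g_felet}. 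Your route via \eqref{p_xx_eq} and Liouville is morally similar but would again run into the accumulation-over-excursions problem that the HJ framework sidesteps. So: the skeleton is right and several ingredients (normalization via Lemma~\ref{born_oppen_lemma}, the final $L^1$ bound, the modewise treatment of caustics) match the paper, but the core analytic lemma you need is the Pontryagin-type comparison \eqref{theta_alfa} with the vanishing-boundary hitting set, not a dynamical-systems stability theorem, and the $\delta$ is an interpolation loss, not a mixing loss.
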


The proof is given in Sections~\ref{born_oppen} and \ref{sec:caustic_general}. 

\subsection{Why do symplectic numerical simulations of molecular dynamics work?}\label{md_sim}
The derivation of the approximation error for the Born-Oppenheimer dynamics, 
in Theorem~\ref{bo_thm}, also allows
to study perturbed systems.  For instance, the perturbed 
Born-Oppenheimer dynamics 
\[
   \begin{split}
      \dot X_t &= P_t + \GRADP H^{\epsilon}(X_t,P_t)\\
      \dot P_t &=-\GRADX\lambda_0(X_t) -\GRADX H^\epsilon(X_t,P_t)\COMMA
   \end{split}
\]
generated
from a perturbed Hamiltonian $H_{\BO}(X,P)+ H^\epsilon(X,P)=E$,
with the perturbation satisfying
\begin{equation}\label{stab_h_d}
   \|H^\epsilon\|_{L^\infty} \le \epsilon \quad \mbox{ for some $\epsilon\in (0,\infty)$}
\end{equation}
yields through \eqref{H_felet} and \eqref{g_felet} an additional error term $\BIGO(\epsilon)$
to the approximation of observables in \eqref{bo_estimate}. So called symplectic numerical methods are 
precisely those that can
be written as perturbed Hamiltonian systems, see \REF{ms}, and consequently we have a method 
to precisely analyze their numerical error
by combining an explicit construction of $H^\epsilon$ 
with the stability condition \eqref{stab_h_d}
to obtain $\BIGO\big((M^{-1} +\epsilon)^{1-\delta}\big)$ accurate approximations, provided
the corresponding phase function has bounded second difference quotients.
The popular St\"ormer-Verlet method  is symplectic and the positions $X$ coincides with those of
the symplectic Euler method,  for which $H^\epsilon$ is explicitly constructed in \REF{ms}
with  $\epsilon$ proportional to the time step. The construction in \REF{ms} is 
not using the modified equation
and formal asymptotics, instead a piecewise linear extension of the solution generates $H^\epsilon$.

\section{Analysis of the molecular dynamics approximation}\label{born_oppen}\label{sec:wkb_analysis}

Before we proceed with the analysis of the approximation error we motivate our results by a significantly simpler
case of a system {\it without electrons}.  We  use the densities \eqref{r_s_dens} and \eqref{r_bo_dens}
and we show heuristically how the characteristics can be used 
to estimate the difference $\rho_{\SCH}-\rho_{\BO}$, leading to
$\BIGO(M^{-1})$ accurate Born-Oppenheimer approximations of Schr\"odinger observables
\[
  \int g(X)\underbrace{\rho_{\SCH}(X)}_{\LPROD{\Phi}{\Phi}}\,dX=\int g(X)\rho_{BO}(X)\,dX +\BIGO(M^{-1})\PERIOD
\]

In the special case of no electrons,  the dynamics of $X$ does not depend on $\tpsi$
and therefore $X_{\BO}=X_{\SCH}=X$ and consequently $G_{\BO}=G_{\SCH}$.
The difference $\tpsi_{\SCH}-\tpsi_{\BO} $ can be
understood from iterative approximations of \eqref{schrod_first}
\begin{equation}\label{hat_noel}
  \frac{\Iunit}{M^{1/2}} \dot{\tpsi}_{k+1} -(\VOPER-V_0)\tpsi_{k+1}=
              \frac{1}{2M} G\Delta_{X}(G^{-1}\tpsi_k)
\end{equation}
with $\psi_0=0$.
Then $\tpsi_{\BO}=\tpsi_1$ is the Born-Oppenheimer approximation
and formally we have the iterations approaching the
full Schr\"odinger solution $\tpsi_k\rightarrow \tpsi_{\SCH}$ as $k\rightarrow \infty$.

In the special case of no electrons, there holds $\VOPER=V_0$, thus the transport equation
$\Iunit\dot{{\tpsi}}_1=0$ has constant solutions. We  let $\tpsi_1=1$ and
then $\tpsi_2-\tpsi_1$ is imaginary with its absolute value bounded by $\BIGO(M^{-1/2})$.
We write the iterations of $\tpsi_k$ by integrating \eqref{hat_noel} as the linear mapping
\[
  \tpsi_{k+1}=1 +\Iunit M^{-1/2}\hat\SOPER(\tpsi_k)
             =\sum_{l=0}^{k} \Iunit^l M^{-l/2}\hat\SOPER^l(\tpsi_1)\COMMA
\]
which formally shows that 
\[  
  |\tpsi_{\SCH} |^2=|\tpsi_1|^2 + 2\REAL\LPROD{\tpsi_{\SCH}-\psi_1}{\tpsi_1} + |\tpsi_{\SCH}-\tpsi_1|^2
                   =1+\BIGO(M^{-1})\PERIOD
\]
Consequently this special Born-Oppenheimer density satisfies 
\begin{equation}\label{g_dens}
   \rho_{\BO}=\underbrace{G_{\SCH}^{-2} \LPROD{\tpsi_{\SCH}}{\tpsi_{\SCH}}}_{=\rho_{\SCH}}+\BIGO(M^{-1})\COMMA
\end{equation}
since $G_{\BO}=G_{\SCH}$ and  $X$ do not depend on $\tpsi$. 

In the general case with electrons and a spectral gap,  we show in  Lemma~\ref{born_oppen_lemma}
that there is a solution $\tpsi_{\SCH}$  satisfying 
\begin{equation}\label{alfa_ekv}
   \|\tpsi_{\SCH}-\Psi_{\BO}\|_{L^2(dx)}= \BIGO(M^{-1/2})\COMMA
\end{equation}
for the electron eigenfunction $\Psi_{\BO}$, satisfying 
\[ 
   \VOPER(\cdot,X)\Psi_{\BO}(\cdot,X)=\lambda_{0}(X)\Psi_{\BO}(\cdot,X)
\] 
and the eigenvalue $\lambda_{0}(X)\in\rset$ with a (fixed) nuclei position $X$. Then 
the state $\psi_1$ equal to a constant, in the case of no electrons,
corresponds to the electron eigenfunction $\Psi_{\BO}$ in the case with electrons present.
In the general case the $X$ dynamics for the Schr\"odinger and the Born-Oppenheimer dynamics are not
the same, but we will show that \eqref{alfa_ekv}
implies that the Hamiltonians $H_{\SCH}$ and $H_{\BO}$ 
are $\BIGO(M^{-1})$ close.
Using stability of Hamilton-Jacobi equations, the phase functions
$\theta_{\SCH}$ and $\theta_{\BO}$ are then also close in the
maximum norm, which, combined with an assumption of
smooth phase functions, show that $|G_{\SCH}-G_{\BO}|=\BIGO(M^{-1+\delta})$ for any $\delta>0$.
Lemma~\ref{born_oppen_lemma} also shows that 
$|\LPROD{\tpsi_{\SCH}}{\tpsi_{\SCH}} -1|= \BIGO(M^{-1})$ and consequently the density bound 
$|\rho_{\SCH}-\rho_{\BO}|=\BIGO(M^{-1+\delta})$ holds.
To obtain the estimate \eqref{alfa_ekv} the important new property, compared to no electrons, 
is to use oscillatory cancellation in directions orthogonal to $\Psi_{\BO}$.

\subsection{Continuation of the construction of the solution operator}\label{start_sec2}

This section continues the construction of the solution operator started in Section~\ref{start_sec}.
Assume for a moment that $\tilde \VOPER$ is  independent of $\tau$. 
Then the solution to \eqref{new_psi_eq}
can be written
as a linear combination of the two exponentials
\[
   a\EXP{\Iunit\tau\AOPER_+} + b\EXP{\Iunit\tau \AOPER_-}
\]
where the two characteristic roots are the operators
\[
  \AOPER_\pm=(P_1^1)^2\left(-1\pm (1-2(P_1^1)^{-2}\tilde \VOPER)^{1/2}\right)\PERIOD
\]
We see that $\EXP{\Iunit\tau\AOPER_-}$ is a highly oscillatory solution on the fast $\tau$-scale with 
\[
  \lim_{P_1^1\to\infty} \frac{1}{(P_1^1)^2}\AOPER_- = -2\mathrm{Id} \COMMA
\]
while 
\begin{equation}\label{alfa_1_bound}
  \lim_{P_1^1\to\infty}\AOPER_+= -\tilde \VOPER \PERIOD
\end{equation}
Therefore we chose initial data 
\begin{equation}\label{tau_const}
   \Iunit\dot{\tpsi}|_{\tau=0}=-\AOPER_+\tpsi|_{\tau=0}
\end{equation}
to have $b=0$, which eliminates the fast scale, and  the
limit $P_1^1\to\infty$  determines the solution by
the Schr\"odinger equation
\[
  \Iunit\dot{ \tpsi}=\tilde \VOPER\tpsi\PERIOD
\]
The next section presents an analogous  construction for the
slowly, in $\tau$, varying  operator $\tilde \VOPER$.

\subsubsection{Spectral decomposition}\label{spekral_decomp}
Write \eqref{new_psi_eq} as the first order system
\[
  \begin{split}
    \Iunit \dot{\tpsi} &= \pi \\
    \Iunit \dot\pi     &=-2(P_1^1)^2 (\tilde \VOPER\tpsi-\pi) \COMMA
  \end{split}
\]
which for $\bar\psi:=(\tpsi,\pi)$ takes the form
\[
  \dot{\bar\psi}=\Iunit \BOPER\bar\psi\COMMA\;\;\; 
  \BOPER:= \left(\begin{array}{cc}
           0                       & -1\\
           2(P_1^1)^2\tilde \VOPER & -2(P_1^1)^2\\
                 \end{array}\right)\COMMA
\]
where the  eigenvalues $\Lambda_\pm$ , right eigenvectors $\QOPER_\pm$ and left eigenvectors $\QOPER^{-1}_\pm$ of 
the real ``matrix'' operator $\BOPER$ are
\[
\begin{split}
  \Lambda_\pm &:= (P_1^1)^2\left(-\ID \pm \left(\ID-2 (P_1^1)^{-2}\tilde\VOPER\right)^{1/2}\right)\COMMA\\
          \QOPER_+ &:=\left(\begin{array}{c}
                                 \ID \\
                        -\Lambda_+\\
                 \end{array}\right)\COMMA \;\;\;
          \QOPER_- :=\left(\begin{array}{c}
                         -\Lambda_-^{-1}\\
                                       \ID \\
                       \end{array}\right)\COMMA  \\
          \QOPER_+^{-1} &:={(\ID-\Lambda_+\Lambda_-^{-1})^{-1}}
                     \left(\begin{array}{c}
                                          \ID\\
                             \Lambda_-^{-1}\\
                     \end{array}\right)\COMMA\;\;\;
          \QOPER_-^{-1}  :={(\ID-\Lambda_+(\Lambda_-)^{-1})^{-1}}
                     \left(\begin{array}{c}
                             \Lambda_+\\
                                     \ID \\
                           \end{array}\right)\PERIOD
\end{split}
\]
We see that $\lim_{P_1^1\to\infty}\Lambda_+=-\tilde \VOPER$ and 
$\lim_{P_1^1\to\infty}(P_1^1)^{-2}\Lambda_-=-2\ID$.
The important property here is that the left eigenvector limit
$\lim_{P_1^1\to\infty} \QOPER_+^{-1}=(\ID,0) $ is constant, independent of $\tau$,
which implies that the $\QOPER_+$ component $\QOPER_+^{-1}\bar\psi=\tpsi$ decouples.
We obtain in the limit $P_1^1\to\infty$ the time-dependent Schr\"odinger equation
\[
\begin{split}
    \Iunit\dot{\tpsi}(\tau)&=\Iunit \frac{d}{d\tau}( \QOPER_+^{-1}{\bar\psi}_\tau)
                            =\Iunit \QOPER_+^{-1}\frac{d}{d\tau}{\bar\psi}_\tau  
                            ={}-\QOPER_+^{-1} \BOPER_\tau \bar\psi_\tau    \\
                           &={}-\Lambda_+(\tau) \QOPER_+^{-1} \bar\psi_\tau     
                            ={}-\Lambda_+(\tau) \tpsi(\tau)                  
                            =\tilde \VOPER_\tau \tpsi(\tau)\COMMA             \\ 
\end{split}
\]
where the operator $\tilde \VOPER_\tau$ depends on $\tau$ and $(x,X_0)$, and we define
the solution operator $\SOPER$
\begin{equation}\label{psi_evolution}
   \tpsi(\tau)=\SOPER_{\tau,0}\tpsi(0)\PERIOD
\end{equation}
As in \eqref{tau_const} we can view this as choosing special initial data for $\tpsi(0)$.
From now on we only consider such data.

The operator $\tilde \VOPER$ can be symmetrized 
\begin{equation}\label{v_sym}
  \bar\VOPER_\tau:=G_{\tau}^{-1} \tilde \VOPER_\tau G_{\tau}
   = (\VOPER-V_0)_\tau - \frac{1}{2M} \sum_{j} \Delta_{X^j_*},
\end{equation}
with real eigenvalues  $\{\check\lambda_m\}$ and orthonormal eigenvectors $\{\zeta^m\}$ in $L^2(dx\,dX_*)$,
satisfying 
\[
  \bar \VOPER_\tau \zeta^m(\tau)=\check\lambda_m(\tau) \zeta^m(\tau)\PERIOD
\]
Therefore $\tilde \VOPER_\tau$ has the same eigenvalues and the eigenvectors
$\bar\zeta^m:=G_\tau \zeta^m$, which establishes  the spectral representation
\begin{equation}\label{spectrum}
  \tilde \VOPER_\tau \tpsi(\cdot,\tau,\cdot)=\sum_m \check\lambda_m(\tau)
         \int_{\tset^{3N-1} }\LPROD{\tpsi(\cdot,\tau,\cdot)}{ \bar \zeta^m}  G_\tau^{-2} dX_*\,\bar \zeta^m(\tau)\PERIOD
\end{equation}
We note that the weight $G^{-2}$ on the co-dimension one surface $\tset^{3N-1}$
appears  precisely because the operator $\tilde \VOPER$ is symmetrized by $G^{-2}$
and the weight $G^{-2}$ corresponds to the Eulerian-Lagrangian change of coordinates \eqref{liouville_form}
\begin{equation}\label{g_vikt}
    \int_{\tset^{3N-1} }\LPROD{\tpsi}{ \bar \zeta^m } G_\tau^{-2}\, dX_*
   =\int_{\tset^{3N-1} }\LPROD{\tpsi}{\bar \zeta^m}\, dX_0\PERIOD
\end{equation}
The existence of the orthonormal set of eigenvectors and real eigenvalues
makes the operator $\tilde \VOPER$  self-adjoint 
in the Lagrangian coordinates and hence the solution operator $\SOPER$
becomes unitary in the Lagrangian coordinates.

 \subsection{Stability from perturbed Hamiltonians}\label{pert_ham}
In this section we derive error estimates
of the weight functions $G$
when the corresponding Hamiltonian system is perturbed.
To derive the stability estimate
we consider the 
Hamilton-Jacobi equation  \[H(\GRADX\theta(X),X)=0\] in an optimal control
perspective with the corresponding Hamiltonian system
\[
  \begin{split}
    \dot X_t &=  \GRADP H(P_t,X_t)   \\ 
    \dot P_t &= -\GRADX H(P_t,X_t)\PERIOD
  \end{split}
\]
We define  the ``value'' function
\[
  \theta(X_0)= \theta(X_t) - \int_0^t h(P_s,X_s)\, ds\COMMA
\]
where the ``cost'' function defined by
\[
   h(P,X):=  \EPROD{P}{\GRADP  H(P,X)}-H(P,X)
\]
satisfies the Pontryagin principle (related to the Legendre transform)
\begin{equation}\label{pontry}
  H(P,X)= \sup_Q \big( \EPROD{P}{ \GRAD_Q H(Q,X)} - h(Q,X)\big)\PERIOD
\end{equation}
Let $\theta\Big|_I$ be defined by the hitting problem
\[
  \theta(X_0)=\theta(X_\tau) -\int_0^\tau  h(P_s,X_s)\, ds
\]
using the hitting time $\tau$ on the return surface $I$
\begin{equation}\label{hitting}
   \tau:= \inf\{ t \SEP  X_0\in I,\, X_t\in I\, \& \, t>0\}\PERIOD
\end{equation}
For a perturbed Hamiltonian $\tilde H$
and its dynamics $(\tilde X_t,\tilde P_t)$ we define analogously the value function $\tilde\theta$ and
the cost function $\tilde h$.

We can think of the difference $\theta-\tilde\theta$ as composed by a perturbation of the boundary
data (on the return surface $I$) and perturbations of the Hamiltonians. 
The difference  of the value functions due to the perturbed Hamiltonian satisfies the stability estimate
\begin{equation}\label{theta_alfa}
\begin{split}
   \theta(X_0)-\tilde\theta(X_0) & \ge  \theta(\tilde X_{\tilde\tau})-\tilde\theta(\tilde X_{\tilde\tau}) +
         \int_0^{\tilde\tau} (H-\tilde H)\left(\GRAD_X\theta(\tilde X_t),\tilde X_t\right)\, dt \\
   \theta(X_0)-\tilde\theta(X_0) &\le \theta( X_{\tau})-\tilde\theta( X_{\tau}) +
         \int_0^{\tau} (H-\tilde H)\left(\GRAD_X\tilde\theta(X_t), X_t\right)\, dt 
\end{split}
\end{equation}
with a difference of the Hamiltonians evaluated along the same solution path.
This result follows by differentiating the value function along a path and
using the Hamilton-Jacobi equations, see Remark~\ref{hj_stab} and \REF{css}.

We assume that
\begin{equation}\label{H_felet}
  \sup_{(P,X)=(\GRADX\tilde\theta(X_t), X_t),\,  (P,X)=(\GRADX\theta(\tilde X_t), \tilde X_t)}
           |(H-\tilde H)(P,X)|=\BIGO(M^{-1})\COMMA
\end{equation}
which is verified in \eqref{H_verif} for Schr\"odinger and Born-Oppenheimer Hamiltonians.
We choose the hitting set as 
\begin{equation}\label{hitting_remark}
   I:=\{X\in\tset^{3N}\SEP \theta(X)=\tilde\theta(X)\}
\end{equation}
on which the two phases coincide.
Now assume that $I$ forms a codimension one set in $\tset^{3N}$
and that the maximal hitting time $\tau$ for characteristics starting on $I$ is bounded;
the fact that $I$ is a codimension one set holds, for instance, locally if 
$|\GRADX(\theta-\tilde\theta)|$ is nonzero. In fact, 
it is sufficient to assume that there exists a function $\gamma:\tset^{3N}\to\rset$,
satisfying $\gamma=\BIGO(M^{-1})$, and
such that the set 
$I:=\{X\in\tset^{3N}\SEP\theta(X)-\tilde\theta(X)=\gamma(X)\}$
 is a codimension one set with bounded hitting times.
Then the representation \eqref{theta_alfa}, for any time $t$ replacing $\tau$ and $\tilde\tau$,
together with the stability of the Hamiltonians \eqref{H_felet}
and the initial data $(\theta-\tilde\theta)|_I=0$ obtained from \eqref{hitting_remark} imply that
\begin{equation}\label{teta_uppskatt}
  \|\theta -\tilde\theta\|_{L^\infty} = \BIGO(M^{-1})\COMMA
\end{equation}
provided the maximal hitting time $\tau$ is bounded,
which we assume.

When the value functions $\theta$ and $\tilde\theta$ are smoothly differentiable in $X$
with  derivatives bounded uniformly in $M$, 
the stability estimate \eqref{theta_alfa} implies that also the difference of the second derivatives
has the bound
\begin{equation}\label{theta_der_stab}
  \| \LAP_{X}\theta - \LAP_{X}\tilde\theta \|_{L^\infty}=\BIGO(M^{-1+\delta})\COMMA\;
   \mbox{ for any  $\delta>0$.}
\end{equation}

Our goal is to analyze the  density function $\rho=|G|^{-2}\LPROD{\tpsi}{\tpsi}$ with $G$ defined in \eqref{G_first}.
The Born-Oppenheimer approximation \eqref{bo_assump} yields 
$\LPROD{\tpsi}{\tpsi}=1+\BIGO(M^{-1})$
thus it remains to estimate the weight function $|G|^{-2}$. This weight function satisfies the 
Hamilton-Jacobi equation
\begin{equation}\label{G_HJ}
   H_G(\GRADX\log |G|^{-2}, X):=\EPROD{\GRADX\theta(X)}{ \GRADX \log |G|^{-2}} +
   \LAP_{X} \theta(X)=0\PERIOD
\end{equation}
The stability of Hamilton-Jacobi equations
can then be applied to \eqref{G_HJ}, as in \eqref{theta_alfa},
using now the hitting set
\begin{equation}\label{hitting_remarken}
   I:=\{X\in\tset^{3N}\SEP \log |G(X)|^{-2}= \log|\tilde G(X)|^{-2} \}
\end{equation}
and the assumption of bounded hitting times $\tau$ in the hitting problem, and we obtain
\begin{equation}\label{g_felet}
\|\log|G|^{-2}-\log|\tilde G|^{-2}\|_{L^\infty} \le C\| H_G- H_{\tilde G}\|_{L^\infty}=\BIGO(M^{-1+\delta})\PERIOD
\end{equation}
In this sense we will use that an $\BIGO(M^{-1})$ perturbation of the Hamiltonian yields an error
estimate of almost the same order for the difference of the corresponding densities $\rho-\tilde\rho$.

The Hamiltonians we use are
\[
\begin{split}
   H_{\SCH} &= \frac{|P|^2}{2} + \frac{\LPROD{\tpsi(X)}{ \VOPER(X)\tpsi(X)}}{\LPROD{\tpsi(X)}{\tpsi(X)}}-E\COMMA\\
   H_{\BO} & = \frac{|P|^2}{2} +  \lambda_0(X) -E\COMMA \\
\end{split}
\]
based on the cost functions
\[
\begin{split}
   h_{\SCH}  & = E + \frac{|P|^2}{2}  -\frac{\LPROD{\tpsi(X)}{\VOPER(X)\tpsi(X)}}{\LPROD{\tpsi(X)}{\tpsi(X)}}\COMMA\\
   h_{\BO}   & = E + \frac{|P|^2}{2} - \lambda_0(X)\PERIOD
\end{split}
\]
For the Born-Oppenheimer case the electron wave function is the eigenstate $\Psi_{\BO}$. The Born-Oppenheimer
approximation \eqref{bo_assump}, proved in Lemma \ref{born_oppen_lemma}, implies that 
\begin{equation}\label{H_verif}
\|  H_S- H_{BO}\|_{L^\infty}=\mathcal O(M^{-1})\COMMA
\end{equation} which verifies \eqref{H_felet}.

\begin{remark}\label{hj_stab}
This remark derives the stability estimate \eqref{theta_alfa}.
The definitions of the value functions imply
\begin{equation}\label{eq:err2}
  \begin{split}
    &\underbrace{\tilde \theta(\tilde X_{\tilde\tau}) 
    -\int_0^{\tilde\tau} \tilde h(\tilde P_t,\tilde X_t)\,dt}_{\tilde\theta(\tilde X_0)} - 
    \underbrace{ \left(\theta( X_\tau) -\int_0^\tau h(P_t, X_t)\,dt \right) }_{\theta(X_0)}\\
    &= -\int_0^{\tilde\tau} \tilde h(\tilde P_t,\tilde X_t) \,dt  +\theta(\tilde X_{\tilde \tau})
       -\underbrace{ \theta(X_0)}_{ \theta(\tilde X_0)} 
       +\tilde\theta(\tilde X_{\tilde \tau}) - \theta(\tilde X_{\tilde \tau})\\
    &= -\int_0^{\tilde\tau} \tilde h(\tilde P_t,\tilde X_t) \,dt  
       +\int_0^{\tilde\tau} d\theta(\tilde X_t)
       +\tilde\theta(\tilde X_{\tilde \tau}) - \theta(\tilde X_{\tilde \tau})\\
    &=\int_0^{\tilde\tau}\underbrace{-\tilde h(\tilde P_t,\tilde X_t) +
        \EPROD{\GRADX\theta (\tilde X_t)}{ \GRADP\tilde H(\tilde P_t,\tilde X_t)} }_{\le 
        \tilde H\left(\GRADX\theta(\tilde X_t), \tilde X_t\right) }\,dt
       +\tilde\theta(\tilde X_{\tilde \tau}) - \theta(\tilde X_{\tilde \tau})\\
    &\le \int_0^{\tilde\tau} (\tilde H-H)\left(\GRADX\theta(\tilde X_t), \tilde X_t\right)\,dt
       +\tilde\theta(\tilde X_{\tilde \tau}) - \theta(\tilde X_{\tilde \tau})\COMMA
  \end{split}
\end{equation}
where the Pontryagin principle \eqref{pontry} yields the inequality and
we use the Hamilton-Jacobi equation 
\[
  H(\GRADX \theta(\tilde X_t), \tilde X_t)=0\PERIOD
\]
To establish the lower bound we replace $\theta$ along with $\tilde X_t$ by
$\tilde\theta$ and $X_t$ and repeat the derivation above.
\end{remark}

\subsection{The Born-Oppenheimer approximation}\label{sec:bo}
The purpose of this section is to present a case when the Born-Oppenheimer approximation holds in
the sense that $\|\tpsi-\Psi_{\BO}\|_{L^2(dx)}$ is small.

 We know from Section~\ref{spekral_decomp} that the solution $\tpsi_t=\SOPER_{t,0}\tpsi_0$ is bounded
 in $L^2(dx\,dX)$, since $\SOPER$ is unitary in the Lagrangian coordinates.
 This unitary $\SOPER$ implies that the integral in the Lagrangian
 coordinates $\int_{\tset^{3N-1}} \LPROD{\tpsi_t}{\tpsi_t } \, d X_0$ is constant in time.
 We consider the co-dimension one set 
 \[
    I_{\tpsi}:= \{ X\in \rset^{3N} \SEP  \LPROD{\tpsi(X)}{\tpsi(X)}
              =\int_{\tset^{3N-1}} \LPROD{\tpsi(t,X_0)}{\tpsi(t,X_0)}\,  d X_0
                        /\int_{\tset^{3N-1}}\, d X_0\}\COMMA
\]
where the  point values of $\LPROD{\tpsi(X)}{\tpsi(X)}$
coincides with its $L^2$ average.  
We choose a time $t$ such that $X_t\in I_{\tpsi}$ and assume that the time $\tau^*$
it takes to hit $I_{\tpsi}$ the next time  is bounded, i.e.,
 \[
   \tau^*:=\inf\{ \tau\SEP X_{t}\in I_{\tpsi},\, \tau>0\  \& \ X_{t+\tau}\in I_{\tpsi}\}=\BIGO(1)\PERIOD
 \]
 We also assume that all functions of $X$ are smooth.

\begin{lemma}\label{born_oppen_lemma}
  Assume that $\Iunit\dot{\tpsi}=M^{1/2}\tilde \VOPER\tpsi$ holds,
  then there exists initial data for $\tpsi$
  such that the $L^2(dx)$ orthogonal decomposition $\tpsi=\bar\psi_0 \oplus\psi_0^\perp$, 
  where $\bar\psi_0=\alpha\Psi_{\BO}$
  for some $\alpha\in\mathbb C$ satisfies 
  \begin{equation}\label{omega_estimate}
    \begin{split}
       \frac{\|\psi_0^\perp(t)\|_{L^2(dx)}}{\|\bar\psi_0(t)\|_{L^2(dx)}}&=\BIGO(M^{-1/2})\\
       |\LPROD{\tpsi_t}{ \tpsi_t} - 1|&=\BIGO(M^{-1})\\
       \|\tpsi_t -\Psi_{\BO}(X_t)\|_{L^2(dx)}&=\BIGO(M^{-1/2})\\
    \end{split}
  \end{equation}
  uniformly in time $t$, provided the spectral gap condition \eqref{gap_c1} holds,
  the smoothness estimate \eqref{beta_est} is satisfied
  and the hitting time $\tau^*$ is bounded.
\end{lemma}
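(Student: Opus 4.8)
The plan is to exploit the spectral representation \eqref{spectrum} of $\tilde\VOPER_\tau$ in the Lagrangian coordinates, where the solution operator $\SOPER$ is unitary, and to track how the solution $\tpsi_\tau$ spreads onto the eigenvector $\bar\zeta^0$ corresponding to $\check\lambda_0$ (which is $\BIGO(M^{-1})$-close to the Born--Oppenheimer eigenvalue $\lambda_0$, since $\bar\VOPER_\tau=(\VOPER-V_0)_\tau-(2M)^{-1}\sum_j\Delta_{X_*^j}$ differs from $\VOPER-V_0$ by an $\BIGO(M^{-1})$ term and $V_0$ is itself $\BIGO(M^{-1})$-close to $\lambda_0$ under the gap condition) versus onto the orthogonal complement $\psi_0^\perp$ spanned by $\{\bar\zeta^m\}_{m\ge1}$. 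First I would choose the initial data $\tpsi_0$ to be exactly $\Psi_{\BO}(X_0)$ (suitably normalized) at a hitting point of $I_\tpsi$, together with the constraint \eqref{tau_const} that eliminated the fast scale; this makes $\alpha=1$ and $\psi_0^\perp=0$ initially, so the whole task is to bound the growth of $\psi_0^\perp$ over one excursion of length $\tau^*=\BIGO(1)$ in the fast time $\tau$, equivalently $\BIGO(M^{-1/2})$ in the slow time $t$.

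The key estimate is a Duhamel/variation-of-constants argument: writing $\tpsi=\bar\psi_0\oplus\psi_0^\perp$ and projecting the Schr\"odinger equation $\Iunit\dot\tpsi=M^{1/2}\tilde\VOPER_\tau\tpsi$ (in the $\tau$-scale, $\Iunit\partial_\tau\tpsi=\tilde\VOPER_\tau\tpsi$) onto the two subspaces, the component $\psi_0^\perp$ is forced only through the off-diagonal coupling of $\tilde\VOPER_\tau$, which comes entirely from the $\tau$-dependence of the eigenprojections — i.e. from $\partial_\tau\bar\zeta^m$. That coupling term is $\BIGO(1)$ in $\tau$ but, crucially, is separated from the diagonal evolution of $\psi_0^\perp$ (which runs at frequency $\check\lambda_m$) by the spectral gap $\check\lambda_m-\check\lambda_0>c>0$ inherited from \eqref{gap_c1}. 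Integrating by parts in $\tau$ against the oscillatory factor $e^{\Iunit\int(\check\lambda_m-\check\lambda_0)}$ — this is the "oscillatory cancellation in directions orthogonal to $\Psi_{\BO}$" promised at the end of Section~\ref{sec:wkb_analysis} — gains one power of the gap, i.e. a factor $c^{-1}$, and, after translating back to the slow time scale $\tau=M^{1/2}t$, yields $\|\psi_0^\perp\|_{L^2(dx)}=\BIGO(M^{-1/2})$ over the bounded excursion; here the smoothness hypothesis \eqref{beta_est} supplies the uniform $\mathcal C^1$-bound on the coupling coefficients needed to carry out the integration by parts. Uniformity in $t$ then follows because $\SOPER$ is unitary in the Lagrangian coordinates and $I_\tpsi$ is hit with bounded return time, so the estimate on one excursion can be iterated without accumulation: at each return to $I_\tpsi$ the quantity $\LPROD{\tpsi}{\tpsi}$ equals its spatial average by construction of $I_\tpsi$, which pins the normalization and prevents drift.

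From $\|\psi_0^\perp\|/\|\bar\psi_0\|=\BIGO(M^{-1/2})$ the second and third claims are essentially Pythagoras: $\LPROD{\tpsi_t}{\tpsi_t}=\|\bar\psi_0\|^2+\|\psi_0^\perp\|^2$, and since $\SOPER$ is unitary the total Lagrangian norm $\int_{\tset^{3N-1}}\LPROD{\tpsi_t}{\tpsi_t}\,dX_0$ is conserved and equal to its initial value $1$, so pointwise on $I_\tpsi$ (hence, by the smoothness and the bounded return time, everywhere up to the excursion error) $|\LPROD{\tpsi_t}{\tpsi_t}-1|=\BIGO(M^{-1})$; combining, $\|\bar\psi_0\|=1+\BIGO(M^{-1})$ so $\alpha=1+\BIGO(M^{-1})$ (after a phase choice) and $\|\tpsi_t-\Psi_{\BO}(X_t)\|_{L^2(dx)}\le\|\bar\psi_0-\Psi_{\BO}\|+\|\psi_0^\perp\|=\BIGO(M^{-1/2})$. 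The main obstacle I anticipate is the uniformity-in-time claim: a naive Duhamel estimate over $[0,t]$ would grow linearly in $t$, so the real content is the reduction to a single finite excursion via the hitting surface $I_\tpsi$ and the argument that the per-excursion errors do not compound — this is exactly where the time-independent (eigenvalue) formulation pays off, and where one must be careful that the renormalization at each hitting point is consistent with the conserved Lagrangian norm and does not secretly reintroduce the fast scale eliminated in \eqref{tau_const}.
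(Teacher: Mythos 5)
Your scheme captures the right first-order mechanism (projecting onto the $\Psi_{\BO}$ direction, using Duhamel, gaining $M^{-1/2}$ by oscillatory integration by parts against the gap), but it has a genuine gap at exactly the point you flag as the ``main obstacle'': the uniformity in $t$.

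You choose $\psi_0^\perp(0)=0$ and then hope to bound $\psi_0^\perp$ on each excursion of length $\tau^*$ and iterate. That iteration does accumulate: a single integration by parts over an interval of length $\tau^*$ produces boundary terms of size $\BIGO(M^{-1/2})$ at each endpoint, and after $k$ excursions you have $\BIGO(k\,M^{-1/2})$ unless the boundary contributions telescope. They do not telescope for free, and nothing about the hitting set $I_\tpsi$ suppresses them — $I_\tpsi$ is defined through the value of $\LPROD{\tpsi}{\tpsi}$, which controls the normalization, not the accumulation of $\psi_0^\perp$. The paper's resolution is to integrate by parts \emph{recursively} in \eqref{s_int_def} and then choose the initial data \emph{for $\psi_0^\perp$} to cancel the endpoint at $\sigma=0$, namely
\[
\bar\psi_0^\bot(0) = -\Big(\BTOPER\tilde\ROPER -\BTOPER\tfrac{d}{d\sigma}(\BTOPER\tilde\ROPER) + \dots\Big)\Big|_{\sigma=0}\COMMA
\]
so that the representation collapses to the closed, time-local formula $\psi_0^\perp(\tau)=-\sum_{n\ge 0}\BTOPER_0^n\ROPER_0(\tau)$. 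This is a pointwise-in-$\tau$ identity, so once the series converges the bound is automatically uniform in time, with no iteration over excursions at all. Setting $\psi_0^\perp(0)=0$, as you propose, is exactly the choice the paper is working to avoid. Relatedly, you read \eqref{beta_est} as a uniform $\mathcal C^1$ bound on coupling coefficients needed for one integration by parts, but in the paper it is the hypothesis $\|\BTOPER_0^n\ROPER_0(\tau)\|_{L^2(dx)}\to 0$, i.e.\ convergence of that series; that is a stronger, different assumption and it is what makes the infinite recursive integration by parts licit.

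Two smaller issues. First, you diagonalize with respect to the $\bar\zeta^m$, the eigenvectors of the full symmetrized operator $\bar\VOPER_\tau$ on $L^2(dx\,dX_*)$; the paper instead projects in $L^2(dx)$ onto $\Psi_{\BO}(X_\tau)$ and uses the electron eigenpairs $\{\lambda_k,\bar\psi_k\}$ of $\VOPER(\cdot,X)$ alone, verifying $\hat\VOPER^{-1}\approx(\VOPER-V_0)^{-1}$ up to $\BIGO(M^{-1})$ and invoking the gap \eqref{gap_c1} directly there (see \eqref{vtilde}). Your basis is not obviously wrong, but it hides the fact that the gap you actually have is for $\VOPER$, not for $\bar\VOPER$, and it makes the $\BIGO(M^{-1})$ bookkeeping harder. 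Second, the problem is nonlinear, since $V_0=\LPROD{\tpsi}{\VOPER\tpsi}/\LPROD{\tpsi}{\tpsi}$ enters $\tilde\VOPER$; the paper first runs the argument for a frozen $V_0$ and then closes a contraction using $V_0-\lambda_0=\BIGO(\|\psi_0^\perp\|_{L^2(dx)})$. Your proposal treats $\tilde\VOPER_\tau$ as a given time-dependent operator and never addresses this, so even the single-excursion bound is incomplete as stated.

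Your final ``Pythagoras'' step (deducing $|\LPROD{\tpsi_t}{\tpsi_t}-1|=\BIGO(M^{-1})$ from conservation of the Lagrangian norm plus the bounded return time to $I_\tpsi$, and then $\|\tpsi_t-\Psi_{\BO}(X_t)\|_{L^2(dx)}=\BIGO(M^{-1/2})$ by the triangle inequality) matches the paper's concluding step and is fine, once the $\psi_0^\perp$ bound is actually established.
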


\begin{proof}
  We consider the decomposition
  $\tpsi=\bar\psi_0\oplus\psi_0^\perp$,
  where $\bar\psi_0(\tau)$ is an eigenfunction of $\VOPER(X_\tau)$ in $L^2(dx)$,  
  satisfying $\VOPER(X_\tau)\bar \psi_0(\tau)=\lambda_0(\tau)\bar\psi_0(\tau)$
  for the eigenvalue $\lambda_0(\tau)\in\rset$. 
  This {\it ansatz} is motivated by the zero residual
  \begin{equation}\label{R_residual}
     \ROPER\tpsi:= \dot\tpsi +\Iunit M^{1/2}\tilde \VOPER\tpsi= 0 
  \end{equation}
  and the small residual for the eigenfunction
 \[\begin{split} %
     \LPROD{\PSHARP{(\dot{\bar\psi}_0)}}{\bar\psi_0} &=0 \\ 
     M^{1/2}\tilde \VOPER\bar\psi_0                  &=\BIGO(M^{-1/2})\COMMA 
  \end{split}\] %
  where 
  \begin{equation}\label{natural}
      w(X)=\LPROD{\Psi_{\BO}(X)}{w(X)} \Psi_{\BO}(X)\oplus \PSHARP{w(X)}
  \end{equation}
  denotes the orthogonal decomposition in the eigenfunction direction $\Psi_{\BO}$ and
  its orthogonal complement in $L^2(dx)$. We consider first  the linear operator $\ROPER$ 
  in \eqref{R_residual} with a given function $V_0$ and then we use a contraction 
  setting to show that $V_0=\LPROD{\tpsi}{ \VOPER\tpsi}/\LPROD{\tpsi}{\tpsi}$ also works since 
  $\|\bar\psi_0^\perp\|_{L^2(dx)}$ is small. 
  The orthogonal splitting  $\tpsi=\bar\psi_0\oplus\psi_0^\perp$ and the projection 
  $\PSHARP{(\cdot)}$ in \eqref{natural} imply
  \[
    \begin{split}
        0 &= \PSHARP{\left(\ROPER(\bar\psi_0+\psi_0^\bot)\right)}\\
          &= \PSHARP{\left(\ROPER(\bar\psi_0)\right)} + \PSHARP{\left(\ROPER(\psi_0^\bot)\right)}\\
          &= \PSHARP{(\ROPER\bar\psi_0)} + \dot{\psi_0}^\bot +\Iunit M^{1/2}(\VOPER-V_0)\psi_0^\bot + 
                                     \Iunit\PSHARP{\left(\frac{G M^{-1/2}}{2}\LAP_X(G^{-1}\psi_0^\bot)\right)}\COMMA
    \end{split}
  \] 
  where the last step  follows from the orthogonal splitting 
  \[
      \PSHARP{\left((\VOPER-V_0)\psi_0^\bot\right)}=(\VOPER-V_0)\psi_0^\bot
  \]
  together with the second order change in the subspace projection
  \[
     \psi_0^\bot(\tau+\Delta\tau)=\PSHARPT{\tau+\Delta\tau}{\left(\psi_0^\bot(\tau+\Delta \tau)\right)}= 
     \PSHARPT{\tau}{\left(\psi_0^\bot(\tau+\Delta \tau)\right)} + \BIGO(\Delta \tau^2)
  \]
  which yields $\PSHARP{(\dot\psi_0^\bot)}=\dot\psi_0^\bot$;
   here $\PSHARPT{\tau}{\cdot}$ denotes the projection on the orthogonal complement to the eigenvector $\bar\psi_0(\tau)$.
  To explain the second order change start with a function $v$ satisfying $\langle v,\Psi_{BO}(X_\tau)\rangle = 0$
  and $\Psi_{BO}(X_\sigma)=\Psi_{BO}(X_\tau) + \mathcal O(\Delta \tau)$ for $\sigma\in [\tau,\tau+\Delta \tau]$
  to obtain
  \[
  \begin{split}
  \Pi(\sigma)\big(\Pi(\tau+\Delta\tau)v-\Pi(\tau)v\big)&= 
  \Pi(\sigma)\Big(\langle v,\Psi_{BO}(X_{\tau})\rangle\Psi_{BO}(X_{\tau})
  -  \langle v,\Psi_{BO}(X_{\tau+\Delta\tau})\rangle\Psi_{BO}(X_{\tau+\Delta\tau})\Big)\\
  &=\Pi(\sigma) \mathcal O(\Delta\tau^2) 
  + \Pi(\sigma)\Big(\langle v,\mathcal O(\Delta\tau)\rangle \Psi_{BO}(X_\tau)\Big)\\
  & =\mathcal O(\Delta\tau^2) 
  + \mathcal O(\Delta\tau)\Big(\Psi_{BO}(X_\tau)-\langle\Psi_{BO}(X_\tau),\Psi_{BO}(X_\sigma)\rangle \Psi_{BO}(X_\sigma)\Big)\\
  &=\mathcal O(\Delta\tau^2).
  \end{split}
  \]
   Let $\widetilde \SOPER_{\tau,\sigma}$ be the solution operator from time $\sigma$ to $\tau$ for the generator 
  \[
     v\mapsto \Iunit M^{1/2}(\VOPER-V_0)v + \Iunit\PSHARP{\left(\frac{GM^{-1/2}}{2}\LAP_X(G^{-1}v)\right)}=:
      \Iunit M^{1/2}\hat \VOPER v\PERIOD
  \]
  Consequently, the perturbation $\psi_0^\bot$ can be determined from the projected residual
  \[
     \dot{\psi}_0^\bot =-\Iunit M^{1/2} \hat \VOPER \psi_0^\bot - \PSHARP{(\ROPER\bar\psi_0)}
  \]
  and we have the solution representation 
  \begin{equation}\label{r_int}
    \psi_0^\bot(\tau)= \widetilde \SOPER_{\tau,0} \psi_0^\bot(0)
                       -\int_0^\tau  \widetilde \SOPER_{\tau,\sigma} 
                         \PSHARP{\left(\ROPER\bar\psi_0(\sigma)\right)}\, d\sigma\PERIOD
  \end{equation}
  Integration by parts introduces the factor $M^{-1/2}$ we seek
  \begin{equation}\label{s_int_def}
  \begin{split}
          \int_0^\tau \widetilde \SOPER_{\tau,\sigma} \PSHARP\ROPER{\bar\psi_0(\sigma)}\, d\sigma&=
          \int_0^\tau \Iunit M^{-1/2}\frac{d}{d\sigma} 
                      (\widetilde \SOPER_{\tau,\sigma})\hat \VOPER^{-1}  \PSHARP\ROPER{\bar\psi_0(\sigma)}\, d\sigma\\
             &=\int_0^\tau \Iunit M^{-1/2}\frac{d}{d\sigma} \left(\widetilde \SOPER_{\tau,\sigma}\hat \VOPER^{-1}  
                    \PSHARP\ROPER{\bar\psi_0(\sigma)}\right)\, d\sigma\\
             &\qquad -\int_0^\tau \Iunit M^{-1/2} \widetilde \SOPER_{\tau,\sigma}\frac{d}{d\sigma}
                 \left(\hat \VOPER^{-1}(X_\sigma)  \PSHARP\ROPER{\bar\psi_0(\sigma)}\right)\, d\sigma \\
             &= \Iunit M^{-1/2} \hat \VOPER^{-1}  \PSHARP\ROPER{\bar\psi_0(\tau)}
                - \Iunit M^{-1/2} \widetilde \SOPER_{\tau,0}\hat \VOPER^{-1}  \PSHARP\ROPER{\bar\psi_0(0)}\\
             &\qquad -\int_0^t \Iunit M^{-1/2} \widetilde \SOPER_{\tau,\sigma}\frac{d}{d\sigma}
                \left(\hat \VOPER^{-1}(X_\sigma)  \PSHARP\ROPER{\bar\psi_0(\sigma)}\right)\, d\sigma\PERIOD\\
  \end{split}
  \end{equation}
  To analyze the integral in the right hand side we will use  the fact 
  \[
  \hat \VOPER^{-1}=\left(I +(\VOPER-V_0)^{-1}\left[\hat \VOPER-(\VOPER-V_0)\right]\right)^{-1}
                      (\VOPER-V_0)^{-1},\]
 which can be verified by multiplying both sides
 from the left by $I +(\VOPER-V_0)^{-1}\left[\hat \VOPER-(\VOPER-V_0)\right]$.
 A spectral decomposition in $L^2(dx)$, based on the electron 
  eigenpairs $\{\lambda_k,\bar\psi_k\}_{k=1}^\infty$ and satisfying 
  $\VOPER\bar\psi_k=\lambda_k\bar\psi_k$, 
  then implies
  \begin{equation}\label{vtilde}
  \begin{split}
      \hat \VOPER^{-1}\PSHARP(\ROPER\bar\psi_{0})
             &=\left(I +(\VOPER-V_0)^{-1}\left[\hat \VOPER-(\VOPER-V_0)\right]\right)^{-1}
                      (\VOPER-V_0)^{-1}\PSHARP(\ROPER\bar\psi_{0})\\
             &=\sum_{k\ne 0} \left(I +(\VOPER-V_0)^{-1}\left[\hat \VOPER-(\VOPER-V_0)\right]\right)^{-1}
                    (\lambda_k -V_0)^{-1}\psi_k \LPROD{\PSHARP(\ROPER\bar\psi_{0})}{ \psi_k}\\
             &=\sum_{k\ne 0}  (\lambda_k -V_0)^{-1} \psi_k \LPROD{\PSHARP(\ROPER\bar\psi_{0})}{\psi_k}+\BIGO(M^{-1})
  \end{split}
  \end{equation}
  which applied to the integral in the right hand side of \eqref{s_int_def}
  shows that $\|\bar\psi_0^\bot\|_{L^2(dx)}=\BIGO(M^{-1/2})$ on a bounded time interval,
  when the spectral gap condition holds and $\psi_k$ are smooth.

  The evolution on longer times requires an additional idea:
  one can integrate by parts recursively in \eqref{s_int_def} to obtain
\[
  \begin{split}
       \int_0^\tau \widetilde \SOPER_{\tau,\sigma} \PSHARP\ROPER{\bar\psi_0(\sigma)}\, d\sigma
       &=\left[\widetilde \SOPER_{\tau,\sigma}\Big(\BTOPER \tilde\ROPER -\BTOPER \frac{d}{d\sigma}(\BTOPER\tilde\ROPER) + 
               \BTOPER \frac{d}{d\sigma}
               \big(\BTOPER \frac{d}{d\sigma}(\BTOPER\tilde\ROPER)\big)-\ldots\Big)\right]_{\sigma=0}^{\sigma=\tau}\COMMA \\
                      \BTOPER &:= \Iunit M^{-1/2}\hat \VOPER^{-1}\COMMA\;\;\;
                 \tilde\ROPER := \PSHARP\ROPER{\bar\psi_0(\sigma)}\COMMA
  \end{split} 
\]
so that by \eqref{r_int} we have
  \[
      \psi_0^\bot(\tau)= \widetilde \SOPER_{\tau,0} \psi_0^\bot(0)
     - \left[\widetilde \SOPER_{\tau,\sigma}\Big(\BTOPER \tilde\ROPER -\BTOPER \frac{d}{d\sigma}(\BTOPER\tilde\ROPER) 
     + \BTOPER \frac{d}{d\sigma}
               \big(\BTOPER \frac{d}{d\sigma}(\BTOPER\tilde\ROPER)\big)-\ldots\Big)\right]_{\sigma=0}^{\sigma=\tau}\PERIOD
\]
By choosing   
\[
  \bar\psi_0^\bot(\sigma)\Big|_{\sigma=0}
  =-\Big(\BTOPER \tilde\ROPER(\sigma) -\BTOPER \frac{d}{d\sigma}(\BTOPER\tilde\ROPER)(\sigma) + \BTOPER \frac{d}{d\sigma}
               \big(\BTOPER \frac{d}{d\sigma}(\BTOPER\tilde\ROPER)\big)(\sigma)-\ldots\Big)\Big|_{\sigma=0}
\] 
we  get
\begin{equation}\label{beta_exp}
  \bar\psi_0^\bot(\tau) =  -\sum_{n=0}^\infty \BTOPER_0^n \ROPER_0(\tau)\COMMA
\end{equation}
where $\BTOPER_0:=-\Iunit M^{-1/2}\hat\VOPER^{-1} \tfrac{d}{d\tau}$ and $\ROPER_0:=\Iunit M^{-1/2}\hat\VOPER^{-1}\tilde\ROPER$.
We assume this expansion \eqref{beta_exp} is convergent in $L^2(dx)$ for each $\tau$, which follows from 
the smoothness estimate
\begin{equation}\label{beta_est}
   \|\BTOPER_0^n\ROPER_0(\tau)\|_{L^2(dx)}\rightarrow 0 \mbox{ as $n\rightarrow\infty$}
\end{equation} 
and \eqref{vtilde}.
 
The next step, verifying that also the non linear problem for $V_0$ works, is based on
the contraction obtained from
\[
    V_0- \lambda_0=\frac{\LPROD{\tpsi}{(\VOPER-\lambda_0)\tpsi}}{\LPROD{\tpsi}{\tpsi}}= \BIGO(\|\psi_0^\perp\|_{L^2(dx)})
\]
and that $\psi_0^\perp$ depends on $V_0$  in \eqref{r_int}, \eqref{s_int_def}  and \eqref{vtilde}
with a multiplicative factor $\BIGO(M^{-1/2})$.

Finally, to conclude that $|\langle\tpsi,\tpsi\rangle -1|=\BIGO(M^{-1})$, we use the evolution equation
\[
    \frac{d}{dt} \LPROD{\tpsi}{\tpsi} = M^{-1/2}|G|^2 \IMAG \LPROD{\Delta\frac{\tpsi}{G}}{ \frac{\tpsi}{G}}
           =\BIGO(M^{-1})
\]
where the last equality uses the obtained bound of $\psi_0^\perp$ in the first part of \eqref{omega_estimate}. 
The assumption of a finite hitting time $\tau^*$ then implies
that $|\LPROD{\tpsi}{\tpsi}-1|=\BIGO(\tau^*M^{-1})=\BIGO(M^{-1})$,
since we may assume that $\LPROD{\tpsi}{\tpsi}=1$ on $I_{\tpsi}$.
\end{proof}

\begin{remark}[Error estimates for the densities]
We have the densities
\begin{align}
   &\rho_{\SCH} = G^{-2}_{\SCH}\LPROD{\tpsi}{\tpsi} & \mbox{for the Schr\"odinger equation,}\\
   &\rho_{\BO}  = G_{\BO}^{-2}                      & \mbox{for the Born-Oppenheimer dynamics.}
\end{align}
From the stability of the Hamilton-Jacobi equation
for $\log(|G|^{-2})$  and the estimate $\|\partial_{X^iX^j}(\theta-\tilde\theta)\|_{L^\infty}=\BIGO(M^{-1+\delta})$ 
in \eqref{theta_der_stab}
we have
\[
    G_{\SCH}^{-2}= G_{\BO}^{-2} +\BIGO(M^{-1 +\delta})\COMMA
\]
and Lemma~\ref{born_oppen_lemma} implies 
\begin{equation}\label{tilde_psi_m}
   \LPROD{\tpsi}{\tpsi}= 1 + \BIGO(M^{-1})\COMMA
\end{equation}
which proves
\[
   \rho_{\SCH} =\rho_{\BO} +\BIGO(M^{-1+\delta})\PERIOD
\]
\end{remark}

\section{Fourier integral WKB states including caustics}\label{sec:caustics}

\subsection{A preparatory example with the simplest caustic}
As an example of a caustic, we study first the simplest example of a fold caustic based on the Airy function
$\AIRY:\rset\rightarrow\rset$ which solves
\begin{equation}\label{eq:Airy_ODE}
  -\partial_{xx}{\AIRY}(x)  +x \AIRY(x)=0\PERIOD
\end{equation}
The scaled Airy function
\[
  u(x)= C\,\AIRY(M^{1/3} x)
\]
solves the Schr\"odinger equation
\begin{equation}\label{eq:airy_m}
-\frac{1}{M}\partial_{xx}{u}(x) + x u(x)=0\COMMA
\end{equation}
for any constant $C$. In our context an important property of the Airy function is the fact
that it is the inverse Fourier transform of the function
\[
 \hat\AIRY(p) = \sqrt{\frac{2}{\pi}}\EXP{\Iunit p^3/3}\COMMA
\]
i.e.,
\begin{equation}\label{airy_fourier}
  \AIRY(x)= \frac{1}{\pi} \int_\rset \EXP{\Iunit (xp+p^3/3)}\, dp\PERIOD
\end{equation}
In the next section, we will consider a general Schr\"odinger equation and determine
a WKB Fourier integral corresponding to \eqref{airy_fourier} for the
Airy function; as an introduction to the general case we show how the derive \eqref{airy_fourier}:  by taking the Fourier transform of the ordinary differential equation~\eqref{eq:Airy_ODE}
\begin{equation}
    0 = \int_\rset \left(-\partial_{xx}  +x\right){\AIRY}(x) \EXP{-\Iunit xp}\, dx
      = (p^2+\Iunit \partial_p) \hat\AIRY(p)\COMMA %
\end{equation}
we obtain an ordinary differential equation for the Fourier transform $ \hat\AIRY(p)$ %
with the solution $ \HATAIRY(p)= C \EXP{\Iunit p^3}$, for any constant $C$.
Then, by differentiation, it is clear that the scaled Airy function $u$ solves~\eqref{eq:airy_m}.
Furthermore, the stationary phase method, cf. Section~\ref{stat_phase_sec}, shows that to
the leading order $u$ is approximated by 
$$
u(x)\simeq C\left(-x M^{1/3}\right)^{-1/4} \cos \big(M^{1/2} (-x)^{3/2} -\pi/4\big)\COMMA\;\;
  \mbox{ for } x < 0\COMMA
$$ 
and $u(x)\simeq 0$ to any order (i.e., $\BIGO(M^{-K})$ for any positive $K$) when $x>0$. 
The behaviour of the Airy function is illustrated in Figure~\ref{airy_fig}.

\begin{figure}[htbp]
\includegraphics[height=10cm]{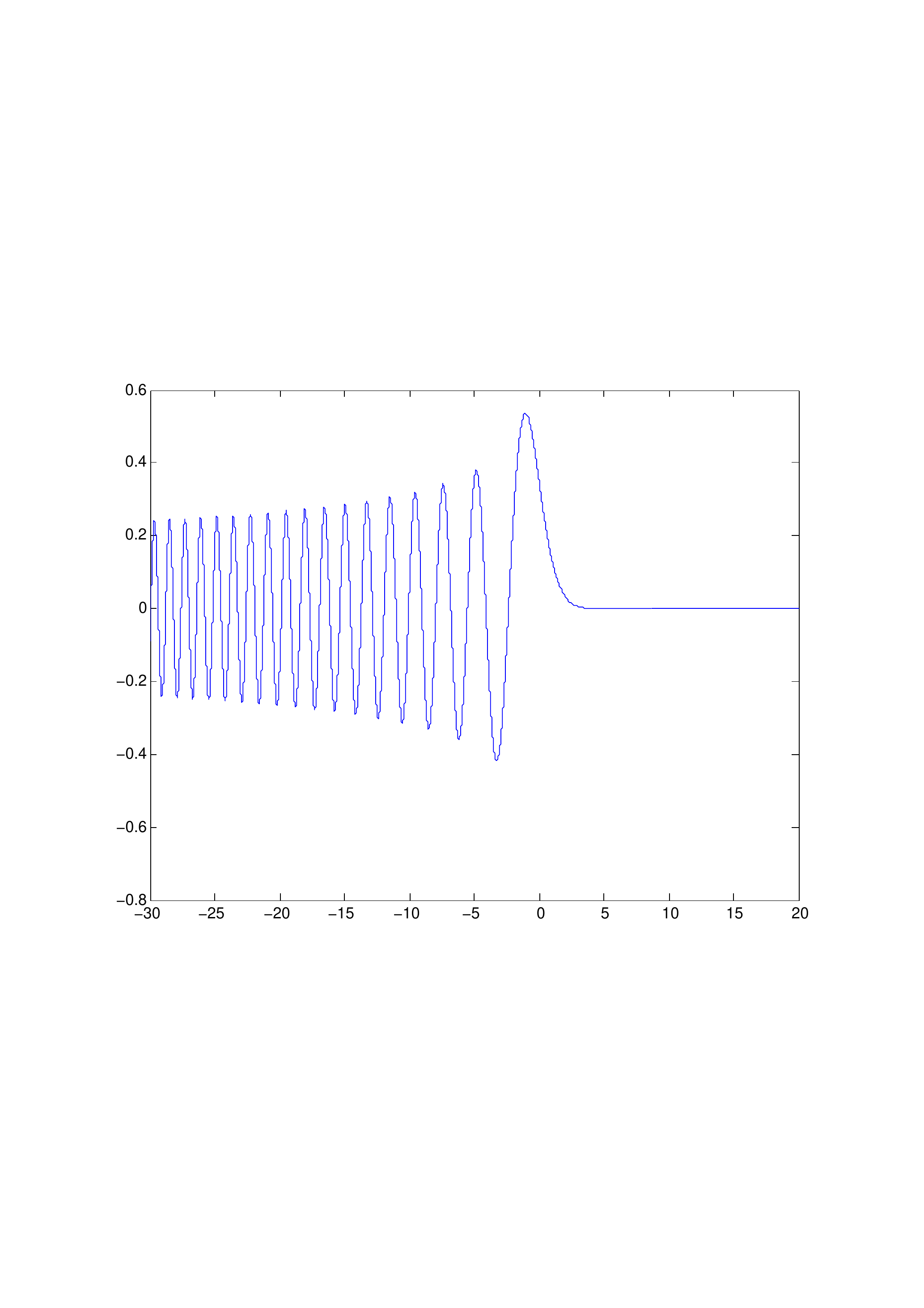}
\caption{The Airy function.}
\label{airy_fig}
\end{figure}

\subsubsection{Molecular dynamics for the Airy function}
The eikonal equation  corresponding to \eqref{eq:airy_m} is 
\[
  p^2+x=0
\]
with solutions for $x\le 0$, which leads to the phase
\begin{equation}\label{legendre_airy}
  p=\theta'(x)=\pm (-x)^{1/2}\COMMA\;\mbox{and}\;\;\theta(x)= \mp \frac{2}{3} (-x)^{3/2}\PERIOD
\end{equation}
We compute the Legendre transform
\[
  \LFT{\theta}(p)= x p -\theta(x)
\]
where by \eqref{legendre_airy} and  $-x=p^2$ we obtain
\[
   \LFT{\theta}(p)=- p^2 p +\frac{2}{3} p^3 = -\frac{p^3}{3}\PERIOD
\]
We note that this solution is also obtained from the eikonal equation
\[
  p^2 + \partial_p\LFT{\theta}(p)=0\COMMA
\]
which is solved by 
\[
   \LFT{\theta}(p)=-p^3/3\PERIOD
\]
Thus we recover the relation for the Legendre transform 
$-xp +\LFT{\theta}(p)=-\theta(x)$.

\subsubsection{Observables for the Airy function}\label{airy_obs} 
The primary object of our analysis is an observable (a functional depending on $u$)
rather than the solution $u(x)$ itself. Thus we first compute the observable evaluated on
the solution obtained from the Airy function. In the following calculation we denote 
by $C$ a generic constant not necessarily the same at each occurrence, 
\begin{equation}\label{g_A_m}
\begin{split}
   \int_\rset g(x) |u(x)|^2 dx 
          &= C \int_\rset g(x) \int_\rset\int_\rset \EXP{-\Iunit M^{1/2}(xp+p^3/3)} \EXP{\Iunit M^{1/2}(xq+q^3/3)}\,dq\, dp\, dx\\
          &= C \int_\rset\int_\rset \hat g\left(M^{1/2}(p-q)\right)\EXP{\Iunit M^{1/2}(q^3/3-p^3/3)}\, dq\, dp \\
          &= C \int_\rset\int_\rset \hat g\left(M^{1/2}(p-q)\right)
                \EXP{\Iunit M^{1/2}\left((q-p)^3/12+(q-p)(p+q)^2/4\right)}\, dq\, dp\\
          &= C \int_\rset\int_\rset \hat g(\underbrace{-M^{1/2}\bar q}_{=t})\,
                \EXP{\Iunit M^{1/2}\left(\overline q^3/12+\overline q\, \overline{p}^2/4\right)} 
                \overbrace{d\overline q d\overline p}^{\overline q = q-p, \, \overline p = p+q}  \\
          &= C \int_\rset\int_\rset \hat g(t) \EXP{-\Iunit \left( t^3/(12M)+t\, \overline{p}^2/4\right)}\, dt\, d\overline p\\
          &= C \int_\rset g * \AIRY_M( \underbrace{-\overline p^2}_{=\partial_p\LFT\theta(\overline p)=x})\, d \overline p\\
          &= C \int_{-\infty}^0 g * \AIRY_M (x) |\partial_x p(x)|\, dx\COMMA
\end{split}
\end{equation}
where 
\begin{equation}\label{airy_fourier_m}
  \AIRY_M(x):= \left(\frac{M}{4}\right)^{1/3} \AIRY\left(\left(\frac{M}{4}\right)^{1/3}x\right)
\ \mbox{is the Fourier transform of $\EXP{-\Iunit t^3/(12M)}$.}
\end{equation}
\begin{lemma}\label{lem_airy}
  The scaled Airy function $\AIRY_M$ is an approximate identity in the following sense
  \begin{equation}\label{airy_accur}
      \| g * \AIRY_M -g \|_{L^2(\rset)}\le \frac{1}{12M}\| \partial^3_x g\|_{L^2(\rset)}\PERIOD
  \end{equation}
\end{lemma}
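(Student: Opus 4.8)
The plan is to read the estimate on the Fourier side, where convolution by $\AIRY_M$ becomes multiplication by the oscillatory symbol $m(t):=\EXP{-\Iunit t^3/(12M)}$, and then to bound that symbol using the elementary inequality $|\EXP{\Iunit\alpha}-1|\le|\alpha|$, valid for every real $\alpha$.

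First I would record the Fourier‑side reformulation. By the meaning of \eqref{airy_fourier_m} — equivalently, by the chain of identities already carried out in \eqref{g_A_m} — convolution by $\AIRY_M$ corresponds, for any fixed Fourier normalization used consistently throughout, to multiplication by $m(t)$, while $g$ itself corresponds to the symbol $m|_{M=\infty}\equiv 1$ (the conclusion $g*\AIRY_M\to g$ we are after being exactly this limit). Hence $\widehat{g*\AIRY_M-g}(t)=\hat g(t)\big(m(t)-1\big)$ for $t\in\rset$.

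Next comes the multiplier estimate: applying $|\EXP{\Iunit\alpha}-1|\le|\alpha|$ with $\alpha=-t^3/(12M)$ gives $|m(t)-1|\le|t|^3/(12M)$, and therefore, by Plancherel's identity — whose normalizing constant cancels between the two sides of \eqref{airy_accur} — together with $|\widehat{\partial_x^3 g}(t)|=|t|^3|\hat g(t)|$,
\[
  \|g*\AIRY_M-g\|_{\LTWO(\rset)}
  =\|\hat g\,(m-1)\|_{\LTWO(\rset)}
  \le\frac{1}{12M}\,\|\,|t|^3\hat g(t)\,\|_{\LTWO(\rset)}
  =\frac{1}{12M}\,\|\partial_x^3 g\|_{\LTWO(\rset)}\PERIOD
\]
This is precisely \eqref{airy_accur}.

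The one point I would treat with some care — and the main, if modest, obstacle — is that $m$ is not integrable and $\AIRY_M$ is merely an oscillatory integral, so the identification $\widehat{g*\AIRY_M}=\hat g\,m$ must be understood in the tempered‑distribution sense: proved for $g$ Schwartz and then extended by density to $g\in H^3(\rset)$, which is the only regime in which the right‑hand side of \eqref{airy_accur} is finite. Apart from this bookkeeping, which is arranged precisely so that the constant emerges as $1/(12M)$, there is nothing beyond the one‑line inequality above.
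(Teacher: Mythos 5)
Your proof is correct and is essentially the same as the paper's: both pass to the Fourier side via Plancherel, use that $\widehat{\AIRY_M}$ is the unimodular symbol $\EXP{-\Iunit t^3/(12M)}$, and apply the elementary bound $|\EXP{\Iunit y}-1|\le|y|$ to absorb the factor $|t|^3$ into $\partial_x^3 g$. The extra remark you add about interpreting the convolution in the tempered-distribution sense and extending by density from Schwartz functions is a sensible clarification that the paper leaves implicit, but it does not change the argument.
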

\begin{proof}
Plancherel's Theorem implies 
\[
\begin{split}
   M\| g* \AIRY_M -g \|_{L^2} &= M\|\hat g\HATAIRY_M -\hat g\|_{L^2} =\|\hat g(\EXP{\Iunit p^3/(12M)} -1)M\|_{L^2}\\
            &\le \frac{1}{12}\||p|^3\hat g\|_{L^2} =\frac{1}{12}\|\partial^3_x g \|_{L^2}\PERIOD
\end{split}
\]
The inequality follows from $|\EXP{\Iunit y}-1|\le |y|$ which holds for all $y\in\rset$.
\end{proof}

The classical molecular dynamics approximation corresponding to the Schr\"odinger
equation \eqref{eq:airy_m} is the Hamiltonian system 
\[
  \dot X =p\COMMA\;\;\;\;\dot p =-\frac{1}{2}
\]
with a solution $X_t=-t^2/4$ and the corresponding approximation of the observable 
\[
   \frac{1}{T} \int_0^T g(X_t)\, dt = \frac{1}{T} \int_0^T g(X_t) \,\frac{d X_t}{\dot X_t}
        =\frac{1}{T} \int^{0}_{-T^2/4} g(x)\, \frac{dx}{|p(x)|}\PERIOD
\]
In this specific case the phase satisfies $|p(x)|=|x|^{1/2}$ and $|\partial_x p|=|x|^{-1/2}/2$, and hence 
the non-normalized density $|p|^{-1}$ is in this case equal to $2|\partial_x p|$. Equation \eqref{g_A_m} and Lemma \ref{lem_airy}
imply 
\[
|\int_\rset g |u|^2\, dx-\int_\rset g \partial_x p(x)\, dx|=\mathcal O(M^{-1})
\]
and consequently
for two different observables $g_1$ and $g_2$ we have that Schr\"odinger observables
are approximated by the classical observables with the error $\BIGO(M^{-1})$
\begin{equation}\label{A_m}
  \frac{\int_\rset g_1 |u|^2\, dx}{\int_\rset g_2 |u|^2\, dx} 
- \frac{\int_\rset g_1 |\partial_{xx}\theta|\, dx}{ \int_\rset g_2 |\partial_{xx}\theta|\, dx}= \BIGO(M^{-1})\COMMA
\end{equation}
using $\partial_x p(x)= \partial_{xx}\theta(x)$.
The reason we compare two different observables
with a compact support
is that $\int_\rset u^2(x)\,dx=\infty$ in the case of the Airy function. %

We note that in \eqref{g_A_m}  we used
\[
  \frac{1}{3}(q^3-p^3) = \LFT{\theta}(p)-\LFT{\theta}(q)
               =(p-q)\partial_p\LFT{\theta}\left(\frac{1}{2}(p+q)\right) 
               +\frac{1}{3}{\partial^3\LFT{\theta}}\left(\frac{1}{2}(p+q)\right)\left(\frac{1}{2}(p-q)\right)^3
\]
which in the next section is  generalized to other caustics. For the Airy function there holds
\[
  \frac{1}{3}{\partial^3\LFT{\theta}\left(\frac{1}{2}(p+q)\right)} = -\frac{2}{3}\PERIOD
\]

\subsection{A general Fourier integral ansatz}\label{sec:caustic_general}
In order to treat a more general case with a caustic of the dimension $d$ we use the Fourier integral ansatz
\begin{equation}\label{caustic_ansats}
  \Phi(X,x)= \int_{\rset^d} \hpsi(X,x) 
             \EXP{-\Iunit M^{1/2}\PHASE}\, d\CHECKP
\end{equation}
and we write
\[
\begin{split}
  X &= (\HATX, \CHECKX)\COMMA\;\;\; P = (\HATP,\CHECKP)\\
  \EPROD{\CHECKX}{\CHECKP} &=\sum_{j=1}^d\CHECKX^j\CHECKP^j\COMMA\;\;\;
        \EPROD{\HATX}{\HATP}      =\sum_{j=d+1}^N\HATX^j\HATP^j\\
  \PHASE & = \EPROD{\CHECKX}{\CHECKP} -\LFT{\theta}(\HATX,\CHECKP)\COMMA
\end{split}
\]
based on the Legendre transform 
\[
  \LFT{\theta}(\HATX,\CHECKP)= \min_{\CHECKX}\left(\EPROD{\CHECKX}{\CHECKP} - \theta(\HATX, \CHECKX)\right)\PERIOD
\]
If the function $\LFT{\theta}(\HATX,\CHECKP)$ is not defined for all $\CHECKP\in\rset^d$,
but only for $\CHECKP\in \NEIGH\subset \rset^d$ we replace the integral over $\rset^d$ by integration over $\NEIGH$
using a smooth cut-off function $\chi(\CHECKP)$.
The cut-off function is zero outside $\NEIGH$ and equal to one in a large part
of the interior of $\NEIGH$, see Section~\ref{p_comp}.
The  ansatz \eqref{caustic_ansats} is inspired by Maslov's work \REF{maslov}, although
it is not the same since our amplitude function $\hpsi$
depends on $(\HATX,\CHECKX,x)$ but not on $\CHECKP$.
We emphasize that our  modification consisting in having an amplitude function that is not dependent on $\CHECKP$
is essential in the construction of  the solution and for determining  the accuracy of observables based on this solution.

\subsubsection{Making the ansatz for a Schr\"odinger solution}
In this section we construct a solution to the Schr\"odinger equation from the ansatz \eqref{caustic_ansats}.
The constructed solution will be an {\it actual} solution and not only an asymptotic solution as in 
\REF{maslov}.
We consider first the case when the integration is over $\rset^d$ and
then conclude in the end that the cut-off function $\chi(\CHECKP)$ can be included in all
integrals without changing the property of the Fourier integral ansatz being a solution
in the $\CHECKX$-domain  where $\CHECKX=\GRAD_{\CHECKP}\LFT{\theta}(\HATX,\CHECKP)$ for some $\CHECKP$ satisfying
$\chi(\CHECKP)=1$.

The requirement to be a solution means that there should hold
\begin{equation}\label{H_ekv_caustic}
\begin{split}
0 &=(\HOPER-E)\Phi \\
  &=\int_{\rset^d} \left(\frac{1}{2}|\nabla_{\HATX}\LFT{\theta}(\HATX,\CHECKP)|^2 
       +\frac{1}{2}|\CHECKP|^2 +V_0(X) -E\right)\hpsi(X,x) \EXP{-\Iunit M^{1/2}\PHASE}\, d\CHECKP\\
  &\quad {}-\int_{\rset^d}\left(\Iunit M^{-1/2}(\EPROD{\nabla_{\HATX}\hpsi}{\nabla_{\HATX}\LFT{\theta}} 
                - \EPROD{\nabla_{\CHECKX}\hpsi}{\CHECKP}+\frac{1}{2} \hpsi\LAP_{\HATX}\LFT{\theta})
                -(\VOPER-V_0)\hpsi +\frac{1}{2M}\LAP_{X}\hpsi\right)
            \EXP{-\Iunit M^{1/2}\PHASE}\, d\CHECKP\PERIOD
\end{split}
\end{equation}
Comparing this expression to the previously discussed case of a single WKB-mode we see that the zero order term is
now $\LAP_{\HATX}\LFT{\theta}$ instead of $\LAP_{X}\theta$
and that we have $-\EPROD{\nabla_{\CHECKX}\hpsi}{\CHECKP}$ instead of 
$\EPROD{\nabla_{\CHECKX}\hpsi}{\nabla_{\CHECKX}\theta}$.
However, the main difference is that the first integral is 
not zero (only the leading order term of its
stationary phase expansion is zero, cf. \eqref{caustic_expansion}). Therefore, the first integral contributes to the second integral. The
goal is now to determine a function $\SFUN(\HATX,\CHECKX, \CHECKP)$ satisfying
\begin{equation}\label{v-v_int}
\begin{split}
   \int_{\rset^d} & \left(\frac{1}{2}|\nabla_{\HATX}\LFT{\theta}|^2 +\frac{1}{2}|\CHECKP|^2
        +V_0(X) -E\right)\EXP{-\Iunit M^{1/2}\PHASE}\, d\CHECKP\\
        & =\Iunit M^{-1/2}\int_{\rset^d}  \SFUN(\HATX,\CHECKX, \CHECKP)\,\EXP{-\Iunit M^{1/2}\PHASE}\, d\CHECKP\COMMA
\end{split}
\end{equation}
and verify that it is bounded.
\begin{lemma}\label{F_lem}
There holds
$F=F_0+F_1$
where
\[
\begin{split}
F_0 &=\frac{1}{2}\sum_{i,j}  \partial_{\CHECKX^i\CHECKX^j } V_0\left(\DPLFT{\theta}(\CHECKP)\right) 
                    \partial_{\CHECKP^j\CHECKP^i}\LFT{\theta}(\CHECKP)\COMMA \\
F_1 &= \Iunit M^{-1/2} \int_0^1\int_0^1\int_{\rset^d}  \sum_{i,j,k} t(1-t)
     \partial_{\CHECKP^k} \left[\partial_{\CHECKX^i\CHECKX^j\CHECKX^k} 
          V_0\left(\DPLFT{\theta}(\CHECKP)  + s\, t\,\DTHETA\right)
           \partial_{\CHECKP_j\CHECKP_i}\DPLFT{\theta}(\CHECKP)\right]\,dt\,ds\PERIOD\\
\end{split}
\]
\end{lemma}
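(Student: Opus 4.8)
The plan is to establish \eqref{v-v_int} with the stated $F$ by combining a pointwise identity for the amplitude of the oscillatory integrand (coming from Legendre duality and the eikonal equation) with repeated integration by parts in $\CHECKP$ that trades powers of the phase gradient $\nabla_{\CHECKP}\PHASE$ for powers of $M^{-1/2}$.

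\emph{Reduction of the amplitude.} Set $\bar\CHECKX(\CHECKP):=\nabla_{\CHECKP}\LFT{\theta}(\HATX,\CHECKP)$ and recall the Legendre relations at the matching point: $\CHECKP=\nabla_{\CHECKX}\theta(\HATX,\bar\CHECKX)$ and $\nabla_{\HATX}\LFT{\theta}(\HATX,\CHECKP)=-\nabla_{\HATX}\theta(\HATX,\bar\CHECKX)$. Since the eikonal equation \eqref{theta_eq} reads $\tfrac12|\nabla_{\HATX}\theta|^2+\tfrac12|\nabla_{\CHECKX}\theta|^2+V_0=E$, and the amplitude on the left of \eqref{v-v_int} depends on $\CHECKX$ only through $V_0(X)=V_0(\HATX,\CHECKX)$, subtracting its value at $\CHECKX=\bar\CHECKX(\CHECKP)$ — which is zero — gives
\[
  \tfrac12|\nabla_{\HATX}\LFT{\theta}|^2+\tfrac12|\CHECKP|^2+V_0(X)-E
    = V_0(\HATX,\CHECKX)-V_0\bigl(\HATX,\bar\CHECKX(\CHECKP)\bigr)=:R(\HATX,\CHECKX,\CHECKP),
\]
an amplitude that vanishes precisely on the critical set $\CHECKX=\bar\CHECKX(\CHECKP)$ of the phase $\PHASE$.

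\emph{Integration by parts.} Because $\partial_{\CHECKP^j}\PHASE=\CHECKX^j-\partial_{\CHECKP^j}\LFT{\theta}=\CHECKX^j-\bar\CHECKX^j(\CHECKP)=:\delta^j$ and $\partial_{\CHECKP^i}\delta^j=-\partial_{\CHECKP^i\CHECKP^j}\LFT{\theta}$, acting on $\EXP{-\Iunit M^{1/2}\PHASE}$ one has $\delta^j\,\EXP{-\Iunit M^{1/2}\PHASE}=\Iunit M^{-1/2}\,\partial_{\CHECKP^j}\EXP{-\Iunit M^{1/2}\PHASE}$ and
\[
 \delta^i\delta^j\,\EXP{-\Iunit M^{1/2}\PHASE}
   =\Iunit M^{-1/2}\bigl(\partial_{\CHECKP^i\CHECKP^j}\LFT{\theta}\bigr)\EXP{-\Iunit M^{1/2}\PHASE}
    -M^{-1}\,\partial_{\CHECKP^i\CHECKP^j}\EXP{-\Iunit M^{1/2}\PHASE}\PERIOD
\]
I would then Taylor-expand $V_0(\HATX,\CHECKX)=V_0\bigl(\HATX,\bar\CHECKX(\CHECKP)+\delta\bigr)$ about $\bar\CHECKX(\CHECKP)$ to second order with integral remainder, so that $R$ is a first-order term $\sum_i\partial_{\CHECKX^i}V_0(\bar\CHECKX)\,\delta^i$, a second-order term $\tfrac12\sum_{i,j}\partial_{\CHECKX^i\CHECKX^j}V_0(\bar\CHECKX)\,\delta^i\delta^j$, and a third-order remainder $\sum_{i,j,k}\delta^i\delta^j\delta^k\,r_{ijk}(\HATX,\CHECKX,\CHECKP)$, where $r_{ijk}$ is a weighted average of $\partial_{\CHECKX^i\CHECKX^j\CHECKX^k}V_0$ along the segment from $\bar\CHECKX$ to $\CHECKX$. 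Substituting into the left integral of \eqref{v-v_int}: the first-order term, after the identity above and one integration by parts in $\CHECKP$ (which differentiates $\partial_{\CHECKX^i}V_0(\bar\CHECKX(\CHECKP))$ through $\bar\CHECKX$ and yields $\sum_{i,j}\partial_{\CHECKX^i\CHECKX^j}V_0(\bar\CHECKX)\,\partial_{\CHECKP^i\CHECKP^j}\LFT{\theta}$), combines with the $\Iunit M^{-1/2}$-part of the second-order term to produce exactly $\Iunit M^{-1/2}\int F_0\,\EXP{-\Iunit M^{1/2}\PHASE}\,d\CHECKP$; the $M^{-1}$-part of the second-order term and the third-order remainder are, after one further integration by parts using $\delta\,\EXP{-\Iunit M^{1/2}\PHASE}=\Iunit M^{-1/2}\partial_{\CHECKP}\EXP{-\Iunit M^{1/2}\PHASE}$ on a remaining $\delta$, of the form $\Iunit M^{-1/2}\int F_1\,\EXP{-\Iunit M^{1/2}\PHASE}\,d\CHECKP$; rewriting the segment average through two parameters $s,t$ produces the explicit weight $t(1-t)$ and the argument $\bar\CHECKX+st\,\delta$ in $F_1$.

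\emph{Expected main obstacle.} Two points need care. First, the bookkeeping: the Taylor remainders must be organized so that the genuinely $\BIGO(1)$ contribution is \emph{literally} $F_0$ and every other piece carries an explicit $\Iunit M^{-1/2}$, the $M^{-1}$-terms being absorbed into $F_1$ after the extra integration by parts. Second — since $\LFT{\theta}$ exists only on $\NEIGH$ and the cutoff $\chi(\CHECKP)$ is present — the boundary terms generated by the integrations by parts must be controlled; this is where one uses that $\chi\equiv1$ on a large interior part of $\NEIGH$ and that the identity is claimed only on the $\CHECKX$-range $\CHECKX=\nabla_{\CHECKP}\LFT{\theta}(\HATX,\CHECKP)$ with $\chi(\CHECKP)=1$. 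Boundedness of $F_0$ and $F_1$ (hence $F_1=\BIGO(M^{-1/2})$) then follows from smoothness of $V_0$ in $X$ up to three derivatives, together with non-degeneracy of the Legendre transform, which keeps $\partial_{\CHECKP\CHECKP}\LFT{\theta}=\partial_{\CHECKP}\bar\CHECKX$ and its first derivative bounded.
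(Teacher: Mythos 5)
Your core moves are the paper's own: reduce the amplitude via the eikonal equation \eqref{caustic_hj} to the difference $V_0(\HATX,\CHECKX)-V_0(\HATX,\DPLFT{\theta}(\CHECKP))$, trade the factors $\delta^j=\partial_{\CHECKP^j}\PHASE$ for $\Iunit M^{-1/2}\partial_{\CHECKP^j}$ acting on the exponential, and integrate by parts in $\CHECKP$ with $\chi$ controlling the boundary terms. Your combination of the first-order Taylor term (after one integration by parts, which brings down $\partial_{\CHECKP^i\CHECKP^j}\LFT{\theta}$ through $\partial_{\CHECKP^i}\DPLFT{\theta}^j$) with the $\Iunit M^{-1/2}$ half of the $\delta^i\delta^j$ identity does reproduce the $\frac{1}{2}\sum_{i,j}\partial_{\CHECKX^i\CHECKX^j}V_0(\DPLFT{\theta}(\CHECKP))\,\partial_{\CHECKP^i\CHECKP^j}\LFT{\theta}$ structure of $F_0$, and your concerns about the cut-off are correctly placed.

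What does not follow is the last step, that your decomposition also yields the stated $F_1$ with weight $t(1-t)$ and argument $\DPLFT{\theta}(\CHECKP)+s\,t\,\DTHETA$. The paper does \emph{not} Taylor-expand to second order. It writes the \emph{first}-order difference with integral remainder, $V_0(\DPLFT{\theta}(\CHECKP_0))-V_0(\DPLFT{\theta}(\CHECKP))=\int_0^1\partial_{\CHECKX}V_0\bigl(\DPLFT{\theta}(\CHECKP)+t\,\DTHETA\bigr)\,dt\cdot\DTHETA$, integrates by parts once (the chain rule on the argument in $\CHECKP$ produces the factor $(1-t)\partial_{\CHECKP^i\CHECKP^j}\LFT{\theta}$, and $\int_0^1(1-t)\,dt=\frac{1}{2}$ yields $F_0$), and then applies the \emph{same} first-order-difference-plus-IBP step to the remainder $\partial_{\CHECKX\CHECKX}V_0(\DPLFT{\theta}(\CHECKP))-\partial_{\CHECKX\CHECKX}V_0(\DPLFT{\theta}(\CHECKP)+t\,\DTHETA)$. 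It is precisely this iteration that produces $t(1-t)$ — the $(1-t)$ from the first remainder times the $t$ pulled out of the increment $t\,\DTHETA$ in the second difference — and the argument $\DPLFT{\theta}(\CHECKP)+s\,t\,\DTHETA$ from interpolating with a second parameter $s$. Your third-order integral remainder carries a weight $\frac{1}{2}(1-t)^2$ and three leftover $\delta$-factors; no reparametrization of that expression produces $t(1-t)$, and the $M^{-1}$ piece of your $\delta^i\delta^j$ identity adds a further term of a different shape. Your plan therefore proves a \emph{valid} decomposition $F=F_0+F_1'$ with the stated $F_0$ and a bounded $\BIGO(M^{-1/2})$ remainder — which is all that is used downstream in \eqref{Re_s} and \eqref{dens_caustic} — but it does not verify the displayed formula for $F_1$. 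To obtain that formula literally you should replace the upfront second-order expansion by the paper's iterated first-order scheme.
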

\begin{proof}
 The function
$\LFT{\theta}(\HATX,\CHECKP)$ is defined as a solution to
the Hamilton-Jacobi (eikonal) equation
\begin{equation}\label{caustic_hj}
    \frac{1}{2}|\nabla_{\HATX}\LFT{\theta}(\HATX,\CHECKP)|^2 +\frac{1}{2}|\CHECKP|^2 
       + V_0\left(\HATX,\nabla_{\CHECKP}\LFT{\theta} (\HATX,\CHECKP)\right) - E = 0
\end{equation}
for all $(\HATX,\CHECKP)$.
Consequently, the integral on the left hand side of \eqref{v-v_int} is 
\[
    \int_{\rset^d} \left(V_0(\HATX,\CHECKX) -V_0(\HATX,\nabla_{\CHECKP}\LFT{\theta}(\HATX,\CHECKP)\right)
    \EXP{-\Iunit M^{1/2}\left(\EPROD{\CHECKX}{\CHECKP} -\LFT{\theta}(\HATX,\CHECKP)\right)}\, d\CHECKP\PERIOD
\]
Let  $\CHECKP_0(\CHECKX)$ be any solution to the stationary phase equation
$\CHECKX=\nabla_{\CHECKP}\LFT{\theta}(\HATX,\CHECKP_0)$
and introduce the notation 
\[
 \PHASED  :=\EPROD{\GRAD_{\CHECKP}\LFT{\theta}(\HATX,\CHECKP_0)}{\CHECKP} 
           -\LFT{\theta}(\HATX,\CHECKP)\PERIOD
\]
Then by writing a difference as $V(y_1)-V(y_2)=\int_0^1\partial_y V(y_2+t(y_1-y_2))dt\cdot(y_1-y_2)$, 
identifying a derivative $\partial_{\CHECKP_i} $ and integrating by parts
the integral can be written
\[
\begin{split}
    &\int_{\rset^d} \left(V_0(\HATX,\nabla_{\CHECKP}\LFT{\theta}(\HATX,\CHECKP_0))
           -V_0(\HATX,\nabla_{\CHECKP}\LFT{\theta}(\HATX,\CHECKP)\right)
             \EXP{-\Iunit M^{1/2}\PHASED}\, d\CHECKP\\
    &=  \int_0^1\int_{\rset^d}  \sum_i\partial_{\CHECKX^i} V_0\left(\DPLFT{\theta}(\CHECKP) 
                    +t\,\left[\DPLFT{\theta}(\CHECKP_0)-\DPLFT{\theta}(\CHECKP)\right]\right)\times\\
    & \qquad\times        \left(\partial_{\CHECKP^i}\LFT{\theta}(\CHECKP_0)-\partial_{\CHECKP^i}\LFT{\theta}(\CHECKP)\right) 
        \EXP{-\Iunit M^{1/2}\PHASED}\,d\CHECKP\, dt \\ 
    &= -\Iunit M^{-1/2} \int_0^1\int_{\rset^d}  \sum_i\partial_{\CHECKX^i} V_0
           \left(\DPLFT{\theta}(\CHECKP) +t\,\left[\DPLFT{\theta}(\CHECKP_0)-\DPLFT{\theta}(\CHECKP)\right]\right)
           \partial_{\CHECKP_i} \EXP{-\Iunit M^{1/2}\PHASED} \,d\CHECKP\, dt \\ 
   &= \Iunit M^{-1/2} \int_0^1\int_{\rset^d}  \sum_i\partial_{\CHECKP_i} \partial_{\CHECKX^i} 
                V_0\left(\DPLFT{\theta}(\CHECKP) +t\,\left[\DPLFT{\theta}(\CHECKP_0)-\DPLFT{\theta}(\CHECKP)\right]\right)
            \EXP{-\Iunit M^{1/2}\PHASED}\,d\CHECKP\, dt\PERIOD
\end{split}
\]
Therefore the  leading order term  in $\SFUN=: \SFUN_0 + \SFUN_1$ is
\[
\begin{split}
  \SFUN_0&:=\int_0^1\sum_{i,j}  (1-t) \partial_{\CHECKX^i\CHECKX^j } V_0\left(\DPLFT{\theta}(\CHECKP)\right)
                                 \partial_{\CHECKP^j\CHECKP^i}\LFT{\theta}(\CHECKP)\, dt\\
     &=\frac{1}{2}\sum_{i,j}  \partial_{\CHECKX^i\CHECKX^j } V_0\left(\DPLFT{\theta}(\CHECKP)\right) 
                    \partial_{\CHECKP^j\CHECKP^i}\LFT{\theta}(\CHECKP)\PERIOD
\end{split}
\]
Denoting $\DTHETA = \DPLFT{\theta}(\CHECKP_0)-\DPLFT{\theta}(\CHECKP)$  the remainder becomes
\[
\begin{split}
    & -\Iunit M^{-1/2} \int_0^1\int_{\rset^d}  \sum_{i,j} \left[
           \partial_{\CHECKX^i\CHECKX^j} V_0\left(\DPLFT{\theta}(\CHECKP) \right)
         - \partial_{\CHECKX^i\CHECKX^j} V_0\left(\DPLFT{\theta}(\CHECKP)  + t\,\DTHETA\right) \right]\\
    &\qquad \times (1-t)\partial_{\CHECKP^j\CHECKP^i}\LFT{\theta}(\CHECKP) \,\EXP{-\Iunit M^{1/2}\PHASED} \,d\CHECKP\, dt\\
    &= \Iunit M^{-1/2} \int_0^1\int_0^1\int_{\rset^d}  \sum_{i,j,k} t(1-t)
          \partial_{\CHECKX^i\CHECKX^j\CHECKX^k} V_0\left(\DPLFT{\theta}(\CHECKP)  +  s\, t\,\DTHETA\right) 
           \partial_{\CHECKP^j\CHECKP^i}\LFT{\theta}(\CHECKP) \\
    &\qquad \times \left(\partial_{\CHECKP^k}\LFT{\theta}(\CHECKP_0)-\partial_{\CHECKP^k}\LFT{\theta}(\CHECKP)\right)
      \EXP{-\Iunit M^{1/2}\PHASED} \,d\CHECKP\,dt\, ds\\
    &= - \frac{1}{M} \int_0^1\int_0^1\int_{\rset^d}  \sum_{i,j,k} t(1-t)
           \partial_{\CHECKP^k} \left[\partial_{\CHECKX^i\CHECKX^j\CHECKX^k} 
             V_0\left(\DPLFT{\theta}(\CHECKP)  + s\, t\, \DTHETA\right)   
             \partial_{\CHECKP^j\CHECKP^i}\LFT{\theta}(\CHECKP)\right] \\
    &\qquad\times \EXP{-\Iunit M^{1/2}\PHASED}\,d\CHECKP\,dt\,ds\COMMA
\end{split}
\]
hence the function $\SFUN_1$ is purely imaginary and  small
\[
\begin{split}
  &\SFUN_1= \Iunit M^{-1/2} \int_0^1\int_0^1\int_{\rset^d}  \sum_{i,j,k} t(1-t)
     \partial_{\CHECKP^k} \left[\partial_{\CHECKX^i\CHECKX^j\CHECKX^k} 
          V_0\left(\DPLFT{\theta}(\CHECKP)  + s\, t\,\DTHETA\right)
           \partial_{\CHECKP_j\CHECKP_i}\DPLFT{\theta}(\CHECKP)\right]\,dt\,ds\COMMA
\end{split}
\]
and
\begin{equation}\label{Re_s}
2\REAL{\SFUN}=\sum_{i,j}  \partial_{\CHECKX^i\CHECKX^j } V_0\left(\DPLFT{\theta}(\CHECKP)\right) 
            \partial_{\CHECKP^j\CHECKP^i}\LFT{\theta}(\CHECKP)\PERIOD
\end{equation}
\end{proof}

The eikonal equation \eqref{caustic_hj} and the requirement that
$(\HOPER-E)\Phi =0$ in \eqref{H_ekv_caustic} then imply that
\begin{equation}\label{before_hj}
\begin{split}
   0&= \int_{\rset^d}\left[\Iunit M^{-1/2}\left(\EPROD{\nabla_{\HATX}\hpsi}{\nabla_{\HATX}\LFT{\theta}} 
         - \EPROD{\nabla_{\CHECKX}\hpsi}{\CHECKP}
         +\frac{1}{2} \hpsi \left( \LAP_{\HATX}\LFT{\theta} - 2 \SFUN(X,\CHECKP) \right)\right)\right.\\
    &\qquad \left. -(\VOPER-V_0)\hpsi +\frac{1}{2M}\LAP_{X}\hpsi\right]
            \EXP{-\Iunit M^{1/2}\PHASE}\, d\CHECKP\PERIOD
\end{split}
\end{equation}
The Hamilton-Jacobi eikonal equation \eqref{caustic_hj}, in the primal variable $(\HATX,\CHECKP)$
with the corresponding dual variable $\hat P, \check X)$,
can be 
solved by  the characteristics
\begin{equation}\label{caustic_char}
\begin{split}
   \dot{\HATX} & = \HATP\\
   \dot{\HATP} &= -\GRAD_{\HATX} V_0(\HATX,\CHECKX)\\
   \dot{\CHECKX} & = -\CHECKP\\
   \dot{\CHECKP} &= \GRAD_{\CHECKX} V_0(\HATX,\CHECKX)\COMMA
\end{split}
\end{equation}
using the definition
\[
\begin{split}
    \GRAD_{\HATX}\LFT{\theta}(\HATX,\CHECKP)   &= \HATP         \\
    \GRAD_{\CHECKP}\LFT{\theta}(\HATX,\CHECKP) &= \CHECKX\PERIOD
\end{split}
\]
The characteristics give 
\[
   \frac{d}{dt} \hpsi = \EPROD{\GRAD_{\HATX}\hpsi}{\GRAD_{\HATX}\LFT{\theta}} - 
                       \EPROD{\GRAD_{\CHECKX}\hpsi}{\CHECKP}\COMMA
\]
so that the Schr\"odinger transport equation becomes, as in  \eqref{schrod_first},
\begin{equation}\label{caustic_transport}
    \Iunit M^{-1/2} \left(\dot{\hpsi} +  \hpsi \frac{\dot G}{G}\right) 
                = (\VOPER - V_0) \hpsi - \frac{1}{2M} \LAP_{X}\hpsi
\end{equation}
and for $\tpsi=G\phi$
\begin{equation}\label{caustic_transport_psi}
    \Iunit M^{-1/2} \dot{\tpsi}= (\VOPER - V_0) \tpsi - \frac{G}{2M} \LAP_{X}\frac{\tpsi}{G}
\end{equation}
with  the complex valued weight function $G$ defined by
\begin{equation}\label{g_2}
  \frac{d}{dt} \log G_t= \frac{1}{2}\LAP_{\HATX}\LFT{\theta}(\hat X_t,\CHECKP_t) - \SFUN(\hat X_t,\CHECKP_t)\PERIOD
\end{equation}
This transport equation is of the same form as the transport equation  for a single WKB-mode, with
a modification of the weight function $G$.

Differentiation of the second equation in the 
Hamiltonian system \eqref{caustic_char} implies that the first variation 
$\partial \CHECKP_t/\partial \CHECKX_0$
satisfies
\[
  \frac{d}{dt}\left(\frac{\partial\CHECKP_t^i}{\partial \CHECKX_0}\right) = 
      \sum_{j,k}\partial_{\CHECKX^i\CHECKX^j} V_0(\HATX,\CHECKX_t) 
           \partial_{\CHECKP^j\CHECKP^k}\LFT{\theta}(\CHECKP) \frac{\partial\CHECKP_t^k}{\partial \CHECKX_0} \COMMA
\]
which by the Liouville formula \eqref{liouville_form} and the equality 
\[
2\REAL{\SFUN}=\sum_{i,j} \partial_{\check X^i\check X^j}V_0\partial_{\CHECKP^j\CHECKP^i}\LFT{\theta}
=\TRACE(\sum_j\partial_{\CHECKX^i\CHECKX^j}V_0\partial_{\CHECKP^j\CHECKP^k}\LFT{\theta})\]
in \eqref{Re_s} yields the relation, 
\begin{equation}\label{re_s}
   \EXP{-2\int_0^t\REAL{\SFUN}\, dt'} = C\, \left|\DET \frac{\partial \CHECKP_t}{\partial \CHECKX_0}\right|\COMMA
\end{equation}
for the constant $C:=|\DET \frac{\partial \CHECKX_0}{\partial \CHECKP_0}|$.  We use relation
\eqref{re_s} %
to study the density in the next section.

\begin{remark}\label{chi_rem} 
  The conclusion in this section holds also when all integrals over $d\CHECKP$ in $\rset^d$
  are replaced by integrals with the measure $\chi(\CHECKP)\, d\CHECKP$. Then there holds
  $2\REAL{\SFUN}=\sum_{ij} \partial_{\CHECKX^i\CHECKX^j}\VOPER \partial_{\CHECKP^i}(\chi \partial_{\CHECKP^j}\LFT{\theta})$. We  use
  that the observable $g$ is zero when the cut-off function $\chi_j$ is not one, see Section~\ref{p_comp}. 
  In Section~\ref{global_sec} we show how to construct a global solution
  by connecting the Fourier integral solutions, valid in a neighborhood where 
  $\DET\partial(X)/\partial(P)$ vanishes
  (and $\chi(\CHECKP)=1$), 
  to a sum of WKB-modes, valid in neighborhoods where $\DET\partial(P)/\partial(X)$ vanishes
  (and $\chi(\CHECKP)<1$).
\end{remark}

\subsubsection{The Schr\"odinger density for caustics.} 
In this section we study the density generated by the solution
\[
  \Phi(X,x)= \int_{\rset^d} \hpsi(X,x)\, \EXP{-\Iunit M^{1/2}\left(
        \EPROD{\CHECKX}{\CHECKP} -\LFT{\theta}(\HATX,\CHECKP)\right)}\,d\CHECKP\PERIOD
\]
The analysis of the density generalizes the calculations for the Airy function in Section~\ref{airy_obs}.
We have, using the notation $\hat{\TLDG}$ for the Fourier transform of $\TLDG$ with respect to the $\CHECKX$ variable,
and by introducing the notation $\CHECKR = \tfrac{1}{2}(\CHECKP + \CHECKQ)$ and $\CHECKS = \CHECKP - \CHECKQ$ 

\begin{equation}\label{caustic_density}
\begin{split}
 \int g(X)|\Phi(x,X)|^2 \, dxdX  &= \int \underbrace{g(X) \LPROD{\hpsi}{\hpsi}}_{=:\TLDG(X)}\, 
      \EXP{\Iunit M^{1/2}(\EPROD{\CHECKX}{\CHECKP} -\LFT{\theta}(\HATX,\CHECKP))}\,
      \EXP{-\Iunit M^{1/2}(\EPROD{\CHECKX}{\CHECKQ} -\LFT{\theta}(\HATX,\CHECKQ))}\, d\CHECKP\, d\CHECKQ\, dX\\
    &=\int \hat{\TLDG}(\HATX, M^{1/2}\CHECKS)\,  
      \EXP{\Iunit M^{1/2}(\LFT{\theta}(\HATX,\CHECKQ) -\LFT{\theta}(\HATX,\CHECKP))}\, d\CHECKP\, d\CHECKQ\, d\HATX\\
    &=\int \hat{\TLDG}(\HATX, M^{1/2}\CHECKS)\,  
      \EXP{\Iunit M^{1/2}\frac{1}{6}(\EPROD{\CHECKS}{\nabla_{\CHECKP})^3 \LFT{\theta}(\HATX,\CHECKR} 
            +\gamma \CHECKS/2)} \times\\
    &\qquad\times\EXP{\Iunit M^{1/2}\EPROD{\CHECKS}{\nabla_{\CHECKP}\LFT{\theta}(\HATX,\CHECKR)}}\, 
                 d\CHECKS\, d\CHECKR\ d\HATX \\ %
    &= \left(\frac{1}{2\pi}\right)^{d/2} M^{-1/2}\int \TLDG * \AIRY_M \big(\HATX, \underbrace{\nabla_{\CHECKP} 
            \LFT{\theta} (\HATX, \CHECKR}_{=\CHECKX})\big)\, d\CHECKR d\HATX \\
    &= \left(\frac{1}{2\pi}\right)^{d/2} M^{-1/2}\int \TLDG * \AIRY_M (\HATX, \CHECKX)\, 
         \left|\DET \frac{\partial(\CHECKP)}{\partial (\CHECKX)}\right|\,dX  \PERIOD
\end{split}
\end{equation}
In the convolution $\TLDG * \AIRY_M$, the function $\AIRY_M$, analogous to \eqref{airy_fourier_m},
 is the Fourier transform of 
\[
   \EXP{\Iunit \frac{1}{M}(\EPROD{\omega}{\nabla_{\CHECKP})^3 \LFT{\theta}(\HATX,\CHECKP)}}
        \Big|_{\CHECKP= \CHECKR +\gamma \omega}
\]
with respect to $\omega\in\rset^d$ and the integration in $\CHECKX$ is with respect to the range of 
$\nabla_{\CHECKP}\LFT{\theta}(\HATX,\cdot)$.
As a next step we evaluate the Fourier transform and its derivatives at zero and obtain
\[
\begin{split}
  & \int_{\rset^d} \AIRY_M(\CHECKX)\, d\CHECKX                  = 1\COMMA\;\;\;
       \int_{\rset^d} \CHECKX^i \AIRY_M(\CHECKX)\, d\CHECKX     = 0\COMMA      \\
  & \int_{\rset^d} \CHECKX^i\CHECKX^j \AIRY_M(\CHECKX)\, d\CHECKX    = 0\COMMA\;\;\;
       M\int_{\rset^d} \CHECKX^i \CHECKX^j \CHECKX^k  \AIRY_M(\CHECKX)\, d\CHECKX    = \BIGO(1).\\
\end{split}
\]
Here we use that both differentiation with respect to $(\EPROD{\omega}{\nabla_{\CHECKP}})^3$
and $\LFT{\theta}(\HATX,\CHECKR + \gamma\omega)$
yield factors of $\omega$ which vanish.
The vanishing moments of $\AIRY_M$ imply that
\begin{equation}\label{A_m_felet}
   \|\TLDG * \AIRY_M -\TLDG\|_{L^2(d\CHECKX)} = \BIGO(M^{-1})
\end{equation}
as in \eqref{airy_accur}, so that up to $\BIGO(M^{-1})$ error the convolution with $\AIRY_M$ can be neglected. 

\subsubsection{Integration over a compact set in $\CHECKP$}\label{p_comp}
In the case when the integration is over $\NEIGH\subset\rset^d$ instead of $\rset^d$,
we use a smooth cut-off function $\chi(\CHECKP)$, which is zero outside $\NEIGH$ and restrict our analysis
to the case when the smooth observable mapping  
$\CHECKP\mapsto  g(\HATX,\nabla_{\CHECKP}\LFT{\theta}(\HATX,\CHECKP))$ is compactly supported in
the domain where $\chi$ is one. In this way $g(\HATX,\nabla_{\CHECKP}\LFT{\theta}(\HATX,\CHECKP))$ 
is zero when $\nabla_{\CHECKP}\chi(\CHECKP)$ is non zero.
The integrand is thus equal to
\[
   (g(X)\ \LPROD{\hpsi}{\hpsi}) \chi(\CHECKP)\chi(\CHECKQ)
\]
and we use the convergent Taylor expansion
\[
   \chi(\underbrace{\CHECKR+M^{-1/2}\omega}_{\CHECKP})
   \chi(\underbrace{\CHECKR-M^{-1/2}\omega}_{\CHECKQ})= \sum_{k=0}^\infty \frac{|\omega|^{2k}}{M^k} 
    a_k(\CHECKR)\PERIOD
\]
Then the observable becomes
\[
    (2\pi)^{-d/2}M^{-1/2}\sum_{k=0}^\infty \int \left(a_k\, (M^{-1}\LAP_{\CHECKX})^k
       \TLDG\right) * \AIRY_M \left(\HATX, \nabla_{\CHECKP} \LFT{\theta}(\HATX,\CHECKR)\right)\,
       d\CHECKR\, d\HATX\PERIOD
\]
As in \eqref{A_m_felet} we can remove the convolution with $\AIRY_M$ by introducing an error $\BIGO(M^{-1})$
and since for $k>0$ we have $a_k(\CHECKR)
g(\HATX,\nabla_{\CHECKP}\LFT{\theta}(\HATX,\CHECKR)) =0$ and  $a_0=1$,
we obtain the same observable as before
\[
\begin{split}
   &\sum_{k=0}^\infty \int \left(a_k\, (M^{-1}\LAP_{\CHECKX})^k \TLDG\right) * \AIRY_M\big(\HATX,\nabla_{\CHECKP} 
               \LFT{\theta} (\HATX,\CHECKR )\big)\, d\CHECKR\, d\HATX\\
   &=\sum_{k=0}^\infty \int \left(a_k\, (M^{-1}\LAP_{\CHECKX})^k \TLDG\right) 
              \left(\HATX,\nabla_{\CHECKP} \LFT{\theta}(\HATX,\CHECKR)\right)\,
                   d\CHECKR\, d\HATX\, +\BIGO(M^{-1})\\
   &=\int \TLDG \left(\HATX, \nabla_{\CHECKP} \LFT{\theta}(\HATX,\CHECKR )\right)\, 
                   d\CHECKR\, d\HATX\, +\BIGO(M^{-1})\\
   &=\int \TLDG (\HATX, \CHECKX)\, \left|\DET\frac{\partial(\CHECKP)}{\partial (\CHECKX)}\right|\, dX\, +\BIGO(M^{-1})\PERIOD
\end{split}
\]

\subsubsection{Comparing the Schr\"odinger and molecular dynamics densities}
We compare the Schr\"odinger density to the molecular dynamics density generated by  the continuity
equation
\[
  0= \DIV (\rho \GRAD\theta) = \EPROD{\GRAD\rho}{\GRAD\theta} + \rho\, \DIV(\GRAD{\theta})
   = \dot\rho + \rho\, \DIV(\GRAD\theta)
\]
which yields the density
\[
  \EXP{-\int\DIV(\GRAD\theta)\, dt}\PERIOD
\]
We have $P=\GRAD\theta$, so that $\tfrac{\partial(P)}{\partial(X)}=\partial_{XX}\theta$.
The Liouville formula \eqref{liouville_form} implies the molecular dynamics density
\begin{equation}\label{rho_md_caustic}
 \begin{split}
   \rho_{\BO}=\EXP{-\int_0^t\DIV(\GRAD\theta) \,dt'} &= 
              \DET \frac{\partial X_{0,\BO}}{\partial X_{t,\BO}}\PERIOD
 \end{split}
\end{equation}

The observable for the Schr\"odinger equation has, by \eqref{caustic_density},  the density
\[
   (g\LPROD{\hpsi}{\hpsi})*\AIRY_M \left|\DET \frac{\partial(\CHECKP)}{\partial(\CHECKX)}\right|\PERIOD
\]
We want to compare it with the molecular dynamics density $\rho_{\BO}$. 
The convolution with $\AIRY_M$ gives an error term of the order $\BIGO(M^{-1})$, as in \eqref{airy_accur},
and  following the proof of Theorem~\ref{bo_thm} for a single WKB-state in Section~\ref{sec:wkb_analysis}  (now
based on the Hamilton-Jacobi equation \eqref{caustic_hj}, the
Schr\"odinger transport equation \eqref{caustic_transport} and the definition of the weight $G$ in \eqref{g_2}),
the amplitude function satisfies, by \eqref{caustic_transport_psi}  and \eqref{g_2} and the Born-Oppenheimer approximation Lemma~\ref{born_oppen_lemma},
\[
\LPROD{\hpsi}{\hpsi}=|G|^2\langle\tpsi,\tpsi\rangle
=\EXP{\int 2\REAL\,\SFUN - \LAP_{\HATX} \LFT{\theta}\, dt}
+ \BIGO(M^{-1}),
\]
so that by \eqref{re_s}
\begin{equation}\label{dens_caustic}
\begin{split}
       (g\LPROD{\hpsi}{\hpsi})*\AIRY_M |\DET \frac{\partial(\CHECKP)}{\partial(\CHECKX)}|
    &= (g\LPROD{\hpsi}{\hpsi})|\DET \frac{\partial(\CHECKP)}{\partial(\CHECKX)}| +\BIGO(M^{-1})\\
    &=  g\, \EXP{\int 2\REAL{\SFUN} - \LAP_{\HATX}\LFT\theta \, dt}\, |\DET \frac{\partial(\CHECKP)}{\partial(\CHECKX)}|
        +\BIGO(M^{-1})\\
    &=  g\, |\DET \frac{\partial(\CHECKX_0)}{\partial(\CHECKP)}|\, |\DET \frac{\partial(\HATX_0)}{\partial(\HATX)}|\,
          |\DET \frac{\partial(\CHECKP)}{\partial(\CHECKX)}| + \BIGO(M^{-1})\COMMA \\
    &=  g\, |\DET \frac{\partial(\CHECKX_0)}{\partial(\CHECKX)}|\, |\DET \frac{\partial(\HATX_0)}{\partial(\HATX)}|
        +\BIGO(M^{-1})\COMMA \\
    &=  g\, |\DET \frac{\partial( X_0)}{\partial( X)}| + \BIGO(M^{-1})\PERIOD
\end{split}
\end{equation}
When we restrict the domain to $\NEIGH$ with the cut-off function $\chi$ as in Remark~\ref{chi_rem} we 
use the fact that $g(\HATX,\nabla_{\CHECKP}\LFT{\theta}(\HATX,\CHECKP))$ 
is zero when $\nabla_{\CHECKP}\chi(\CHECKP)$ is non zero and obtain the same.
The representations \eqref{dens_caustic} and \eqref{rho_md_caustic}
show that the density generated in the caustic case with a Fourier integral also takes
the same form, to the leading order, as the molecular dynamics density and the remaining discrepancy  
is only due to $\LFT{\theta}=\LFT\theta_{\SCH}$ and $\LFT\theta=\LFT\theta_{\BO}$ being different.
This difference is, as in the single mode WKB expansion, of size $\BIGO(M^{-1})$ which is 
estimated by the difference in Hamiltonians
of the Schr\"odinger and molecular dynamics eikonal equations.  
The estimate of the  difference of the
phase functions uses the Hamilton-Jacobi equation \eqref{caustic_hj} for $\LFT{\theta}_{\SCH}(\HATX,\CHECKP)$
and a similar Hamilton-Jacobi equation for $\LFT{\theta}_{\BO}(\HATX,\CHECKP)$ 
with $V_0=\lambda_{\BO}+\BIGO(M^{-1})$ replaced by $\lambda_{\BO}$.
The difference in the weight functions $\log(|G(\HATX,\CHECKP)|^{-2})$ is estimated by the
Hamilton-Jacobi equation 
\[
   \left(\EPROD{\nabla_{\HATX}\LFT{\theta}_{\SCH}(\HATX,\CHECKP)}{\nabla_{\HATX}} 
        -\EPROD{\nabla_{\CHECKX}V_0(\HATX,\CHECKX)}{\nabla_{\CHECKP}}\right)
         \log|G_{\SCH}(\HATX,\CHECKP)|^{-2} - \LAP_{\HATX}\LFT{\theta}_{\SCH}(X,\CHECKP) + \REAL{\SFUN}(X,\CHECKP)=0\COMMA
\]
 where $\REAL{\SFUN}$ is given in \eqref{Re_s}, and by the similar Hamilton-Jacobi equation
 with $V_0=\lambda_{\BO}+\BIGO(M^{-1})$ replaced by $\lambda_{\BO}$
 and $\LFT{\theta}_{\SCH}$ by $\LFT{\theta}_{\BO}$.

\subsubsection{A global construction coupling caustics with single WKB-modes}\label{global_sec}

We use a Hamiltonian system to construct solutions to the Schr\"odinger equation. Given a set of initial points $X_0\in \rset^{3N}$ 
the solution paths $\{(X_t,P_t)\in \rset^{6N}\SEP 0\le t< \infty,\,  H(P_0,X_0)=E\}$ of the Hamiltonian system
\[
  \begin{split}
    \dot X_t  &= \GRADP H(P_t,X_t)\\
    \dot P_t &= -\GRADX H(P_t,X_t)
  \end{split}
\]
with a smooth and bounded Hamiltonian $H(P,X)$ generate 
a $3N$-dimensional manifold called Lagrangian manifold.
The Lagrangian manifold defined by the tube of trajectories is defined by the phase
function $\theta(X)$ that plays the role of a generating function of the Lagrangian manifold.
Thus we seek a function $\theta:U\subset\R^{3N}\to \mathbb{R}$ 
such that  $P_t = \GRADX \theta(X_t)$.
We show that there exists a potential function $\theta$ 
 by determining an equation that preserves the symmetry for the matrix $Q_t$, defined as
$Q^{ij}(X):=\partial_{X^j}P^i(X)$ and $Q^{ij}_t:=Q^{ij}(X_t)$.
The relations $P^i_t=P^i(X_t)$ and $Q_t^{ij}:=\partial_{X^j}P^i(X_t)$ imply 
\[
    \dot P^i_t = \frac{d}{dt} P^i(X_t)=\sum_j \dot X^j_t \partial_{X^j}P^i_t 
    = \sum_{j}\dot{X}^j_t Q^{ij}_t = \sum_{j} \partial_{P^j} H\big(P(X_t),X_t\big)Q_t^{ij}\COMMA
\]
so that 
\[
\begin{split}
\partial_{X^k}\dot P^i_t &= \partial_{X^k}\Big( \sum_{j} \partial_{P^j} H\big(P(X_t),X_t\big)Q_t^{ij}\Big)\\
&=\underbrace{\sum_j \dot X^j_t \partial_{X^k}Q^{ij}_t}_{=\sum_j \dot X^j_t\partial_{X^kX^j}P^i_t 
=\sum_j \dot X^j_t\partial_{X^jX^k}P^i_t =\dot Q^{ik}_t} 
+\sum_j \partial_{P^jP^l} H\big(P(X_t),X_t\big)\underbrace{\partial_{X^k}P^l}_{=Q^{lk}} Q^{ij} \\
&+ \sum_j \partial_{P^jX^k} H\big(P(X_t),X_t\big) Q^{ij}_t
\end{split}
\]
and 
\[
\partial_{X^k}\dot P^i_t= -\partial_{X^k}\Big(\partial_{X^i}H\big(P(X_t),X_t\big)\Big)
= -\partial_{X^iX^k}H\big(P(X_t),X_t\big)\Big)
- \sum_j \partial_{X^iP^j} H\big(P(X_t),X_t\big)\underbrace{\partial_{X^k}P^j}_{=Q^{jk}_t}
\] 
together with the symmetry of $Q_t$ show that
%
\begin{equation}\label{q_sym}
\begin{split}
  \dot{Q}_t^{ik} &=  
  - \partial_{X^iX^k} H(P_t,X_t) -\sum_{j,l}\partial_{P^j P^l} H(P_t,X_t) Q_t^{kl}Q_t^{ij}\\
 &\qquad - \sum_j \partial_{P^jX^k} H(P_t,X_t) Q^{ij}_t
  - \sum_j \partial_{P^jX^i} H(P_t,X_t) Q^{kj}_t\PERIOD
  \end{split}
\end{equation}
Since the Hamiltonian is assumed to be smooth it follows that the right hand side in \VIZ{q_sym} is symmetric and thus the matrix $Q_t$ 
remains symmetric if it is initially symmetric. Hence there exists a potential function $\theta(X)$ such that
$P(X) = \GRADX \theta(X)$ in simple connected domains where  $Q$ is smooth. The function $Q$
may become unbounded due to the term $\partial_{P^jP^l} H \, Q^{kl} Q^{ij}$, even though $H$ has bounded third derivatives.
Points $X_t$ at which 
$\left|\TRACE(Q_t)\right|=\infty$ 
satisfy, by Liouville's theorem (see Section~\ref{liouville}),
$\left|\DET \frac{\partial X_0}{\partial X_t}\right|=\infty$ and such points are called {\it caustic} points. 

The same construction of a potential works for the local chart expressed as $X=X(P)$ instead of $P=P(X)$. 
In fact any new variable $\hat X$ (not including both $X^i$ and $P^i$ for any $i$),
based on $3N$ of the $6N$ variables $(X,P)$, and the remaining variables $3N$ variables, $\hat P$,
represent the same Hamiltonian system  with the Hamiltonian $\hat H(\hat P,\hat X):=H(P,X)$.
The Lagrangian manifold is defined by $\hat P = \GRAD_{\hat X}\hat\theta(\hat X)$ in the local chart of
$\hat P$-coordinates with the generating (potential) function $\hat\theta(\hat X)$ defined
in domains excluding caustics, i.e., where $\DET \left|\tfrac{\partial \hat X_0}{\partial \hat X_t}\right|<\infty$.
Maslov, \REF{maslov}, 
realized that a Lagrangian manifold can be partitioned, by changing coordinates in the 
neighborhood of a caustic,
into  domains where $\hat P=\GRAD_{\hat X}\hat\theta(\hat X)$ is smooth.
He used the generating (potential) functions $\hat\theta$  to construct asymptotic WKB solutions of Fourier integral type.
A sketch of this general situation is depicted in Figure~\ref{phasespacecaustics}.
In previous sections we have described global construction of solutions in a simpler case without caustics, i.e., $P_t=\GRADX\theta(X_t)$ holds everywhere.
In this section we describe the global construction of WKB solutions in the general case when  caustics are present.

\begin{figure}[ht]
 

\includegraphics[scale=0.4]{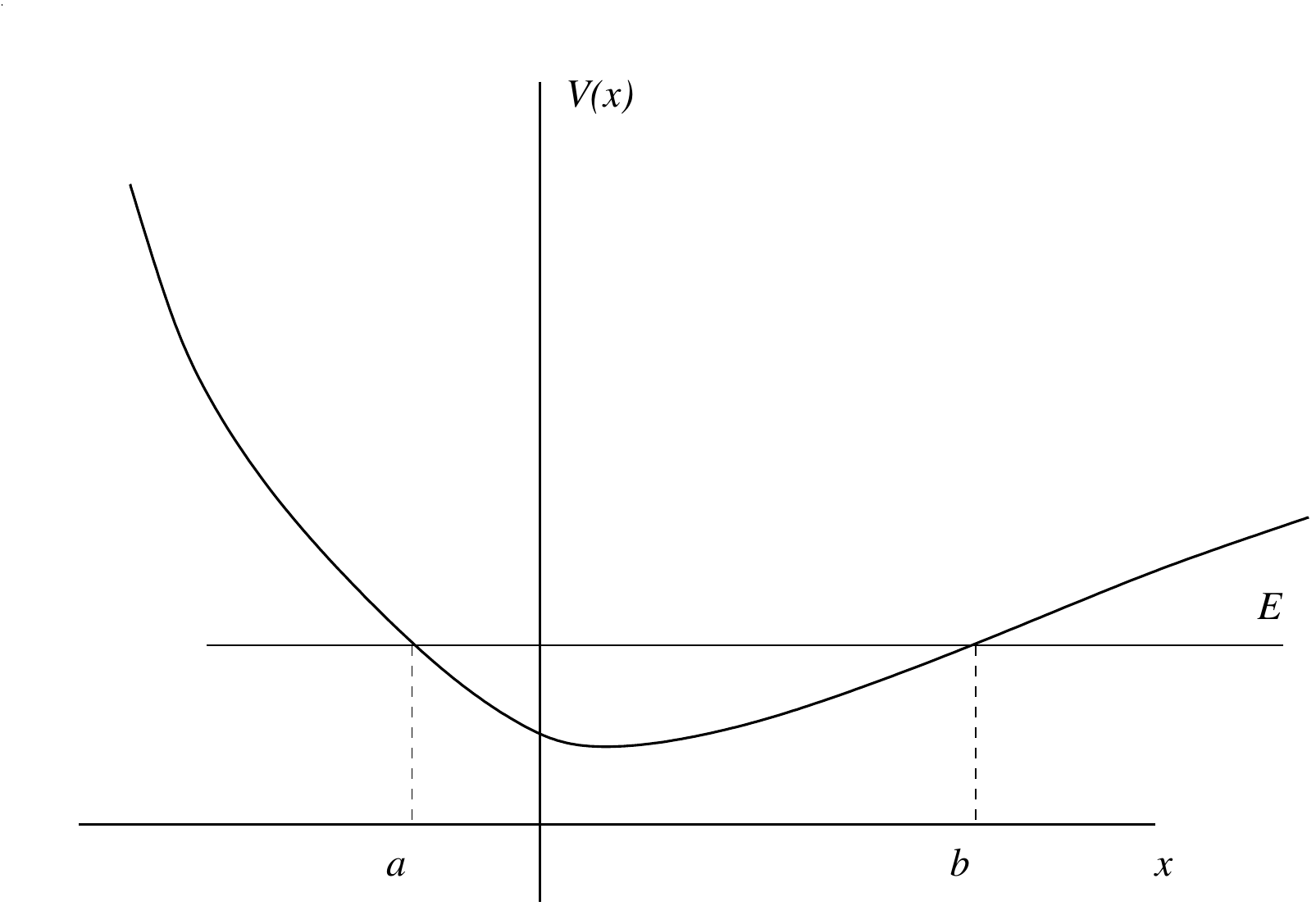}
\hspace*{0.15cm}
\includegraphics[scale=0.54]{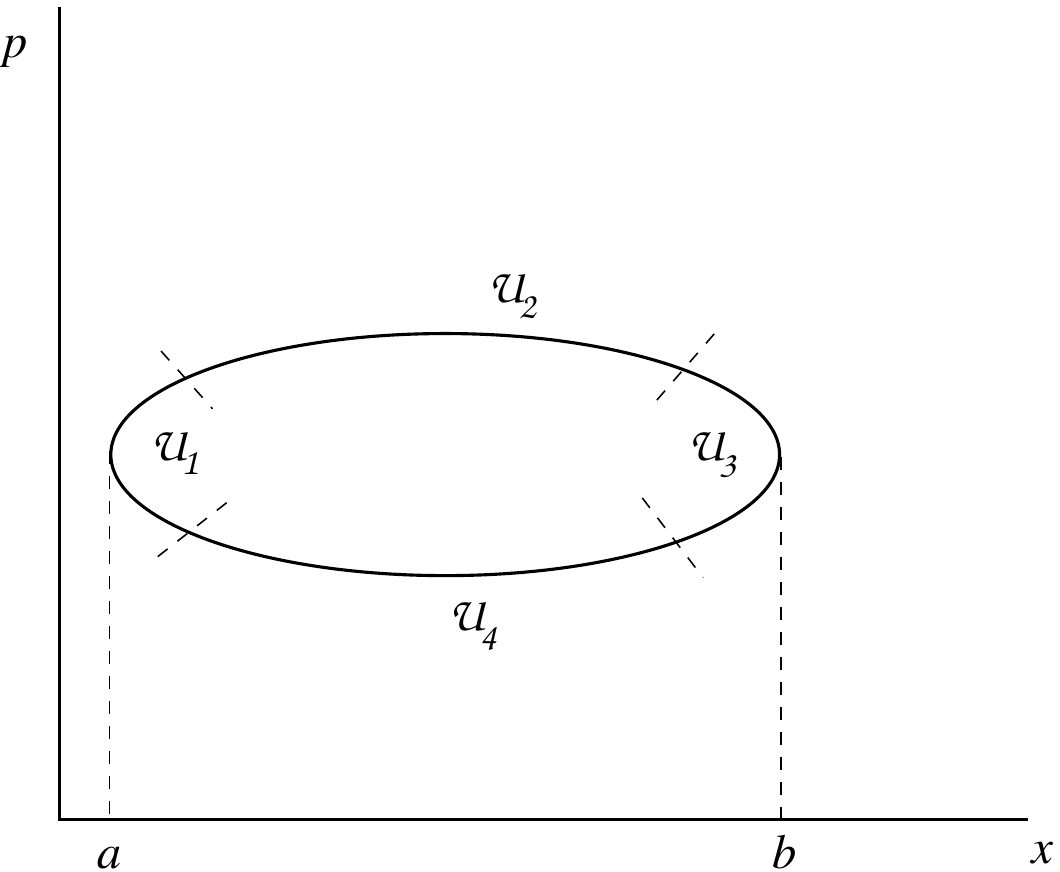}

\caption{
The left figure depicts a graph of 
the molecular dynamics potential $\lambda(X)$ 
in the case which exhibits caustics at $X=a$ and $X=b$ for a given energy $E$.
The right figure shows a general case of the Lagrangian manifold with two caustic points $X=a$ and $b$ and its covering with charts $\mathcal{U}_i$. In the 
charts $\mathcal{U}_i$, $i=2,4$ the manifold is defined by $P=\GRADX\theta_i(X)$ and the solution to Schr\"odinger
equation is constructed by simple WKB modes. The caustics belong to the charts $\mathcal{U}_i$, $i=1,3$ and in this
case the manifold is defined by $X = \GRADX\theta_i(P)$ and the solutions are given by the Fourier integrals.} 
\label{phasespacecaustics}

\end{figure}

\medskip

We see that the weight function $G$, in \eqref{liouville_form}, based on a single WKB-mode \eqref{wkb_form} blows up
at caustics, where $\DET(\partial (\CHECKX)/\partial (\CHECKP))=0$, and that the weight function $G$ in
\eqref{caustic_transport} for the Fourier integral \eqref{caustic_ansats}
blows up at points where $\DET(\partial (\CHECKP)/\partial (\CHECKX))$ vanishes.
Therefore, in neighborhoods around
caustic points we need to use the representation  $\LFT{\theta}(\HATX,\CHECKP)$ of the phase 
based on the Fourier integrals, while around points where $\DET(\partial (\CHECKP)/\partial (\CHECKX))$ vanishes
we apply the representation $\theta(\HATX,\CHECKX)$ based on the Legendre transform,
as pointed out by Maslov in \REF{maslov} and described in the simplifying setting of the harmonic oscillator in \REF{eckman}.

One way to make a global construction of a WKB solution, which is slightly different than in 
\REF{maslov},
is to use the characteristics and a partition of the phase-space as follows, also explained constructively by the numerical algorithm \ref{alg_2} in the next section.
Start with a Fourier integral representation in a neighborhood $\NEIGH$ of a caustic point, which gives
a representation of the Schr\"odinger solution $\Phi$ in $\NEIGH$.
Then we use the stationary phase expansion, see Section~\ref{stat_phase_sec}, 
 to find an asymptotic approximation $\tilde\Phi$ (accurate to any order $\bar N\in \mathbb{N}$) 
at the boundary points $\CHECKX$ of
$\NEIGH$ as a sum of single WKB-modes with phase functions $\theta_j$
\[
   \int_{\rset^d} \chi(\CHECKP)\EXP{-\Iunit M^{1/2}(\EPROD{\CHECKX}{\CHECKP} -\LFT{\theta}(\HATX,\CHECKP))}\, d\CHECKP
     =\sum_j \EXP{-\Iunit M^{1/2}\theta_j(X)} \hpsi_j(X) +\BIGO(M^{-\bar N})
\]
where each phase function $\theta_j(X):=\EPROD{\CHECKX}{\CHECKP_{X,j}}-\LFT{\theta}(\HATX,\CHECKP_{X,j})$
corresponds to a branch of the boundary and  the index $j$ corresponds to different solutions $\CHECKP_{X,j}$
of the stationary phase equation $\CHECKX=\nabla_{\CHECKP}\LFT{\theta}(\HATX,\CHECKP_X)$.
The single WKB-modes $\hpsi(x,X) \EXP{\Iunit M^{1/2}\theta(X)}$ are then constructed along the characteristics 
to be Schr\"odinger solutions in a domain 
around the point where $\DET(\partial (\CHECKP)/\partial (\CHECKX))$ vanishes, following the construction 
in Theorem~\ref{thr:schrodinger-hamiltonian_first}
using the initial data of $\tilde\Phi$ at $\partial\NEIGH$. We note that
the tiny error of size $\BIGO(M^{-\bar N})$ that we make in the initial data for $\hpsi$
 also yields a tiny perturbation error in $\hpsi$ of size $\BIGO(M^{-\bar N})$ along the
path, due to the assumption of the finite hitting times. 
A small error we make in the
expansion therefore leads to a negligible error in the Schr\"odinger solution and 
the corresponding density.

When a characteristic leaves the domain and enters another region around a caustic
we again use the stationary phase method at the boundary to
give initial data for $(X,P,\hpsi,G)$. When the characteristic finally returns to the first boundary 
$\partial \NEIGH$,
there is a compatibility condition  to have a global solution, by having the incoming final phase equal to the
initial phase function in $\mathcal C^1$.
We can think of this as trying to find a co-dimension one surface $I$ in $\rset^{3N}$
where the incoming and outgoing phases are equal. First to have one point where they
agree is possible if we
restrict the possible solutions to a discrete set of energies $E$, i.e., the eigenvalues, and
therefore the compatibility condition is called  a quantization condition. Then, having one
point where the difference of the two phase function is zero, we can 
combine this with 
the assumption that the Lagrangian manifold generated by the characteristics path 
$(X_t,P_t)$ is continuous: the two phases have the same gradient on $I$, since 
$(X,P)=(X,\GRADX\theta(X))=\Big(\big(\hat X,\nabla_{\check P} \theta^*(\hat X,\check P)\big), \big(\nabla_{\hat X}\theta^*(\hat X,\check P),\check P\big)\Big)$ 
so the phases are $\mathcal C^1$.
In this way we define the $(X,P,\hpsi,G)$ globally, for the eigenvalue energies $E$.
To evaluate observables we use a partition of unity to restrict the 
observable to a domain with a single representation, either
a Fourier integral representation for a caustic or a single WKB-mode
when $\DET(\partial (\CHECKP)/\partial (\CHECKX))=0$.

\section{Numerical examples}

In order to demonstrate the presented theory we consider two different low dimensional 
Schr\"odinger problems. For both of these problems we show that there exists a
Schr\"odinger eigenfunction density which converges weakly to 
the corresponding molecular dynamics density as $M\to \infty$ 
with a convergence rate within the upper bound predicted in the theoretical 
part of this paper.

\subsection{Example 1: A single WKB state}

The first problem we consider is the time-independent Schr\"odinger
equation
\begin{equation}\label{eq:schrod_example_1}
\HOPER\Phi := \left(-\frac{1}{2M}\partial_{XX} + \overline \VOPER\right) \Phi = E \Phi
\end{equation}
with heavy coordinate $X \in (-\pi,\pi]$ and two-state light
coordinate $x \in \{x_-,x_+\}$. Periodicity is assumed over the heavy
coordinate, $\Phi(X,x) = \Phi(X+2\pi,x)$, and the potential operator 
$\overline\VOPER$ is defined by the matrix
\begin{equation} \label{eq:VXiMatrix}
   \overline \VOPER(X) = \left[\begin{array}{cc}
                               V(X)& \tfrac{1}{2}V(X)e(X) +c\\
                               \tfrac{1}{2}V(X)e(X) + c & 0
                         \end{array}\right]\COMMA
\end{equation}
where we have chosen $V(X) = -2\cos(X)+ \cos(4X)$,  $e(X) = 1+X^2$ and 
$c$ to be a non-negative constant relating to the size 
of the spectral gap of $\overline \VOPER$. The action
$\overline \VOPER \Phi$ is thus defined by
$$
   (\overline \VOPER\Phi)(X,\cdot)\equiv \overline \VOPER(X) 
                       \left(\begin{array}{c} 
                          \Phi(X,x_-)\\
                          \Phi(X,x_+)
                       \end{array}\right)\PERIOD
$$
For each $X$ the potential matrix \eqref{eq:VXiMatrix} gives rise to 
the eigenvalue problem
$$
   \overline \VOPER (X) \upsilon = \lambda_\pm (X) \upsilon
$$
with the eigenvalues
$$
  \lambda_{\pm}(X) = \frac{1}{2} \left( V(X) \pm \SGNF(X) \sqrt{V(X)^2+4\big(V(X)e(X)/2+c\big)^2}\right)\COMMA
$$ 
where $\SGNF(X)=\pm 1$ as defined below.
When constructing the molecular dynamics density for this problem
$$
\rho_{\MD}(X)  = \frac{C}{\sqrt{2(E-\lambda(X))}}\COMMA
$$ 
one has to determine on which of the two eigenfunctions $\lambda_{\pm}$
to base this density. When $c=0$ the difficulty that the eigenvalue
functions $\lambda_+$ and $\lambda_-$ can cross is added to the
problem. In order to determine the continuation of eigenvalue functions at
the crossings we introduce a function $\SGNF(X)$ which is a sign function with
$\SGNF(-\pi) = 1$ that changes sign at points where
$$
V(X)^2+4\left(\frac{1}{2}V(X)e(X)+c\right)^2=0\PERIOD
$$ 
Since this situation can only occur when $c=0$, it is possible to set
$$
\SGNF(X):=\SGN(V(-\pi))\SGN(V(X))\PERIOD
$$
See Figure~\ref{fig:ev_functions_crossing} for a typical
eigenvalue function crossing, which makes the function 
$\lambda_{\pm}:\rset\rightarrow\rset$ smooth (in contrast to the choice $\SGNF\equiv 1$).

To solve \eqref{eq:schrod_example_1} numerically, we use the finite difference method to 
discretise the operator $\HOPER$ on a grid $\{X_j\}_{j=1}^N \times \{x_- , x_+\}$ 
with the step-size $h= 2\pi/N$ and $X_j = jh$. The discrete eigenvalue problem 
$$
\HOPER^{(h)} \Upsilon_j = E_j \Upsilon_j
$$
is solved for the 10 eigenvalues being closest to the fixed energy $E$ and 
a molecular dynamics approximation of the eigensolution is constructed by
$$
\Phi_{\MD}(X,x) := \sqrt{\rho_{\MD}(X)}\,\EXP{\Iunit M^{1/2}\Theta(X)}\upsilon(X,x)\COMMA
$$
where $\upsilon(X,\cdot)$ is one of the eigenvectors of $\overline \VOPER(X)$ and  
\begin{equation}\label{eq:Theta_FunctionA}
   \Theta(X) := \int_0^X \sqrt{2(E_1-\lambda(s))}\, ds
\end{equation}
is approximated by a trapezoidal quadrature yielding $\Theta^{(h)}$.
Thereafter a Schr\"odinger eigensolution $\WIDEHATPHI$ 
which is close to the molecular dynamics
eigensolution is obtained by projecting $\Phi_{MD}$ onto the subspace
spanned by $\{\Upsilon\}_{j=1}^{\bar J}$ as described in 
Algorithm~\ref{alg:projection_algorithm}.
By denoting $\rho_{\WIDEHATPHI}(X) = \LPROD{\WIDEHATPHI}{\WIDEHATPHI}$ 
and  $\rho_{\MD}(X) = \LPROD{\Phi_{\MD}}{\Phi_{\MD}}$, the 
observables $g_1(X) = X^2$ and $g_2(X) = V(X)$ are used to compute 
the convergence rate of 
\begin{equation}\label{eq:conv_rate_densities}
  \left|\frac{\int_{-\pi}^{\pi} g_i(X) \rho_{\MD}(X)\, dX - 
              \int_{-\pi}^{\pi} g_i(X) \rho_{\WIDEHATPHI}(X)\,dX}%
             {\int_{-\pi}^{\pi} g_i(X) \rho_{\MD}(X)\, dX}
  \right|\COMMA
\end{equation}
as $M$ increases. Further details of the numerical solution idea are
described in Algorithm~\ref{alg:pseudoCode}.

Plots of the results for the test case with the spectral gap
$c=5$ and $E=0$, and for the test case with crossing eigenvalue
functions when $c=0$ and $E=1.2$ are given below. Most noteworthy is
Figure~\ref{fig:conv_rates}, which demonstrates that the obtained convergence
rate for \eqref{eq:conv_rate_densities} is $\BIGO(M^{-1})$ for both
scenarios.

\begin{figure}[h!]
  \centering
    \includegraphics[width=.49\textwidth]{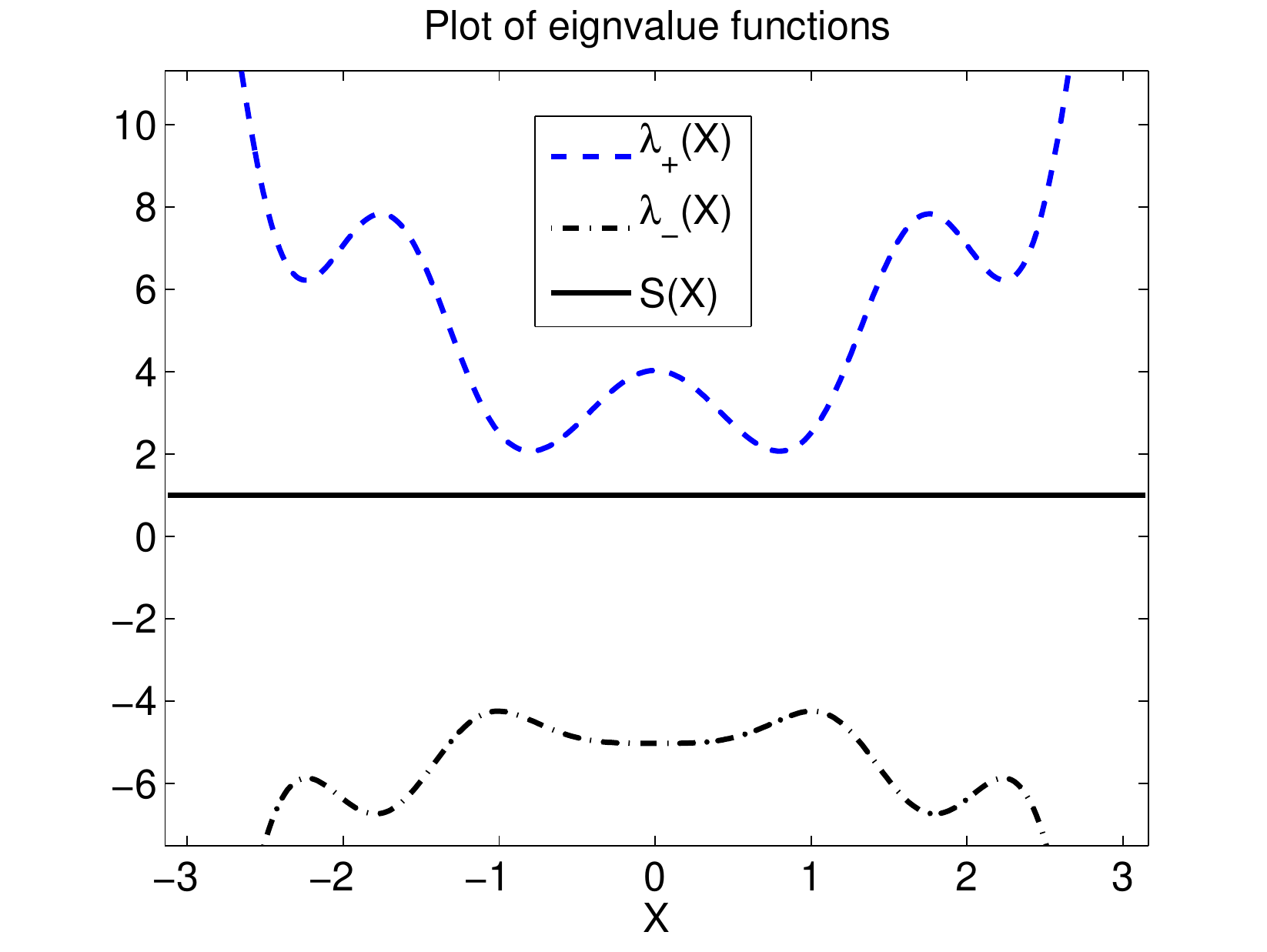}
    \includegraphics[width=.49\textwidth]{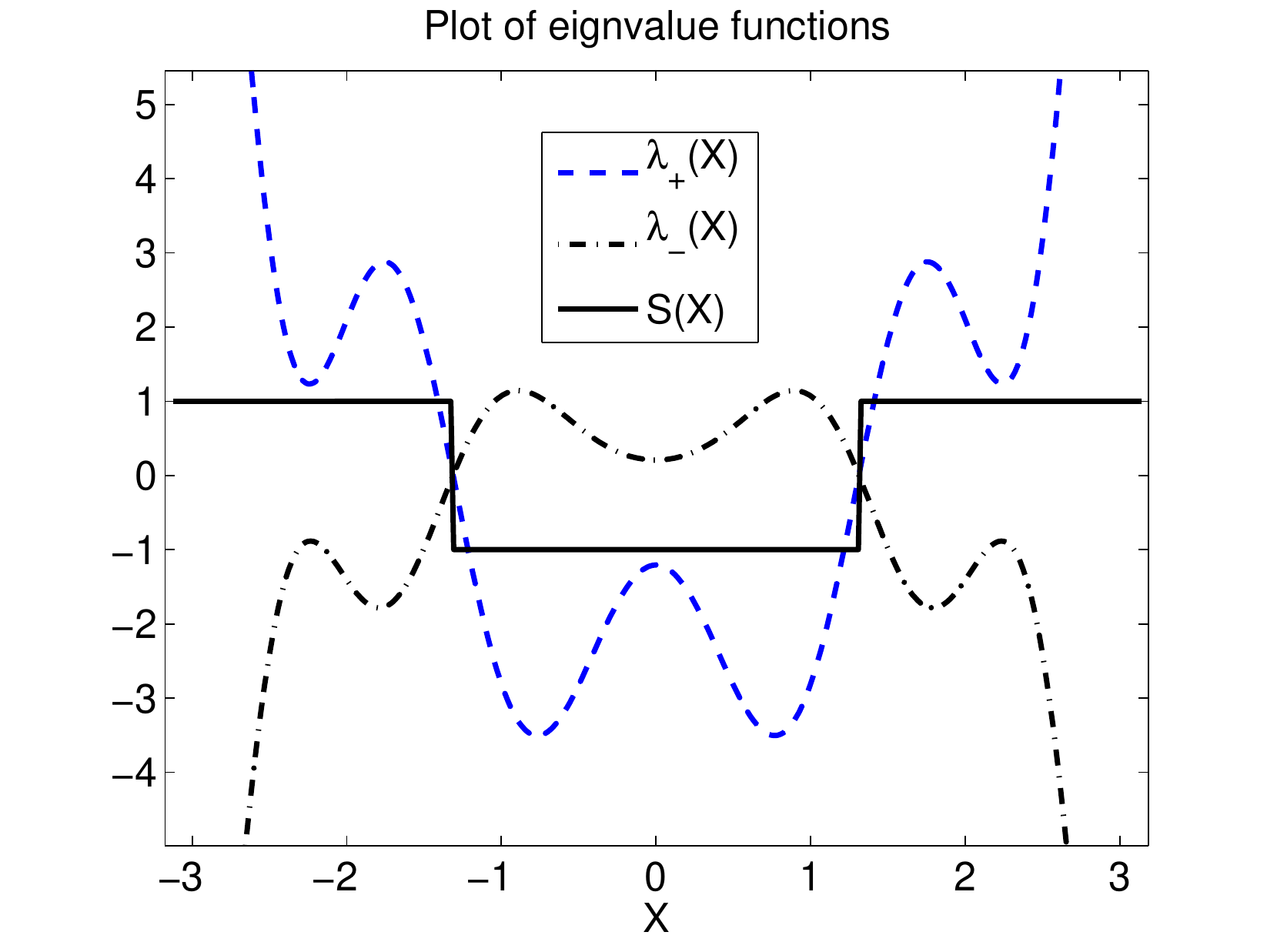}
    \caption{Left plot: Eigenvalue functions when $c=5$. There is a 
      spectral gap which makes the sign function constant $S=1$.  
      Right plot: Eigenvalue functions when $c=0$. The eigenvalue functions
      exhibit crossing, consequently the function $S$ changes its sign from $\pm 1$ to $\mp
      1$ at the crossing points.}
\label{fig:ev_functions_crossing}
\end{figure}

\begin{algorithm}[!h]
\caption{Algorithm for problems in Example 1}
\begin{algorithmic}\label{alg:pseudoCode}
\STATE {\bf Input: } Energy $E$; potential functions $V$, $e$ and $c$; mass $M$;
number of grid points $N$ and grid $\{X_i\}_{i=1}^N$.

\STATE {\bf Output: } Schr\"odinger projection density $\rho_{\WIDEHATPHI}$.

\medskip

\STATE {\bf 1.} Construct the discrete operator $\HOPER^{(h)}$ from
\eqref{eq:schrod_example_1} using finite differences and solve the
eigenvalue problem
$$
  \HOPER^{(h)} \Upsilon_i = E_i \Upsilon_i
$$
for the 10 eigenvalues being closest to $E$ by using MATLAB \textbf{eigs(H,10,E)}.

\medskip

\STATE {\bf 2.} Sort the eigenvalues and eigenvectors by distance from $E$ and 
keep only the $E_i$s which are less than $M^{-1/2}$ away from $E$.
Let $\bar J$ be the number of kept eigenvalues and $E_0$ the eigenvalue closest to $E$.

\medskip

\STATE {\bf 3.}
\FOR{$i=1$ to $N$}
   \STATE Solve the eigenvalue problem 
   $$
     \overline \VOPER(X_i,\cdot) \upsilon_\pm(X_i,\cdot) = \lambda_\pm(X_i) \upsilon_\pm(X_i,\cdot)\COMMA
   $$ 
   where $\overline \VOPER$ is the matrix defined in \eqref{eq:VXiMatrix}.
\ENDFOR

\medskip

\STATE {\bf 4.} Construct the molecular dynamics density according to the formula
$$
  \rho_{\MD}(X) = \frac{\left(E_0-\lambda(X)\right)^{-1/2}}{\int_{[0,2\pi]} \left(E_0-\lambda(X)\right)^{-1/2}\, dX}\COMMA
$$
where we choose $\lambda(X)$ above from the two eigenvalues $\lambda_\pm(X)$ by
the criterion that the eigenvalue chosen must fulfil $\|\lambda\|_\infty < E_0$.

\medskip 

\STATE  {\bf 5.} Construct a discrete molecular dynamics approximation to the eigenfunction
\begin{equation}\label{eq:phi_Function}
  \Phi_{\MD}(X,x) = \sqrt{\rho_{\MD}(X)}\EXP{\Iunit M^{1/2}\Theta(X)}\upsilon(X,x)\COMMA
\end{equation}
where $\upsilon(X,x)$ is one of the eigenvectors $\upsilon_\pm$, 
\begin{equation}\label{eq:Theta_FunctionB}
  \Theta(X) :=  \int_0^X \sqrt{2(E_1-\lambda(s))}\, ds\COMMA
\end{equation}
and we approximate $\Theta$ by a trapezoidal quadrature $\Theta^{(h)}$.

\medskip

\STATE {\bf 6.} Project the molecular dynamics solution $\Phi_{\MD}$ onto the eigenspace 
$\{\Upsilon_i\}_{i=1}^{\bar J}, \  \bar J\leq 10$ by 
Algorithm~\ref{alg:projection_algorithm} to obtain a projection
solution $\WIDEHATPHI$.

\medskip

\STATE {\bf 7.} Derive the Schr\"odinger projection density 
by
\FOR{$i=1$ to $N$}
\STATE
$$
{\rho_{\WIDEHATPHI}}(X_i) = |\WIDEHATPHI(X_i,x_-)|^2+|\WIDEHATPHI(X_i,x_+)|^2\COMMA
$$
\ENDFOR
\STATE and scaling $\rho_{\WIDEHATPHI} =  \rho_{\WIDEHATPHI}/\|\rho_{\WIDEHATPHI}\|$.

\end{algorithmic}
\end{algorithm}

\begin{algorithm}
\caption{Projection algorithm}
\begin{algorithmic}\label{alg:projection_algorithm}
\STATE {\bf Input: } Mass $M$; wave solution $\Phi$; eigenvalues $\{E_i\}_{i=1}^{\bar J}$ and corresponding 
eigenvectors $\{\Upsilon_i\}_{i=1}^{\bar J}$.

\STATE {\bf Output: } Schr\"odinger 
projection wave solution $\WIDEHATPHI$.

\medskip

\STATE {\bf 1.} Organize eigenvalues by multiplicity by a numerical 
approximation. Construct a $\bar J \times \bar J$, zero matrix $A$ which  
keeps track of multiplicity relations as follows:
\FOR{$i=1$ to $\bar J$}
\FOR{$j=i$ to $\bar J$}
\IF {$|E_i-E_j| < M^{-3/4}$ }
\STATE Consider eigenvalues equal since the expected spectral gap is $\BIGO(M^{-1/2})$,
and store this relation by 
\IF {$A_{kj}=0$ for all $k<i$}
\STATE Set $A_{ij}=1$.
\ENDIF
\ENDIF
\ENDFOR
\ENDFOR

\medskip
\STATE {\bf 2.} For vectors $b \in \{0,1\}^{\bar J}$, define the projection 
$$
\Phi^{(h,b)} := \sum_{j,k=1}^{\bar J} b_k A_{k,j} 
\langle\!\langle\Phi,\Upsilon_j\rangle\!\rangle \Upsilon_j
$$
and, letting $\rho$ and $\rho_{\Phi^{(h,b)}}$ denote the densities generated by 
$\Phi$ and $\Phi^{(h,b)}$ respectively, set
$$
b^* = \arg \min_{b \in \{0,1\}^{\bar J}}\|\rho-\rho_{\Phi^{(h,b)}}\|.
$$

\STATE {\bf 3.} Return the projection $\Phi^{(h)} := \Phi^{(h,b^*)}$.

\end{algorithmic}
\end{algorithm}

\begin{figure}[h!]
  \centering
   \includegraphics[width=.49\textwidth]{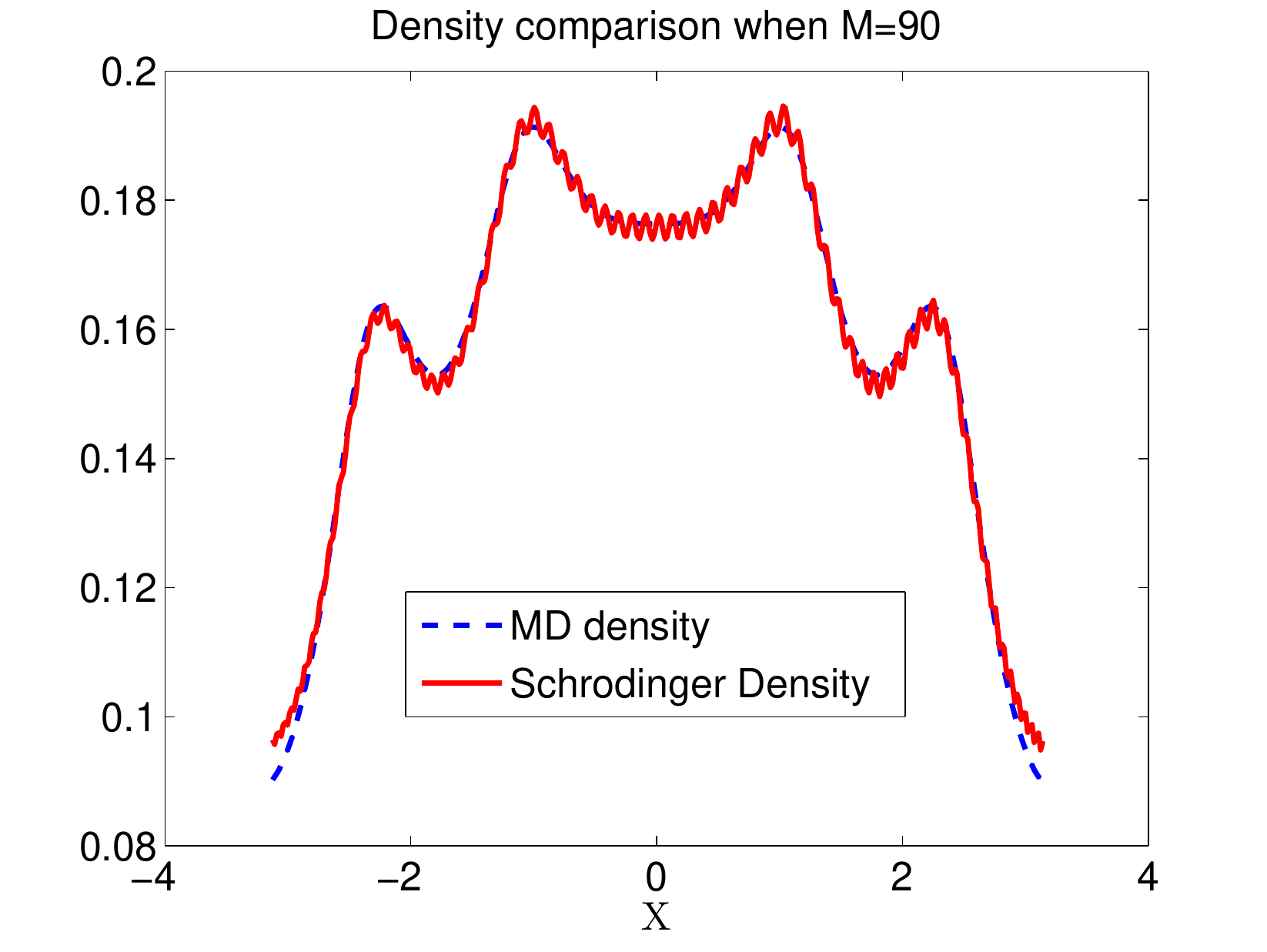}
   \includegraphics[width=.49\textwidth]{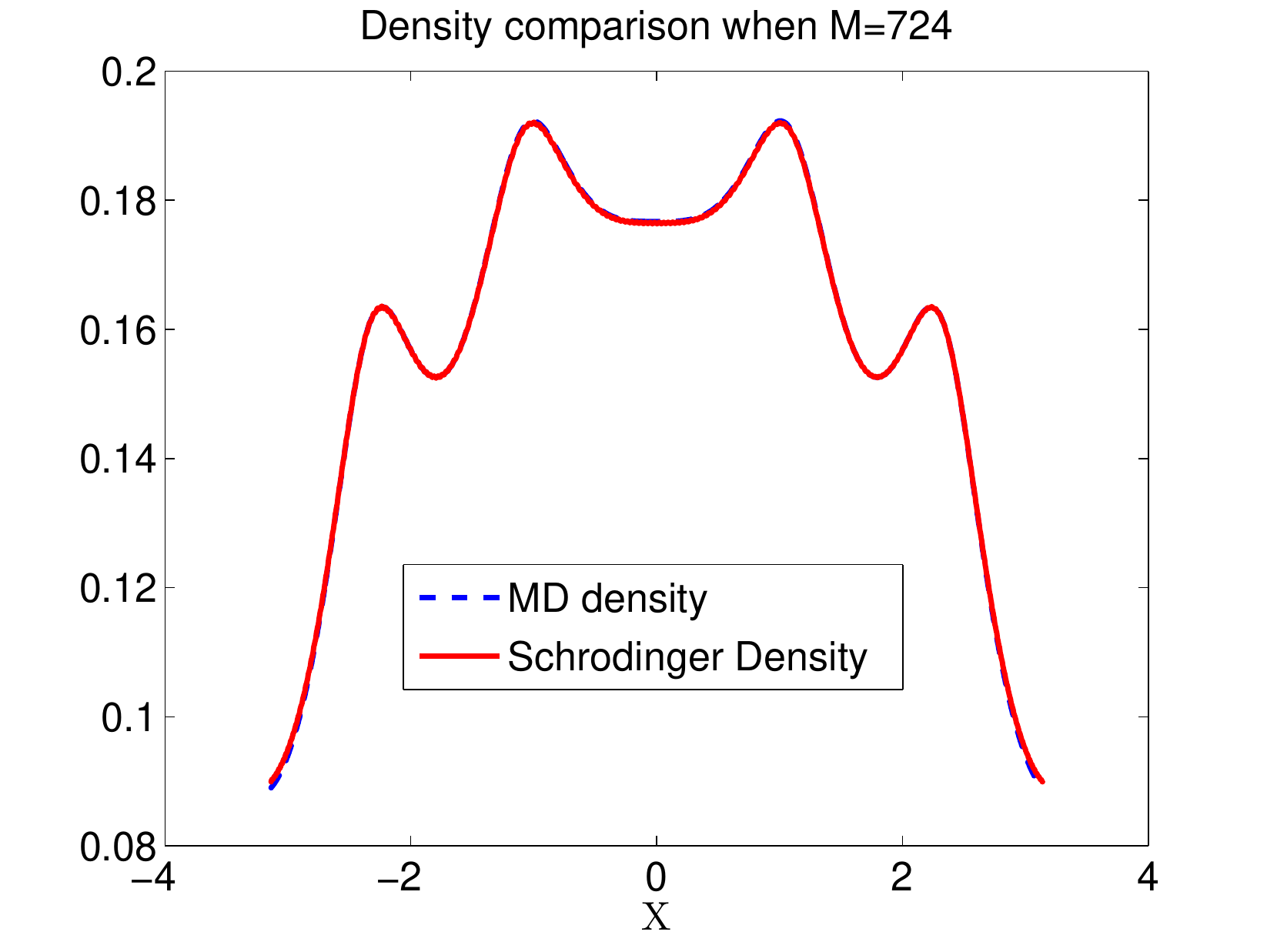}
   \caption{Plot of the MD density $\rho_{\MD}$ and the Schr\"odinger projection 
     density $\rho_{\WIDEHATPHI}$ in the case $c=5$ and $E=0$ for the two
     different masses $M=90$ (left plot) and $M=724$ (right plot)
     illustrating the convergence of the densities.}
   \label{fig:density_plots_gap}
\end{figure}

\begin{figure}[h!]
  \centering
   \includegraphics[width=.49\textwidth]{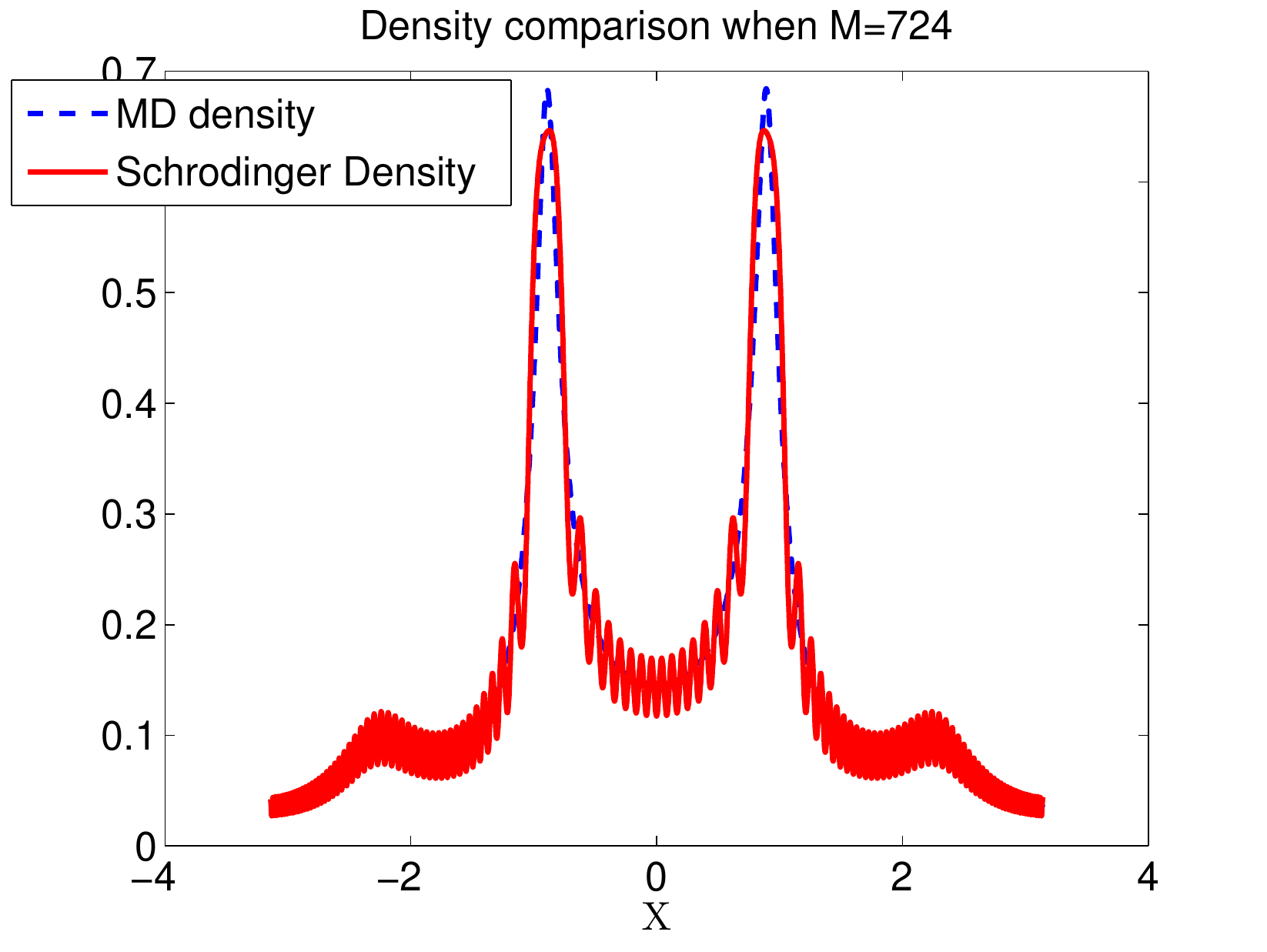}
   \includegraphics[width=.49\textwidth]{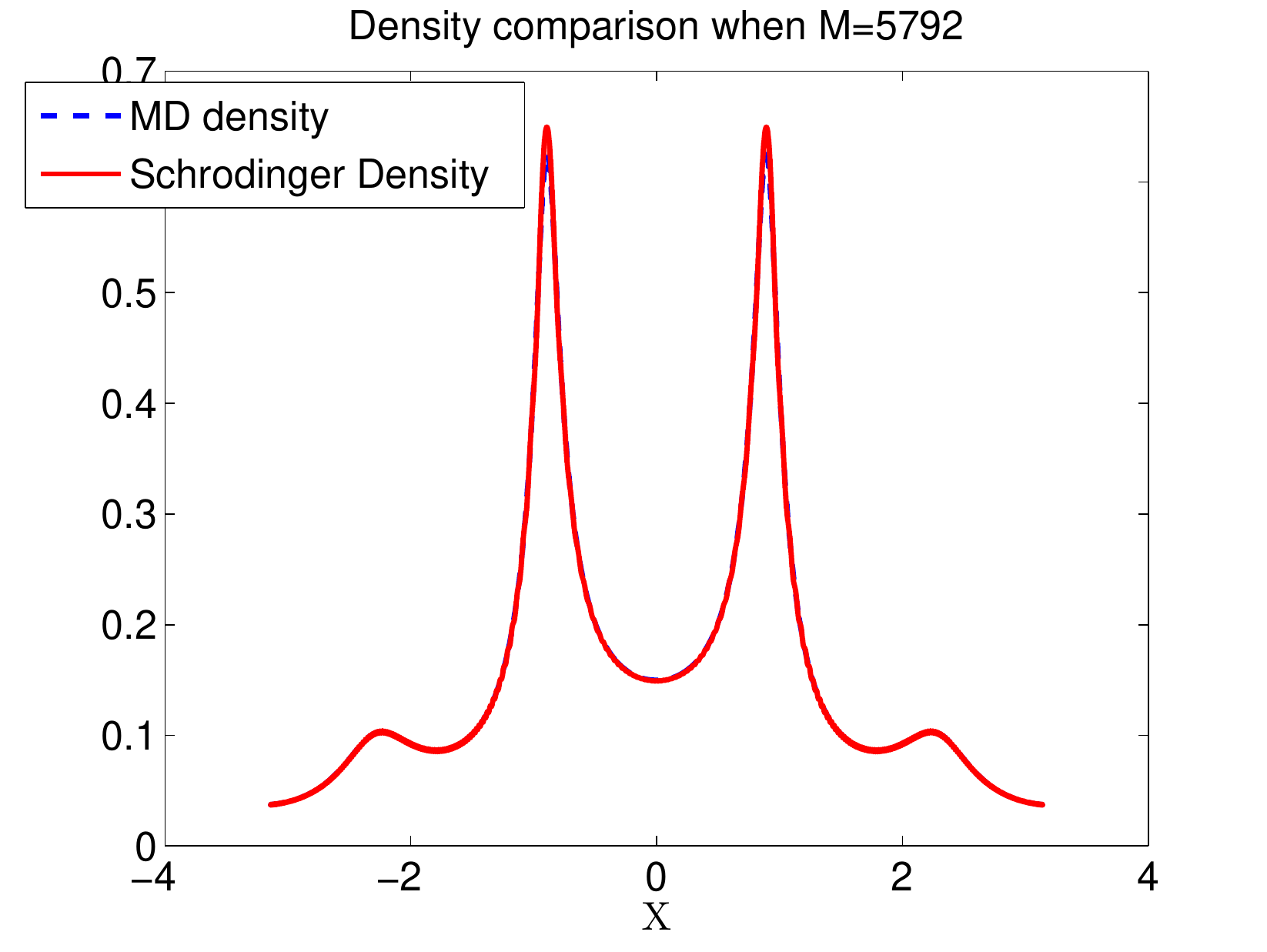}
    \caption{Plot of the MD density $\rho_{\MD}$ and Schr\"odinger projection 
      density
      $\rho_{\WIDEHATPHI}$ in the case $c=0$ and $E=1.2$ for the two
      different masses $M=724$ (left plot) and $M=5792$ (right plot).}
    \label{fig:density_plots_no_gap}
\end{figure}

\begin{figure}[h!]
  \centering
    \includegraphics[width=.49\textwidth]{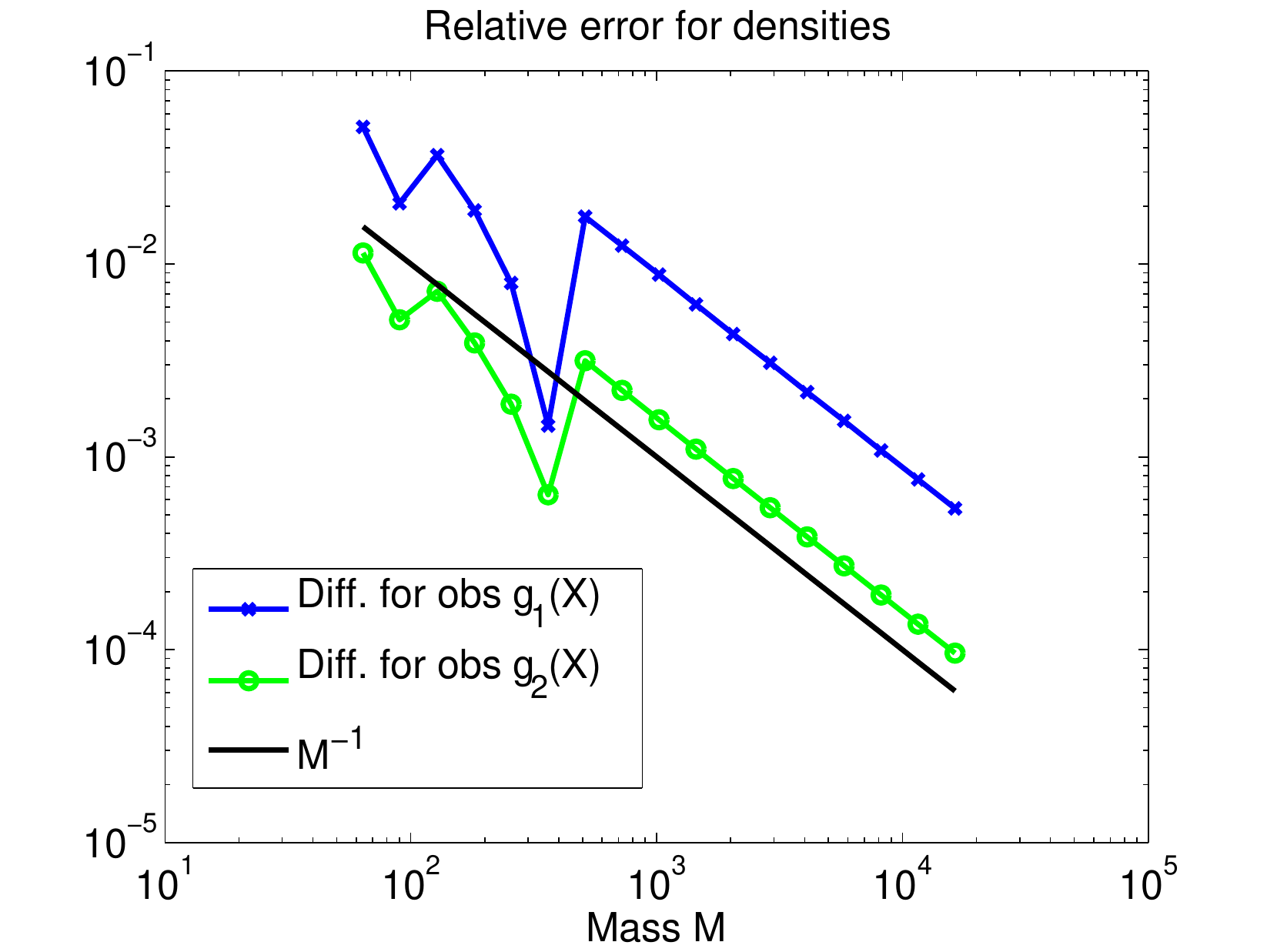}
    \includegraphics[width=.49\textwidth]{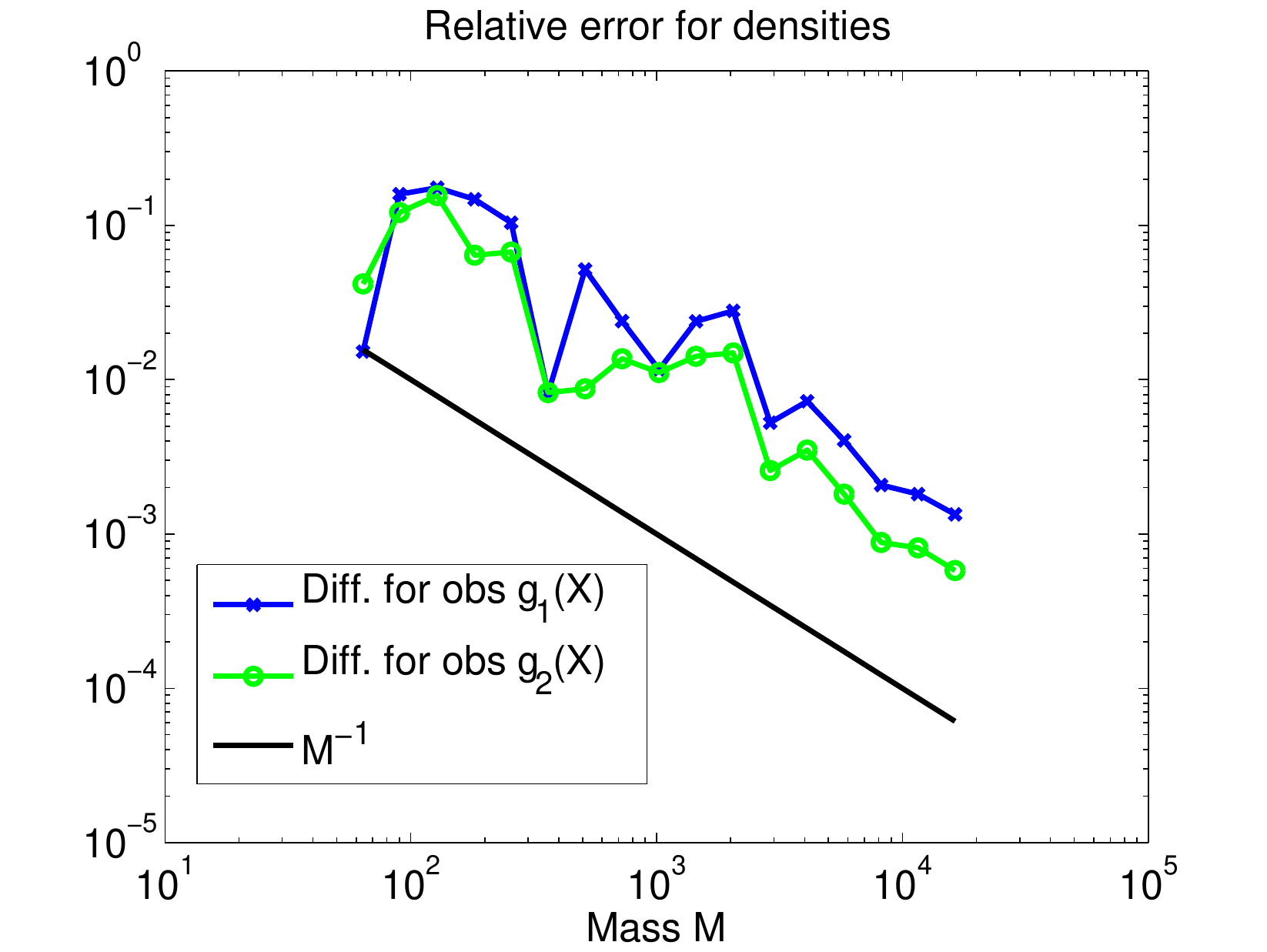}
    \caption{Left plot: Plot of the observable density errors given in 
      \eqref{eq:conv_rate_densities} with an eigenvalue gap, when $c=5$ and $E=0$. 
      Right plot: Plot of the observable density errors given in 
      \eqref{eq:conv_rate_densities} with an eigenvalue crossing, when $c=0$ and $E=1.2$.}
\label{fig:conv_rates}
\end{figure}

\subsection{Example 2: A caustic state}
Next, we consider the one dimensional, time independent, periodic Schr\"odinger equation 
\begin{equation}\label{eq:schrod_example_2}
  \left(-\frac{1}{2M}\partial_{XX}+ V\right) \Phi = E \Phi\COMMA \;\;\; X\in (-2\sqrt{E},2\sqrt{E})
\end{equation}
with $V(X) = X^2$ and $E=1$.
The eikonal equation corresponding to \eqref{eq:schrod_example_2} is
\begin{equation}\label{eq:caustics_hamiltonian}
   \frac{1}{2}{P^2} + V(X) = E\PERIOD
\end{equation}
As in Example~1, we would like to use the eikonal equation to construct a numerical 
approximate solution of \eqref{eq:schrod_example_2}
whose density converges weakly as $M \to \infty$ to the density generated from a
solution of \eqref{eq:schrod_example_2}.
The molecular dynamics density corresponding to this eikonal equation becomes by \eqref{r_bo_dens}
$\rho_{\BO}=C (E-V(X))^{-1/2}$. The density $\rho_{\BO}$ goes to  infinity at the caustics 
$X = V^{-1}(E)=\pm\sqrt{E}$ and the approach in Example~1 does not work directly. 
We will instead construct the numerical
approximate solution using the stationary phase method as outlined
below based on the WKB Fourier integral ansatz.

By the Legendre transform 
$$
  \LFT{\theta}(P) = \min_{X} \big(X P-\theta(X)\big)
$$ 
an invertible mapping between the momentum and
position coordinates fulfilling $X = \GRADP \LFT\theta(P)$ is constructed.
Using  equation \eqref{eq:caustics_hamiltonian}, 
one sees that $\GRADP\LFT\theta(P) = V^{-1}(E-P^2/2)$. Since
$\LFT \theta(0) = 0$, one can derive that for this particular 
choice of $V$
$$
  \LFT{\theta}(P) = \int_0^P  \sqrt{E-s^2/2} \, ds =
  \frac{E}{\sqrt{2}}\left[\sin^{-1}\left(\frac{P}{\sqrt{2E}}\right)
   + \frac{P}{\sqrt{2E}}\sqrt{1-\frac{P^2}{2E}}\right]. 
$$ 
In neighbourhoods of the caustics $[-2E^{1/2},-X_0)$ and $(X_0,2E^{1/2}]$, we
construct the approximate solution by
$$ 
  \Phi(X) = \frac{u(X)}{\sqrt{|\GRADX V(X)|}}
$$
where $u$ is the inverse Fourier transform
$$
  u(X):= \int_{-2\sqrt{E}}^{2\sqrt{E}} \EXP{\Iunit M^{1/2}(-X P+\LFT\theta(P))}\, dP
$$ 
and $X_0\in (-V^{-1}(E),V^{-1}(E))$ is a value yet to be chosen.
In the region $(-X_0,X_0)$ the approximate solution is constructed by 
\begin{equation}\label{eq:approximate_solution_interior}
  \Phi(X) =C \frac{\overline u(X)}{(E-V(X))^{1/4}}\PERIOD
\end{equation}
Here
\begin{equation}\label{eq:stat_phase_u}
\overline u(X) := \EXP{-\Iunit M^{1/2}\theta(X)} \psi_+ + \EXP{\Iunit M^{1/2}\theta(X)} \psi_-\COMMA
\end{equation}
with, according to the Legendre transform, 
$\theta(X):=X \sqrt{2(E-V(X))}-\LFT\theta\left(\sqrt{2(E-V(X))}\right)$
and $\psi_\pm$ determined by the stationary phase method:
\begin{enumerate}
\item[{\bf 1.}] Set $P(p) = P_0+p$ with $P_0=\sqrt{2(E-V(X_0))}$ and let
  $$
    Y(p) := 
       \SGN(p)\sqrt{2\frac{-X(P_{0}+p) +\LFT\theta(P_{0}+p)+\theta(X_0)}{\partial_{PP}\LFT\theta(P_0)}}\COMMA
  $$
  using 
  \begin{equation}\label{eq:theta_function_caustics}
     \theta(X):= X \sqrt{2(E_0-V(X))}-\LFT\theta\left(\sqrt{2(E_0-V(X))}\right)\COMMA
  \end{equation}
  and determine its inverse $p(Y)$ in a neighbourhood of $Y=0$ by computing 
  $(p_i,Y(p_i))$ on a grid around $p=0$ and, for $k\geq 3$, fit a 
  $3k+1$th degree polynomial to the values $(Y(p_i),p_i)$ using the method of least squares.
\item[{\bf 2.}] 
  Evaluate the stationary phase expansion
  \begin{equation}\label{eq:stat_phase_equation_example_2}
    \begin{split}
       u(X_0)& = \sum_{p_0=\pm \sqrt{2(E-V(X_0))}} \EXP{\Iunit \pi\SGN (\partial_{PP}\LFT\theta(P_0))/4}
                 \left[ \left|\frac{1}{2}\partial_{PP}\LFT\theta(P_0)\right|^{-1/2} 
                      \EXP{-\Iunit M^{1/2}\theta(X_0) }\right.\\
             & \qquad \times 
               \left. \sum_{j=0}^k \frac{M^{-j/2}}{j!}
                    \left(\Iunit \left(\frac{1}{2}\partial_{PP}\LFT\theta(P_0)\right)^{-1}%
                    \partial_{YY}\right)^j|\partial_Y p| \right|_{Y=0}
               + \BIGO(M^{-j/2})\Bigg]
    \end{split}
  \end{equation}
  to obtain
  \[
     u(X_0^-)= \EXP{ \Iunit M^{1/2}\theta(X_0)} (\psi_+  +\BIGO(M^{-k/2})) 
              +\EXP{-\Iunit M^{1/2}\theta(X_0)} (\psi_- + \BIGO(M^{-k/2}))\COMMA
\]
where 
$$
  \psi_{\pm} := \EXP{\Iunit \pi\SGN (\partial_{PP}\LFT\theta(\pm P_0))/4}
                \left|\frac{1}{2}\partial_{PP}\LFT\theta(\pm P_0)\right|^{-1/2} 
                \sum_{k=0}^3 \frac{M^{-k/2}}{k!}
                \left(\Iunit\left(\frac{\partial_{PP}\LFT\theta(\pm p_0)}{2}\right)^{-1}
                \partial_{YY}\right)^k|\partial_Y p| \Big|_{Y=0}.
$$
\end{enumerate}

\begin{figure}[h!]
  \centering
   \includegraphics[width=.6\textwidth]{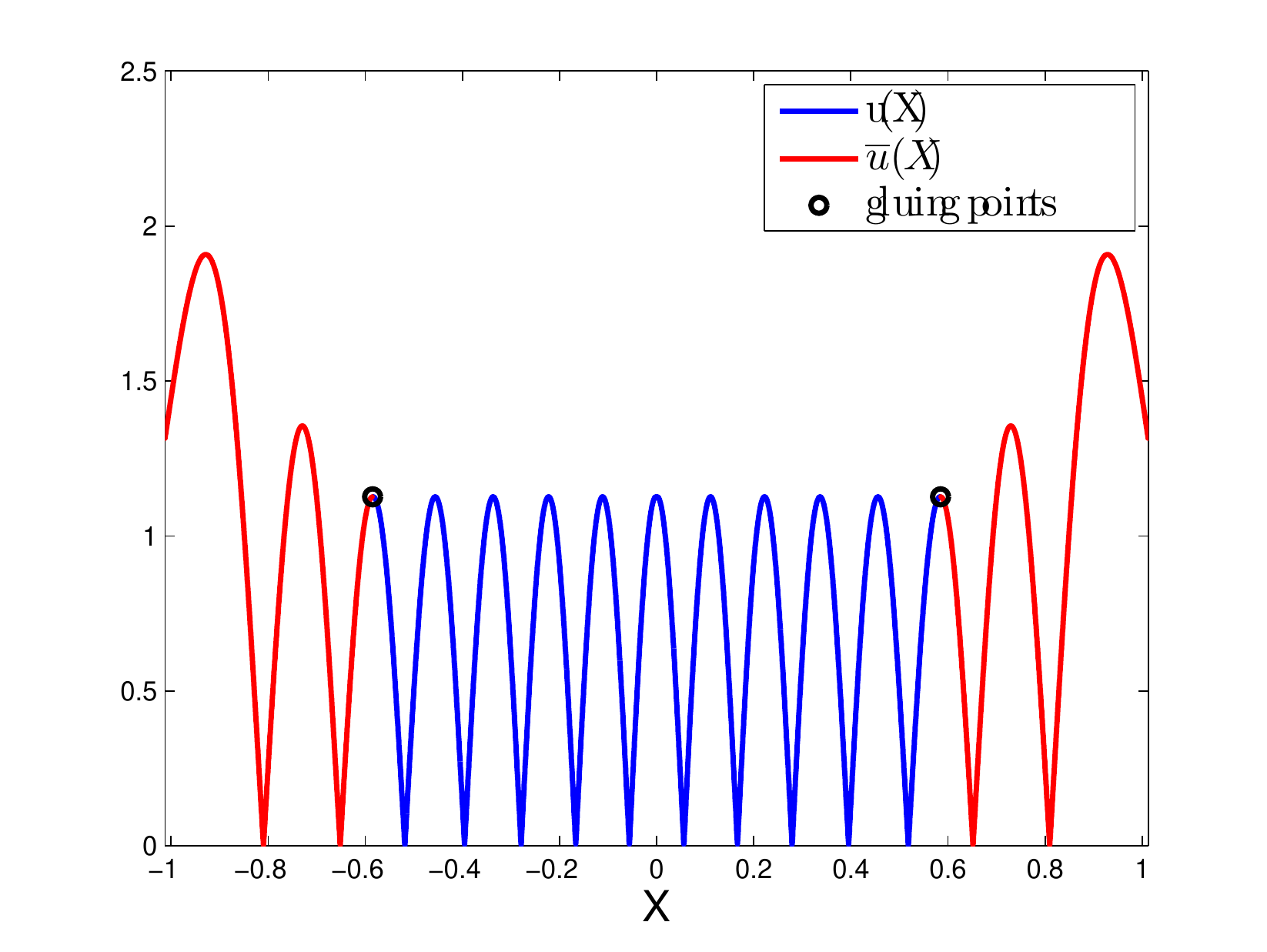}
   \caption{Plot illustrating the gluing procedure of the functions
     $u(X)$ and $\overline u(X)$ at the points $\pm X_0$.
   \label{fig:gluing_procedure}}
\end{figure}

The constant $C$ in \eqref{eq:approximate_solution_interior} 
is chosen so that the wave solution parts are
continuous at the gluing point, $\Phi(\pm X_0^-) = \Phi(\pm X_0^+)$.
It is most easy to determine $C$ when $X_0$ is chosen so that 
$|u(X_0)|$ is at a local maximum; see Figure~\ref{fig:gluing_procedure} 
for an illustration of the gluing procedure.  

At the end a Schr\"odinger eigenfunction solution 
$\WIDEHATPHI$ is obtained by projecting $\Phi$ onto the space
spanned by a set of eigensolutions to the discretized version of
the Schr\"odinger problem, $\{\Upsilon_j\}_{j=1}^{\bar J}$, as is
described in Algorithm~\ref{alg:projection_algorithm}.

Two convergence results are needed to make the method work.  First,
the density generated from the stationary phase based on the approxmiate
solution $\rho(X):=|\Phi|^2(X)/\|\Phi\|_2^2$ must converge weakly to
the Schr\"odinger projection based density $\rho_{\WIDEHATPHI}(X):=
|\WIDEHATPHI|^2(X)/\|\WIDEHATPHI\|_2^2$ as $M\to \infty$; see
Figure~\ref{fig:density_plots_stat_phase_caustics} for an illustration
of how these functions converge. Second, $\rho_{\WIDEHATPHI}$ must
converge to the molecular dynamics density $\rho_{\MD}(X):=C
(E-V(X))^{-1/2}$ as $M$ increases; see Figure~\ref{fig:density_plots_projected_caustics}.
\begin{figure}[h!]
  \centering
   \includegraphics[width=.49\textwidth]{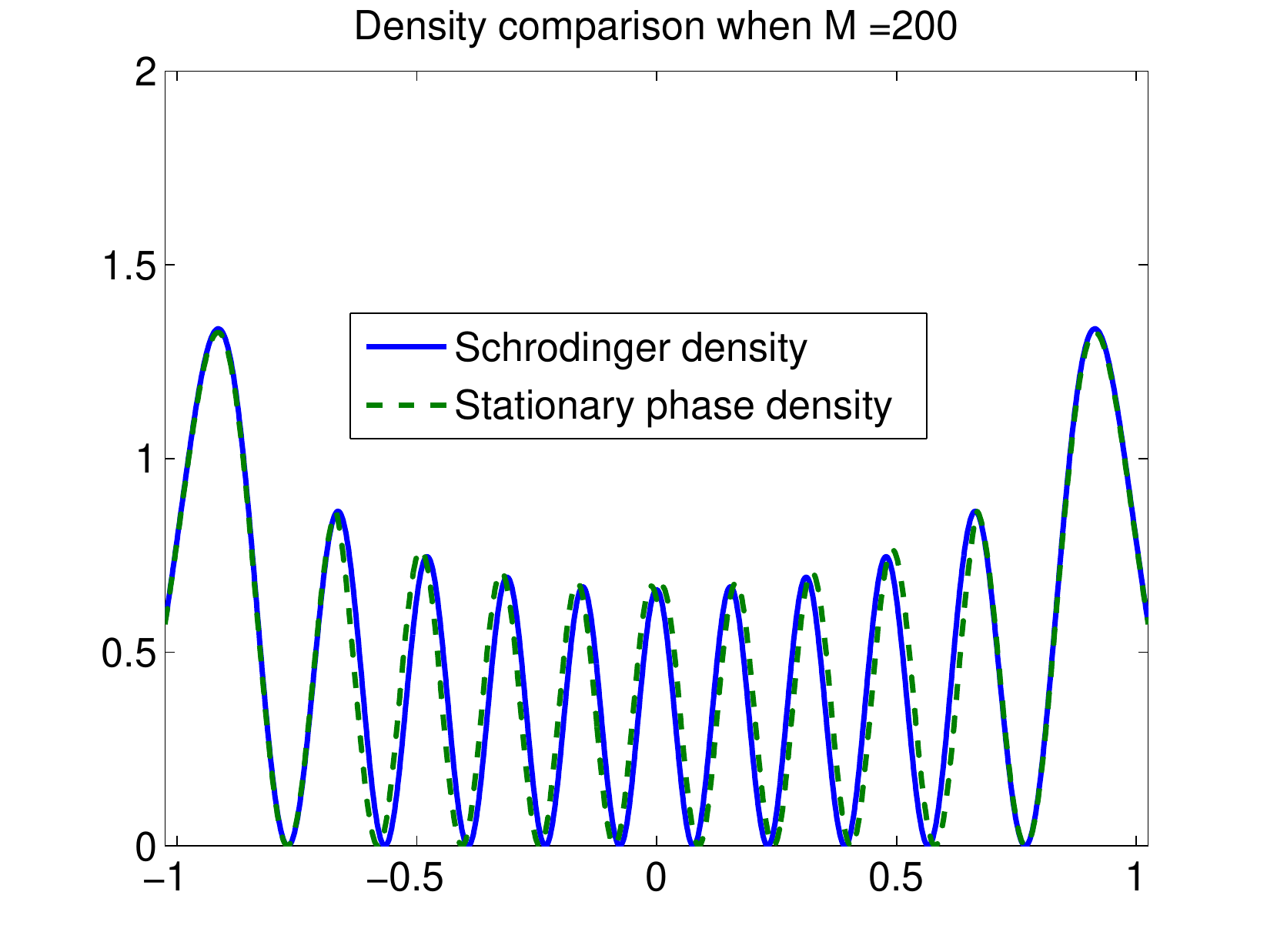}
   \includegraphics[width=.49\textwidth]{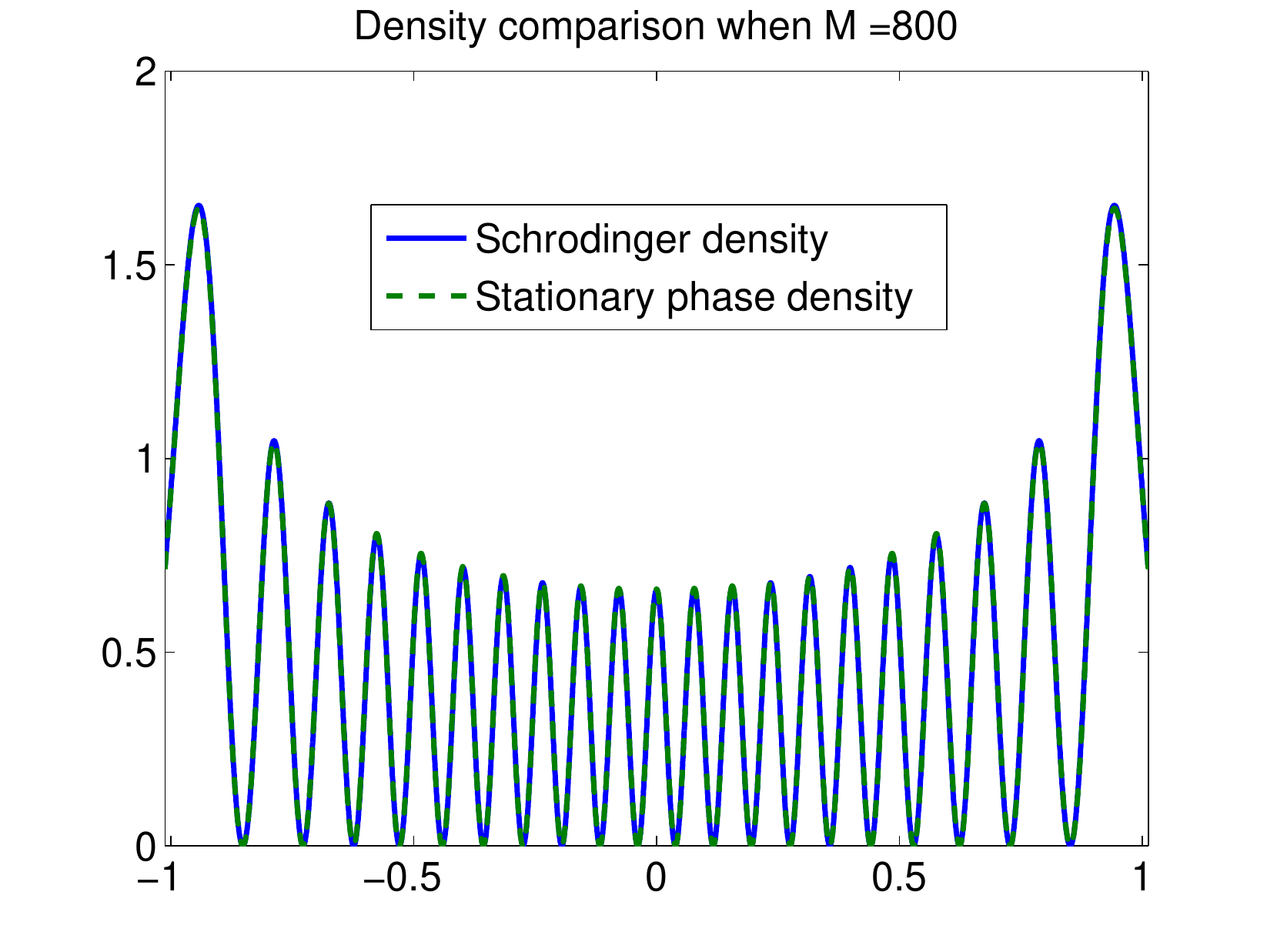}
    \caption{Comparison of the approximate solution based density $\rho$ 
      and the Schr\"odinger projection based solution $\rho_{\WIDEHATPHI}$ for 
      $M=200$ (left plot) and $M=800$ (right plot).}
    \label{fig:density_plots_stat_phase_caustics}
\end{figure}

\begin{figure}[h!]
  \centering
   \includegraphics[width=.49\textwidth]{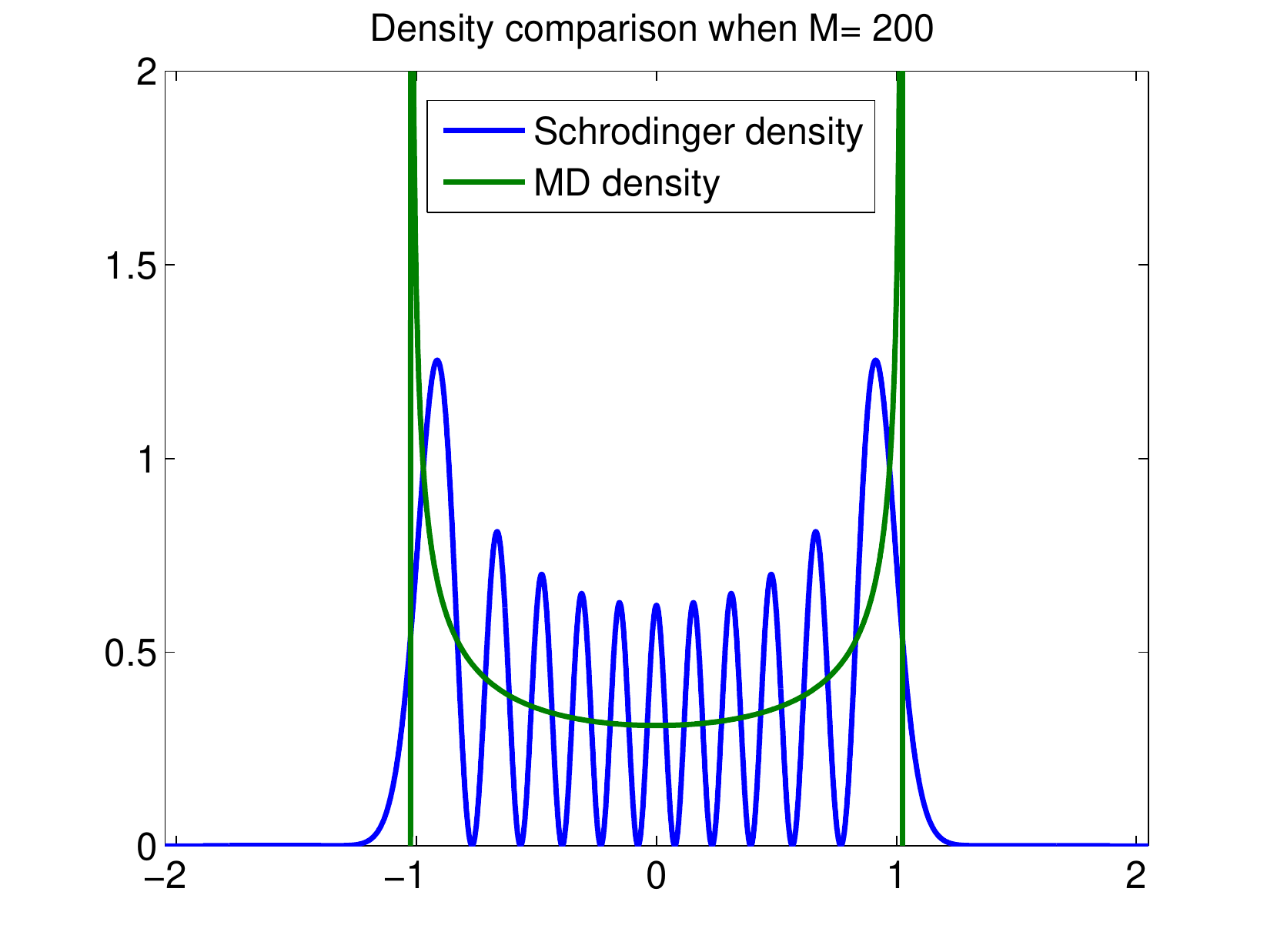}
   \includegraphics[width=.49\textwidth]{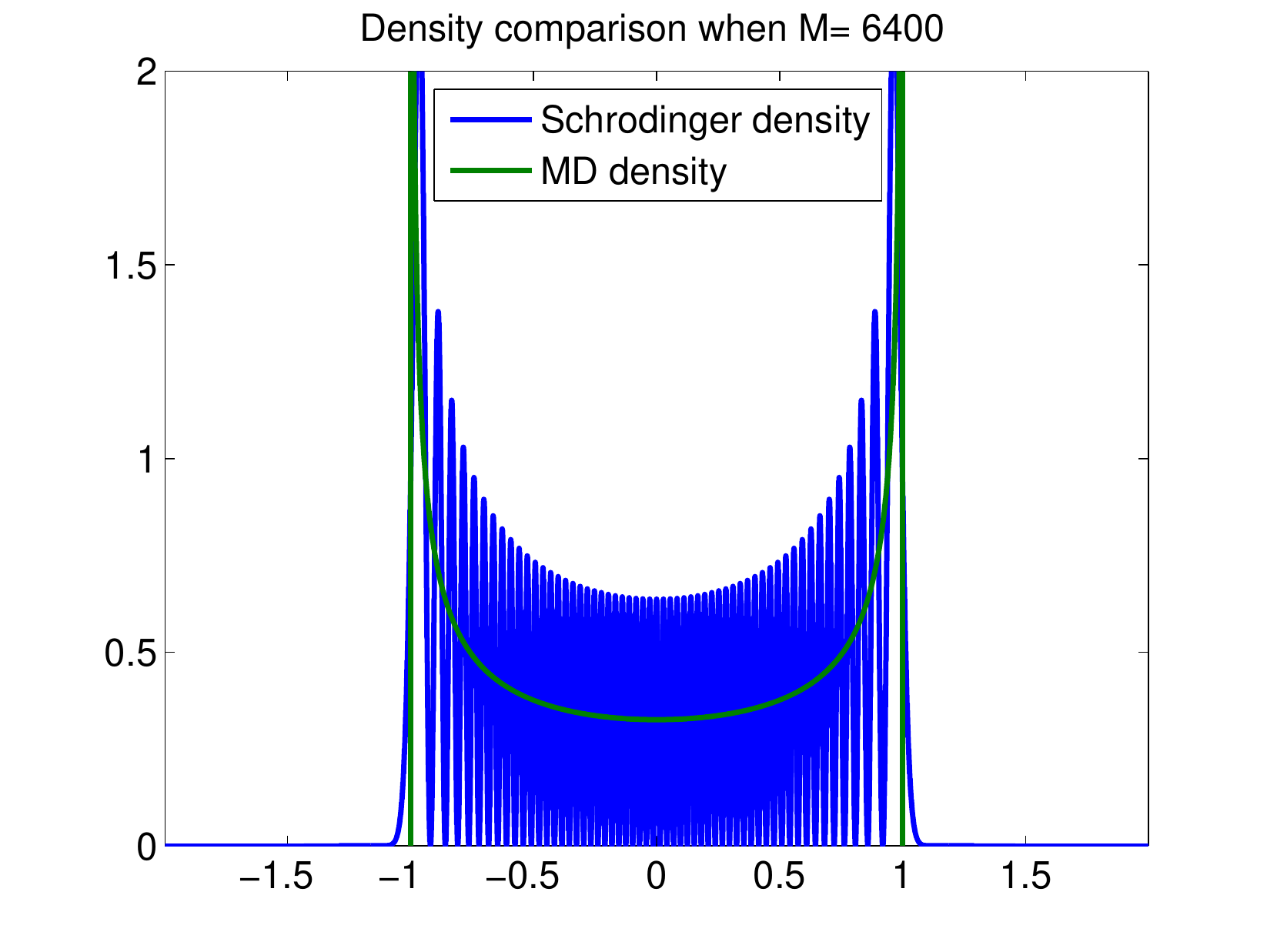}
    \caption{Comparision of the Schr\"odinger projection density $\rho_{\WIDEHATPHI}$ and 
      the molecular dynamics density $\rho_{\MD}$ for $M=200$ (left plot) and $M=6400$ (right plot).}
    \label{fig:density_plots_projected_caustics}
\end{figure}
A numerical test of the convergence rate of 
\begin{equation}\label{eq:conv_rate_densities_bis}
  \left|\frac{ \int_{-2\sqrt{E_0}}^{2\sqrt{E_0}} g_1(X) \rho_{\MD}(X)\, dX}{ \int_{-2\sqrt{E_0}}^{2\sqrt{E_0}} g_2(X) \rho_{\MD}(X)\, dX} -  
  \frac{\int_{-2\sqrt{E_0}}^{2\sqrt{E_0}} g_1(X) \rho_{\WIDEHATPHI}(X)\,dX}{\int_{-2\sqrt{E_0}}^{2\sqrt{E_0}} g_2(X) 
   \rho_{\WIDEHATPHI}(X)\, dX}
  \right|
\end{equation}
as $M$ increases is illustrated in Figure~\ref{fig:convergence_rate_caustics} 
for the observables 
\begin{equation}\label{eq:caustic_observables}
  g_1(X) = \frac{(1.5-X)^6(1.5+X)^6(1+e^{-X^2})}{1.5^{12}}\;
         \text{ and } g_2(X) = \frac{(1.5-X)^6(1.5+X)^6(1-X^2+X^4)}{1.5^{12}}\PERIOD
\end{equation}
Further details of the solution procedure in Exampe~2 are given in Algorithm~\ref{alg:pseudoCode2}.
\begin{figure}[h!]
  \centering
   \includegraphics[width=.55\textwidth]{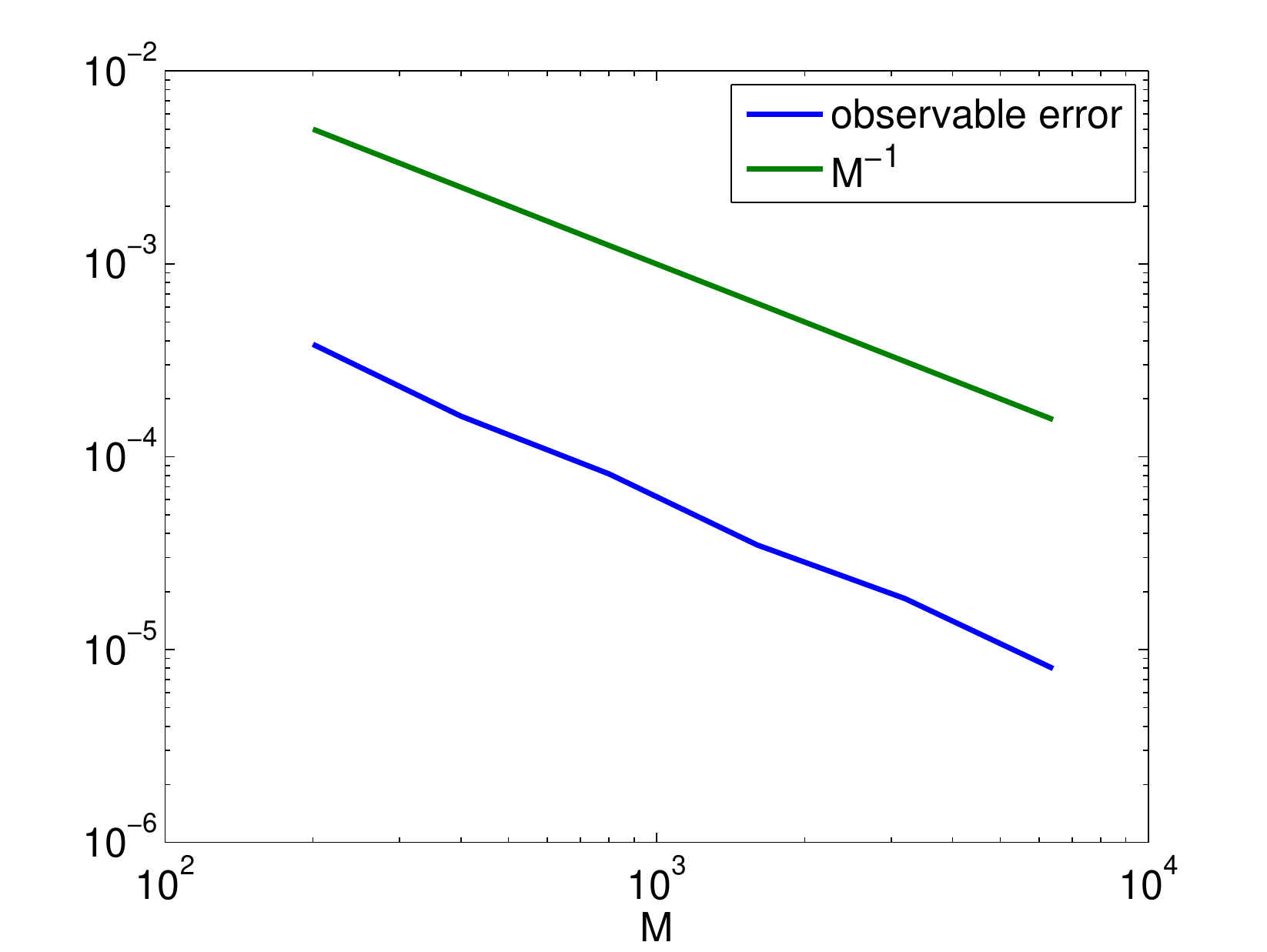}
    \caption{Convergence rate of \eqref{eq:conv_rate_densities_bis} for
      the observables $g_1$ and $g_2$ as defined in \eqref{eq:caustic_observables}.}
    \label{fig:convergence_rate_caustics}
\end{figure}

\begin{algorithm}[!h]\label{alg_2}
\caption{Algorithm for Example 2}
\begin{algorithmic}\label{alg:pseudoCode2}
\STATE {\bf Input: } An energy $E$, an one-dimensional
        potential function $V$, mass $M$, Schr\"odinger equation \eqref{eq:schrod_example_2}. 

\STATE {\bf Output: } The Schr\"odinger projection density $\rho_{\WIDEHATPHI}$.

\medskip

\STATE {\bf 1.} 
   Identify the right caustic point $X_+>0$ satisfying $X_+=V^{-1}(E)$.
   For a fixed $E \in \mathbb{R}$, consider the periodic eigenvalue problem.
   Solve \eqref{eq:schrod_example_2} numerically by constructing the discretised operator form of $-(2M)^{-1}\partial_{XX} + V$ 
   using finite differences and denoted $\HOPER^{(h)}$,
   and solve the eigenvalue problem
   \begin{equation}\label{eq:discrete_eigenvalue_problem_caustics}
      \HOPER^{(h)} P_i = E_i P_i
   \end{equation}
   for the 10 eigenvalues closest to $E$ using the Matlab
   eigenvalue solver \textbf{eigs(H,10,E)}. Let $E_0$ denote the
   eigenvalue closest to $E$ and consider from now on solving 
   \eqref{eq:schrod_example_2} for the energy $E_0$
   and its corresponding eikonal equation $\tfrac{1}{2} P^2+V(X) = E_0$.

\medskip

\STATE {\bf 2.} Determine $\LFT \theta(P)$ by 
\[
   \LFT\theta(P)=\int_0^P\GRADP\LFT\theta(p)\, dp
\]

\medskip

\STATE {\bf 3.} Evaluate the Fourier integral
\begin{equation}\label{eq:u_int}
   u(X):=\int_{-2\sqrt{E}}^{2\sqrt{E}}  \EXP{\Iunit M^{1/2}(-X P+\LFT\theta(P))}\,dP\COMMA \;\; |X|>X_0\COMMA
\end{equation}
where $X_0$ is chosen as the smallest value $X>X_+/2$ such that $|u(X)|$ is
at a local maximum, and for $|X|\leq X_0$ compute $\overline u$ by \eqref{eq:stat_phase_u}
using the stationary phase method.

\medskip

\STATE {\bf 4.}
Construct the approximate solution
$$
   \Phi(X):= \begin{cases} 
               C \overline u(X) (E_0-V(X))^{-1/4} &  |X|\leq X_0\COMMA\\
               u(X)/\sqrt{|\GRADX V(X)|} & |X| \geq X_0\COMMA
             \end{cases}
$$ 
with 
$$
  C = \frac{ u(X_0)(E_0-V(X_0))^{1/4}}{\sqrt{|\GRADX V(X_0)|}\overline u(X_0)}\PERIOD
$$
\medskip

\STATE {\bf 5.} 

Project $\Phi$ onto the eigenspace 
$\{\Upsilon_i\}_{i=1}^{\bar J}, \  \bar J\leq 10$ by 
Algorithm~\ref{alg:projection_algorithm} to obtain a projection
solution $\WIDEHATPHI$ and compute its corresponding 
approximate density
$$
  \rho_{\WIDEHATPHI} = \frac{|\WIDEHATPHI|^2(X)}{\|\WIDEHATPHI\|_2^2}\PERIOD
$$

\end{algorithmic}
\end{algorithm}

\section{The stationary phase expansion}\label{stat_phase_sec}
Consider the phase function
$\EPROD{\CHECKX}{\CHECKP} - \LFT\theta(\HATX,\CHECKP)$
and let  $\CHECKP_0(\HATX)$ be any solution to the stationary phase equation
$\CHECKX=\GRAD_{\CHECKP}\LFT \theta(\HATX,\CHECKP_0)$. We rewrite the phase function
\[
   \EPROD{\CHECKX}{\CHECKP} - \LFT\theta(\CHECKX,\CHECKP)= 
   \underbrace{\EPROD{\CHECKX}{\CHECKP_0} - \LFT\theta(\CHECKX,\CHECKP_0)}_{=\theta(\HATX,\CHECKX)} +
                         \EPROD{(\CHECKP-\CHECKP_0)}{\int_0^1(1-t)
                         \partial_{PP}\LFT\theta\left(\CHECKP_0+ t[\CHECKP-\CHECKP_0]\right)\, dt}\,
                         [\CHECKP-\CHECKP_0]\PERIOD
\]
The relation
\[
 \frac{1}{2} \EPROD{Y}{\partial_{PP}\bar\theta(\CHECKP_0) Y}= 
    \EPROD{(\CHECKP-\CHECKP_0)}{\int_0^1 (1-t)
      \partial_{PP}\bar\theta\left(\CHECKP_0 + t[\CHECKP-\CHECKP_0]\right)\, dt}\, [\CHECKP-\CHECKP_0]
 \]
 defines the function $Y(\CHECKP)$, and its inverse $\CHECKP(Y)$, so that the phase is a quadratic function in $Y$.
 The stationary phase expansion of an integral takes the form, see \REF{duistermaat},
 \begin{equation}\label{caustic_expansion}
   \begin{split}
     &\int_{\rset^d} w(\CHECKP)\, \EXP{-\Iunit M^{1/2}(\EPROD{\CHECKX}{\CHECKP} 
                     -\LFT\theta(\HATX,\CHECKP))}\, d\CHECKP\\
     &\simeq \sum_{\GRADP\LFT\theta(\CHECKP_0) =\CHECKX} (2\pi M^{-1/2})^{d/2} 
            \left| \DET \frac{\partial(\CHECKP)}{\partial(\CHECKX)}\right|^{1/2} \,
            \EXP{\Iunit\frac{\pi}{4}\SGN(\partial_{PP}\LFT\theta(\CHECKP_0))}\,
             \EXP{-\Iunit M^{1/2}\theta(\HATX,\CHECKX)}\\
     &\qquad \times \sum_{k=0}^\infty \frac{M^{-k/2}}{k!} 
             \left(\sum_{l,j} \Iunit (\partial_{P^l P^j}\LFT\theta)^{-1}(\CHECKP_0) 
                              \partial_{Y^l Y^j}\right)^k
                  \left(w(\CHECKP(Y))\,\left|\DET \frac{\partial(\CHECKP)}{\partial(Y)}\right|\right)\PERIOD
   \end{split}
 \end{equation}
 
\section*{Acknowledgment}
The research of P.P.  and A.S. was partially supported by the 
National Science Foundation under the grant
NSF-DMS-0813893 and  Swedish Research Council grant 621-2010-5647, 
respectively. P.P. also thanks KTH and Nordita for their hospitality
during his visit when the presented research was initiated.
%


\providecommand{\bysame}{\leavevmode\hbox to3em{\hrulefill}\thinspace}
\providecommand{\MR}{\relax\ifhmode\unskip\space\fi MR }
\providecommand{\MRhref}[2]{%
  \href{http://www.ams.org/mathscinet-getitem?mr=#1}{#2}
}
\providecommand{\href}[2]{#2}

\end{document}